\definecolor{Red}{rgb}{1,0,0}
\definecolor{Blue}{rgb}{0,0,1}
\definecolor{Green}{rgb}{0,1,0}
\definecolor{magenta}{rgb}{1,0,.6}
\definecolor{lightblue}{rgb}{0,.5,1}
\definecolor{lightpurple}{rgb}{.6,.4,1}
\definecolor{gold}{rgb}{.6,.5,0}
\definecolor{orange}{rgb}{1,0.4,0}
\definecolor{hotpink}{rgb}{1,0,0.5}
\definecolor{newcolor2}{rgb}{.5,.3,.5}
\definecolor{newcolor}{rgb}{0,.3,1}
\definecolor{newcolor3}{rgb}{1,0,.35}
\definecolor{darkgreen1}{rgb}{0, .35, 0}
\definecolor{darkgreen}{rgb}{0, .6, 0}
\definecolor{darkred}{rgb}{.75,0,0}
\definecolor{aurometalsaurus}{rgb}{0.43, 0.5, 0.5}
\newcounter{desccount}
\newcommand{\dref}[1]{\hyperref[#1]{#1}}
\newcommand{\set}[1]{\{#1\}}
\newcommand{\bea}{\begin{align*}}
\newcommand{\eea}{\end{align*}}
\renewcommand{\bar}{\overline}
\renewcommand{\paragraph}[1]{\medskip \noindent \textit{#1}}
\def\e{\varepsilon}
\def\th{\theta}
\def\t{\theta}
\def\l{\lambda}
\def\T{\Theta}
\def\E{\mathbb{E}}
\def\bq{\begin{equation}}
\def\eq{\end{equation}}
\def\ba{\begin{eqnarray}}
\def\ea{\end{eqnarray}}
\definecolor{clemson-orange}{RGB}{234,106,32}
\definecolor{chicago-maroon}{RGB}{128,0,0}
\definecolor{northwestern-purple}{RGB}{82,0,99}
\definecolor{cornell-red}{RGB}{179,27,27}
\definecolor{sauder-green}{RGB}{171,180,0}
\definecolor{gray}{RGB}{192,192,192}
\definecolor{lawngreen}{RGB}{0,250,154}
\definecolor{pink}{RGB}{255,0,128}
\newtheorem{theorem}{Theorem}
\newtheorem{claim}{Claim}
\newtheorem{corollary}{Corollary}
\newtheorem{example}{Example}
\newtheorem{lemma}{Lemma}
\newtheorem{remark}{Remark}
\renewcommand\thesection{\Roman{section}}
\begin{document}

\title{\bf Bailout Stigma}
\author{\textsc{Yeon-Koo Che} \and \textsc{Chongwoo Choe} \and \textsc{Keeyoung Rhee}\thanks{Che: Department
of Economics, Columbia University (email:
\href{mailto:yeonkooche@gmail.com}{yeonkooche@gmail.com}); Choe: Centre for Global Business and 
Department of Economics, Monash University (email:
\href{mailto:chongwoo.choe@monash.edu}{chongwoo.choe@monash.edu});
Rhee: Department of Economics, Sungkyunkwan University (SKKU) (email: \href{mailto:ky.rhee829@gmail.com}{ky.rhee829@gmail.com}).
 The authors are grateful to anonymous referees, an Associate Editor, Olivier Armantier, Philip Bond (the Editor), Xuandong Chen, Keshav Dogra, Oliver Giesecke, Spencer Kwon, Haaris Mateen, Aniko \"Ory, Joel Shapiro, Jean Tirole, Haoxiang Zhu as well as participants at numerous seminars and conferences, for their comments.  
}  
}
\date{\today}

\maketitle
\begin{abstract}  
We develop a model of bailout stigma where accepting a bailout signals a firm’s balance-sheet weakness and worsens its funding prospect. To avoid stigma, high-quality firms either withdraw from subsequent financing after receiving bailouts or refuse bailouts altogether to send a favorable signal. The former leads to a short-lived stimulation with a subsequent market freeze even worse than if there were no bailouts. The latter revives the funding market, albeit with delay, to the level achievable without any stigma, and implements a constrained optimal outcome. A menu of multiple bailout programs also compounds bailout stigma and worsens market freeze.

	\medskip
	
	\noindent {\bf Keywords:} Adverse selection, bailout stigma, shorted-lived and delayed stimulation
	
	\noindent {\bf JEL Codes:} D82, G01, G18

\end{abstract}

\noindent History is fraught with financial crises and large-scale government interventions, the latter often involving a highly visible and significant wealth transfer from taxpayers to banks and their creditors. As a recent example, during the 2007-2009 Great Recession, the US government paid \$125 billion for assets worth \$86-109 billion to the nine largest banks under the Troubled Asset Relief Program (TARP) (\citet{veronesi2010paulson}).
	A rationale for such public interventions -- to be called bailouts throughout the paper -- is that the government can jump-start a market that would otherwise freeze due to adverse selection. By cleaning up bad assets, or ``dregs skimming,'' through public bailouts, the government can improve market confidence, thereby galvanizing transactions in healthier assets (\citet{philippon2012optimal}, \citet{tirole2012overcoming}). However, the flip side of such dregs-skimming is that bailouts can attach stigma to their recipients, increasing future borrowing costs. The fear of this stigma may in turn discourage financially distressed firms from accepting bailout offers in the first place, undermining the effectiveness of such interventions.\footnote{Such a concern is echoed in a speech by the former Federal Reserve chairman \citetalias{bernanke2009} in 2009: ``The banks' concern was that their recourse to the discount window, if it became known, might lead market participants to infer weakness---the so-called stigma problem.'' Several anecdotes suggest that this concern is well-founded. Ford refused rescue loans under the Auto Industry Program in the TARP, with a view to ``legitimately portraying itself as the healthiest of Detroit's automakers'' (``A risk for Ford in shunning bailout, and possibly a reward,'' {\it The New York Times}, December 19, 2008). It is also well known that Jamie Dimon, CEO of JP Morgan Chase, wanted to exit the TARP to avoid the stigma (``Dimon says he's eager to repay `Scarlet Letter' TARP,'' {\it Bloomberg}, April 16, 2009). The fear of stigma is not the only reason for an early exit. \citet{wilson2012escaping} find that early exit by banks is also related to CEO pay, bank size, capital, and other financial conditions. These factors, especially the restrictions on executive compensation, are also found to cause the reluctance to accept   bailouts (\cite{bayazitova}, \cite{cadman}). The literature documenting the empirical evidence on stigma is reviewed in Section \ref{sec:lit}.}$^{,}$\footnote{Policy makers during the Great Recession were aware of such a fear and took efforts to alleviate the stigma.  At the now-famous meeting held on October 13, 2008, Henry Paulson, then Secretary of the Treasury,  ``compelled'' the CEOs of the nine largest banks to be the initial participants in the TARP, precisely to eliminate the stigma (``Eight days: the battle to save the American financial system,'' {\it The New Yorker}, September 21, 2009). The rates at the Fed's discount window, usually set above the federal funds rate, were cut half a percentage point to counteract the stigma attached to using the window (\citet[p.~129]{geithner2015}).}

 The purpose of this paper is to study a dynamic model of bailouts to understand how reputational concerns affect the effectiveness of  public bailouts and how an optimal  bailout policy should be designed in light of these concerns. {To address these questions, we present a two-period model of a bailout program in which the government offers to purchase assets from firms. The focus on asset-purchase as the bailout tool allows us to capture firms' reputational concerns in a simple parsimonious way.} There is a continuum of firms, each with one unit of a legacy asset in each period. {The quality of asset in both periods is identical for each firm and is its private information.} 
In each period, firms have access to profitable investment opportunities. However, liquidity constraints and the non-pledgeability of projects imply that firms need to sell assets 
 to fund those projects. As in \citet{tirole2012overcoming}, we focus on the case where adverse selection leads to a market freeze and calls for a public bailout.  To understand the reputational consequence of accepting a bailout, we assume that the government runs a bailout program in the first period only. 
 The sale of assets in the market is not publicly observed, but firms' acceptance of the bailout offer is observed. The market updates its belief on the quality of assets based on the observation of firms' decisions 
 in the first period, and thus makes its second-period offer accordingly.


Bailout stigma is captured in our model  by the unfavorable terms  that bailout  recipients  suffer from the sale of their assets in the second period. As shown by \citet{philippon2012optimal} and \citet{tirole2012overcoming}, a key function of a public bailout  is dregs-skimming: by taking out the left tail of the worst quality assets, a bailout improves the perceived quality of remaining assets, thereby rejuvenating asset trade in the market. But this means that the market will believe those that accept the bailout have worse assets than those that refuse the bailout. Such a belief is reflected in the differential treatment of the two groups of firms in the second period: the market's offer to bailout recipients would be in general in worse terms than that made to the bailout holdouts. 

	The precise mechanism by which the bailout stigma affects the efficacy of the bailout crucially depends on firms' strategic responses to the bailout offer, manifested in two possible equilibrium outcomes---{\bf short-lived stimulation} and {\bf delayed stimulation}.  
	
A short-lived stimulation equilibrium arises when high-quality firms strategically avoid the bailout stigma by accepting a bailout in the first period but withdrawing from the market in the second period.  Suppose that high-type firms indeed  behave in this way. Then, the market in $t=2$ would believe that the bailout recipients that \emph{do} participate in the second-period have bad assets, and assign stigma to them. To avoid that stigma,  firms would rather choose to sell their assets to the market at discount in the first period. This in turn deteriorates their sale terms, ultimately inflicting a commensurate haircut on the market-sellers; in effect, the bailout stigma ``spreads'' from bailout recipients to market sellers in this equilibrium.  This contagion of stigma in turn leads firms with high-quality assets 
to accept a bailout offer  but withdraw from the second-period market 
altogether to avoid that stigma, thus validating the initial hypothesis.\footnote{The withdrawal from the second-period funding market should not be  literally  interpreted as a firm exiting from the funding market altogether.  Our model is (inevitably) stylized, so  the results should be interpreted with some care.  In practice, a firm's withdrawal from the funding market will more realistically correspond to its cutting back on additional projects \emph{at the margin} that it would otherwise pursue.}  

 {The presence of high-type firms that 
 withdraw from the subsequent market/financing 
	undermines the overall efficacy of the bailout.  
	Their withdrawal exacerbates the stigma for those participating in the subsequent market, which has the chilling effect on the asset markets in both periods.  Unlike \cite{tirole2012overcoming} where the government intervention boosts the confidence in private markets, intervention now undermines market confidence. The consequence is devastating: \emph{the market freeze is even worse than if there were no bailout!} However, this does not mean that a bailout has no effect. The increased investment by bailout recipients in the first period may outweigh the dampening effect on the markets.}  Nevertheless, stimulation is short-lived in this equilibrium. We show that the policy maker can avoid this equilibrium by offering sufficiently generous bailout terms---that is, high purchase prices for assets, in which case the stigma will manifest itself in a different form: ``delayed stimulation.''
	
A delayed stimulation equilibrium arises when high-quality firms refuse to sell either to the government or to the market in the first period.  They do so to build a good reputation in order to sell their assets in the second period at favorable terms. This equilibrium is possible when the bailout offer is generous enough to attract a large fraction of firms with low-quality assets. This allows firms to send a credible signal about their asset quality when they \emph{refuse} the bailout offer. Buyers will then respond with  a very attractive price offer in the second period---one that makes it worthwhile for high-quality firms to forego asset sale in the first period.  In sum, the equilibrium endogenously creates an opportunity for high-quality firms to signal their financial strength by rejecting the government's generous offer.

	Such a favorable signaling opportunity offsets the adverse effect of   bailout stigma, even though market rejuvenation is delayed to the second period. The presence of firms rejecting a bailout means that the volume of asset trade in the first period is lower than that of the one-period benchmark with the same bailout offer. In an extreme case, it is even possible that the bailout has no stimulation effect in the first period relative to the {\it laissez-faire} economy. Such an initial lack of response may be seen as a policy failure.  However, the policy ``quietly'' strengthens market confidence in the asset quality of refusing firms, which bears dividends in the second period.  In fact, the overall trade volume is higher than that in a short-lived stimulation equilibrium; remarkably, it is the same \emph{as if there were no bailout stigma}---that is, if the identities of bailout recipients were concealed successfully, which is often difficult to achieve in practice.  {Strikingly, a bailout offer is more effective when many firms reject it (to build a favorable reputation) rather than accept it. Rejection of bailouts   could therefore be a blessing in disguise.}

	
 {The desirability of the delayed stimulation equilibrium is further reinforced when the costs of bailouts are taken into account. When such social costs are included in the model, we show that a delayed stimulation equilibrium (as well as a secret bailout) induced by a suitable bailout offer is constrained optimal among all bailout mechanisms offered in $t=1$, including those involving a menu of multiple bailout terms.}  By contrast, a short-lived stimulation equilibrium is strictly suboptimal, regardless of the bailout terms.  This is because the severe stigma arising from that equilibrium requires a high deficit, and a correspondingly high social cost, to generate the same degree of stimulation in trade. To eliminate the possibility of this less desirable equilibrium, the policy maker may wish to make the terms of bailout even more generous than would be optimal. 
	This approach, although departing from the classical Bagehot's rule,\footnote{Bagehot's rule, orginating from the 1873 book, \emph{Lombard Street}, by William Bagehot, prescribes that central banks should charge a higher rate than the markets to discourage banks from borrowing once the crisis subsides. Bailout stigma was not a serious issue in 1873, however, since the regulatory system in 1873 Britain ensured concealment of the identities of emergency borrowers, as \citet{gorton2015stress} points out.} is consistent with the approach taken by policy makers during the Great Recession.   Finally, the optimality of a single bailout program calls into question the wisdom of offering a menu of multiple bailouts, which can never be socially desirable.  Multiple programs give firms increased opportunities and the incentives for signaling, which compounds the overall level of stigma and market freeze.

The remainder of the paper is organized as follows.  Section \ref{sec:model} presents our model while Section \ref{sec:benchmark} analyses several benchmark cases.  In Section \ref{sec:gov}, we study various equilibria under government intervention. Section \ref{sec:wel} provides the welfare analysis of bailout policy.  Section \ref{sec:lit} discusses related literature.   Section \ref{sec:conclusion} concludes.  Proofs not contained in the main text are deferred to Appendix and Online Appendix.

\section{Model} \label{sec:model}

We adopt a model that extends \citet{tirole2012overcoming} to a setup which admits a bailout stigma. There is a continuum of firms each endowed with two units of legacy assets with the same value; one unit of the asset becomes available in each of two periods ($t = 1, 2$) for possible sale.\footnote{One can think of the assets as account receivable or the contract for (securitized) assets to be delivered over two periods.}   The asset value $\t$ of each firm  is privately known to that firm and distributed on $[0,1]$ according to cdf $F$ with density $f$.   For convenience, we hereafter call a firm with legacy asset $\t$ a type-$\t$ firm. Throughout, we assume that $f$ is log-concave, that is, $d^2 \log f(\t)/d \t^2<0$. Log-concavity of $f$ implies intuitive properties we will use on  truncated conditional expectation: for any $0<a<b<1$, $0 < \frac{\partial}{\partial a} \mathbb{E}[\th | a \leq \th \leq b],\frac{ \partial }{\partial b} \mathbb{E}[\th | a \leq \th \leq b] < 1$ (see \cite{bagnoli2005log}).  We additionally assume that for each $b\in (0,1]$,   $2\E[\t | a < \t < b] - \E[ \t | \t \le a]$ is increasing in $a$ for any  $a\in (0,b)$. These properties, which hold for many well-known distributions, facilitate the characterization of our equilibria.

In each period, an investment project  becomes available to each firm.   The project is socially valuable with  {\it net} return $S>0$ but requires  funding of $I>0$. The firm can finance the project by selling its legacy asset each period.  As we will see, the outcome from this {\it laissez-faire} regime will typically be inefficient due to the adverse selection associated with uncertain asset value.  This inefficiency rationalizes a government bailout in the form of an offer to purchase legacy assets at some price $p_g$. The government purchase price $p_g$ is initially exogenous (at level above $I$); we later discuss how it may be chosen optimally in Section \ref{sec:wel} in light of the public cost of a bailout.  The timeline of our full game is depicted in Figure \ref{fig:timeline2}.

\begin{figure} [htp]
	\centering
	\begin{tikzpicture}[
	scale = 0.8,
	1p line/.style={thick, blue},
	2p line/.style={thick, darkred},
	reference/.style = {dashed, thick},
	axis/.style={very thick},
	move/.style={thick, ->, shorten <=2pt, shorten >=2pt},
	comment/.style={font=\scriptsize\sffamily}
	]
	
	\draw[move] (0, 0) -- (20, 0) node[comment, right, text width = 1cm] at (20, 0) {$t = 1$};
	
	\draw[move] (0, -4) -- (20, -4) node[comment, right, text width = 1cm] at (20, -4) {$t = 2$};

	\filldraw (1, 0) circle (3pt) node[comment, above, text width = 2.5cm, yshift = 3pt] at (1, 0) {Firms privately learn the value of their legacy assets.};
	
	\filldraw (6, 0) circle (3pt) node[comment, below, text width = 2.5cm, yshift = -3pt] at (6, 0) {Government offers to buy one unit of the asset at $p_g$.};
	
	\filldraw (11, 0) circle (3pt) node[comment, above, text width = 2.75cm, yshift = 3pt] at (11, 0) {Buyers in the market make offers.};
	
	\filldraw (16, 0) circle (3pt) node[comment, below, text width = 2.75cm, yshift = -3pt] at (16, 0) {Firms accept either a government offer, a market offer, or none. Those who sell fund the project.};
	

	
	
	\filldraw (3, -4) circle (3pt) node[comment, above, text width = 2.75cm, yshift = 3pt] at (3, -4) {Buyers  make offers for $t=2$ legacy assets.};
	
	\filldraw (10, -4) circle (3pt) node[comment, below, text width = 2.75cm, yshift = -3pt] at (10, -4) {Firms either sell to the buyers or hold out. Those who sell fund their projects.};
	
	\filldraw (17, -4) circle (3pt) node[comment, above, text width = 2.75cm, yshift = 3pt] at (17, -4) {Project returns for both periods are realized.};
	
	\end{tikzpicture}
	\caption{Timeline for the two-period model}\label{fig:timeline2}
\end{figure}

 To focus our attention on the main issue---namely, bailout stigma---we make several simplifying assumptions.

 {First, just like \cite{tirole2012overcoming}, we assume that the limited pledgeability of the project inhibits direct financing.  This means that  the sale of legacy assets  is the only means of funding the project for firms.  In the same vein, we consider the government's purchase of legacy assets as the only means of government bailout.\footnote{In fact, the asset purchase can be interpreted as a loan collateralized by the associated asset.  Due to the non-pledgeability, the asset must be seized and liquidated, which  makes the loan qualitatively equivalent to the asset purchase.} This is primarily a simplifying assumption. As shown by \cite{philippon2012optimal}, the main thrust of the analysis extends to the case in which the project can be pledged along with legacy assets as collateral to obtain financing.\footnote{Their insight appears to apply to our context, which suggests that debt contracts would be optimal in our context as well.  Since the stigma issue is separable from the issue of contract form, we abstract from it in the current paper. } From this broader perspective, adverse selection with respect to legacy assets must be interpreted as pertaining to their values as collateral required for financing; and our results can be translated into this broader context naturally. } 

Second, {to simplify the analysis, we assume that the return from $t = 1$ project is not pledgeable, and does not accrue until each firm needs to fund its $t = 2$ project. This implies that the firm is not able to   use the return from $t=1$ project to finance the $t = 2$ project.} 


Third, as is standard in the literature, we assume that asset buyers in the market are competitive and make purchase offers in Bertrand fashion.  Specifically, we require that an equilibrium admit no  deviation offer that a positive measure of firms have strict incentives to accept and at least breaks even for the deviating buyer.\footnote{This requirement is only slightly stronger than the standard equilibrium assumption, and can be seen as a refinement that bolsters the credibility of a selected equilibrium.} In particular, this means that in equilibrium all buyers must break even, since otherwise, there will be a weakly profitable deviating offer that will be attractive for a positive measure of firms.   Buyers live for one period and make offers that would break even in expectation.   Importantly, buyers in $t = 2$ can make rational inference about firms' types from their observable behavior in $t=1$, in particular with their acceptance/rejection of a bailout offer.


Fourth, we assume that the sale of assets to the market in $t = 1$ is private and therefore not revealed to buyers in $t=2$. This implies that buyers in the $t=2$ market cannot distinguish between those that sold in the $t=1$ market and those that did not.    Again the primary reason for this assumption is to simplify the analysis by shutting off  channels of dynamic information revelation. However, this assumption is well justified given that many important financial and real assets are sold privately over the counter.   The main thrust of our results extends to the case of observable sale, as shown by the working paper version of our paper (see \cite{che2018bailout}).  Moreover, this assumption makes a comparison with \cite{tirole2012overcoming} transparent, which helps to isolate the effect of stigma.

Fifth, a government bailout is available only in $t=1$.  This is consistent with the observed practice: governments refrain from engaging in long-term bailouts and from complete ``nationalization'' of distressed firms (which would be equivalent to purchasing two units of the asset in our model). Our goal is to study the reputational consequence of accepting or rejecting a bailout, which can be studied most effectively when no bailout is available in the second period.\footnote{Like \cite{tirole2012overcoming}, our focus is on market freezes and the government's role to alleviate them. This focus means that we abstract from other relevant issues such as the moral hazard associated with firms' excessive risk taking or shirking in anticipation of bailouts.  These moral hazard problems suggests the desirability of the government committing not to bail out firms that are {\it ex post} in need of rescue. This problem  and the associated time inconsistency problem are discussed in \cite{green}, \cite{chari2016}, and \cite{keister}. Since firms make no effort or risk taking decisions in our model, these issues do not arise in our model, thus allowing us to focus on the market freeze problem.}

 Finally, we focus on ``transparent'' bailouts; namely, the $t=2$ market observes the identities of firms that have accepted bailouts.   Not only do transparent bailouts  highlight the stigma effect most clearly, but they are also important from a practical perspective.  While secret bailouts may address the stigma problem, secrecy is often difficult to achieve in practice.  Nevertheless, one may consider suppressing information about bailouts totally or partially in the spirit of information design (\citet{bergemann}). Our analysis in Section \ref{sec:wel} encompasses this general information design perspective, and shows that a transparent bailout is without loss given the selection of equilibrium.

\section{Preliminary Analysis}	\label{sec:benchmark}

Before proceeding, we study several benchmarks. They will facilitate comparison with, and provide a context for, our main results, which will follow in  Section \ref{sec:gov}.

\subsection{ {\it  Laissez-faire} without Government Bailout}

We first consider the benchmark without a bailout. The timeline  is the same as Figure \ref{fig:timeline2}, except that the government's bailout is absent.  Since the  sale of the assets is private and not publicly revealed, there is no linkage between the market outcomes across two periods. Thus, the game reduces to a  one-period game (repeated twice) whose equilibrium coincides with that of Tirole's game without bailout.

Fix any period. The equilibrium outcome is understood best as a form of Akerlof's lemons problem, which is depicted in Figure \ref{fig:akerlof}. 
\begin{figure} [bht] \caption{Determination of the cutoff type $\t_0$ in the {\it laissez-faire} economy} \label{fig:akerlof}
	\centering
	\begin{tikzpicture}[
	scale = 7.5,
	comment/.style={font=\scriptsize\sffamily}
	]
	\draw[->, thin] (0, 0) -- (1, 0) node[comment, left] at (0, 0) {$0$} node[comment, right] at (1, 0) {marginal type $\hat \t$}; \draw[->, thin] (0, -.3) -- (0, .7) node[comment, left] at (0, .7) {Price};
	
	\draw[red!75!black, thick] (0, 0)..controls(.1, .08)and(.25,.17)..(.3,.20)..controls(.475,.285)and(.525,.325)..(0.65,0.4)..controls(.7,.43)and(.85,.48)..(.9, 0.5)..controls(.93,.505)and(.97,.525)..(1, .53) node[comment, right, text width = 2.5cm] at (1, .525) {Average Benefit Curve:  $\E[\t | \t \le \hat \t]$};
	
	
	\draw[green!45!black, thick] (0, -.25) -- (.95, 0.7) node[comment, right, text width = 2cm] at (.95, .7) {Supply Curve:  $\hat\t - S$}; 
	
	\draw[thick, dotted] (.65, .4) -- (.65, 0)  node[comment, below] at (.65, 0) {$\theta_0$};
	
	\draw[thick, dotted] (.65, .4) -- (0, .4)  node[comment, left] at (0, .4) {$p_0$};
	
	
	
	\node[comment, left] at (0, -.25) {$-S$};
	
	\end{tikzpicture}
\end{figure}
The figure plots two curves both as functions of the marginal type  of firm $\hat \t$  selling  to the market.  The marginal type $\hat \t$ effectively represents the ``quantity'' sold since types $\t$ sell if and only if $\t\le \hat \t$ in equilibrium.\footnote{This feature  follows from the single-crossing property:  if a type-$\t$ firm sells, then type-$\t'<\t$ firm strictly prefers to sell. The quantity sold is thus $F(\hat \t)$ which corresponds  to $\hat \t$ in one-to-one manner.}  The marginal type faces $\hat\t-S$ as the opportunity cost of selling: by selling the firm loses the asset of value $\hat \t$ but gains the net surplus $S$. Since the marginal type $\hat \t$ must be indifferent to selling in equilibrium, we have $p=\hat\t-S$, giving rise to the {\it supply curve}. 
   Meanwhile, buyers of asset quality $\t\le \hat \t$ enjoy benefit   $\E[\t|\t\le \hat \t]$ on average.   Bertrand competition among buyers means that  average benefit must equal price in equilibrium, giving rise to the {\it average benefit curve}.

	Clearly, supply and average benefit curves must intersect at the equilibrium marginal type $\hat \t=\t_0$, where $\t_0$ satisfies
\begin{equation}
\label{eq:theta0}
\t_0-S=\E[\t|\t\le \t_0]=:p_0.
\end{equation}
The log-concavity assumption means that the average benefit curve always crosses the supply curve from above. Hence, there is a unique threshold $\t_0$ satisfying this requirement, which induces a unique equilibrium.\footnote{It is routine to check that if $f$ is log-concave ($  { \partial^2 \log f(\t)}/{\partial \t^2}<0$ for all $\t$), then there is a unique $\t$ satisfying (\ref{eq:theta0}); see \cite{tirole2012overcoming}.}   Finally, for trade to occur in equilibrium,  price $p_0$ must  be at least $I$, or else there are no gains from trade.  

Figure \ref{fig:no_bailout} summarizes the equilibrium configuration.\footnote{As mentioned, buyers cannot update their information since the market transactions are private. If market transactions were observable, then trading decisions become dynamic, which makes analysis complicated; see \cite{che2018bailout}. }  
\begin{figure} [htp]
	\centering
	\begin{tikzpicture}[
	scale = 0.6,
	1p line/.style={thick, blue},
	2p line/.style={thick, darkred},
	reference/.style = {dashed, very thick},
	axis/.style={very thick},
	move/.style={very thick, ->, dashed, shorten <=2pt, shorten >=2pt},
	comment/.style={font=\footnotesize\sffamily}
	]
	\fill[fill = blue, opacity = 0.5] (0, 1-0.15) -- (10*1.15, 1-0.15) -- (10*1.15, -1 +0.15) -- (0, -1 +0.15);	
	\fill[fill = blue, opacity = 0.5] (0, -1-0.15) -- (10*1.15, -1-0.15) -- (10*1.15, -3 +0.15) -- (0, -3 +0.15);
	
	\draw[reference] (10*1.15, 1) -- (10*1.15, -3);
	
	\draw[axis] (10*1.15, .2) -- (10*1.15, -.2) node[comment, below, xshift = 22.25pt] at (10*1.15, -3) {$\t_0 = p_0 + S$};	
	
	\draw[axis] (0, 0) -- (20*1.15, 0) node[comment, left] at (0, 0) {$t = 1$}; \draw[axis] (0, .2) -- (0, -.2); \draw[axis] (20*1.15, .2) -- (20*1.15, -.2);
	\node[comment, below] at (0, -3) {$\t=0$};
	\node[comment, below] at (20*1.15, -3) {$\t=1$};
	
	\draw[axis] (0, -2) -- (20*1.15, -2) node[comment, left] at (0, -2) {$t = 2$}; \draw[axis] (0, -1.8) -- (0, -2.2); \draw[axis] (20*1.15, -1.8) -- (20*1.15, -2.2);
	
	\node[comment, above, blue] at (5, 1) {$p_0 = \E[ \t | \t \le \t_0]$};
	
	
	\end{tikzpicture}
	\caption{No bailout equilibrium} \label{fig:no_bailout}
\end{figure}


Adverse selection means that the above outcome is typically inefficient.  Specifically, if  $S< 1-\E[\t]$, then  $\t_0<1$, so not all firms sell and finance their projects.  It is also possible that $\t_0=0$, in which case the market freezes completely.  To focus on the nontrivial case, we assume  $\t_0<1$. For expositional ease, it is also convenient to focus on the partial freeze case ($\t_0>0$) in what follows. We will discuss the full freeze case ($\t_0=0$) later in Remark \ref{rem:full-freeze}. 

\begin{example} Consider the uniform case, that is, $F(\t) = \t$. In this case, the equilibrium price $p_0$ is determined by $p_0 = \E[ \t | \t \le \t_0] =  {\t_0}/{2}$. Suppose $\t_0 \in (0, 1)$. Then, from the indifference condition $\t_0 = p_0 + S = \frac{\t_0}{2} + S$, the cutoff type $\t_0$ is uniquely determined as $\t_0 = 2S$, and $p_0=S$, if $S \in \left[I, 1/2 \right)$.  If $S<I$, then the equilibrium price $p_0=S$ cannot fund the project, so the market fully freezes, and hence $\t_0 = 0$. If $S \ge 1 - \E[\t] = 1/2$, then $\t \le \E[\t ] + S$ for all $\t \in [0, 1]$, and therefore, $\t_0 = 1$.
\end{example}

\subsection{Bailout without Stigma:  One-Period Model}

We next consider another benchmark, the one-period bailout model by \cite{tirole2012overcoming} in which  the government offers to purchase   assets at price $p_g$ above the {\it laissez-faire} price $p_0$ before the market opens.  
Specifically, the timeline simply comprises $t=1$ in Figure \ref{fig:timeline2}.\footnote{An astute reader will notice that this timeline differs slightly from that considered by \cite{tirole2012overcoming}, where the market opens after firms have decided on the government offer.  We adopt the current timeline since it is arguably more realistic, and also it permits equilibrium existence more broadly for our two-period extension.  For the one-period version, the difference is immaterial, since the equilibrium under the current timeline is payoff-equivalent to Tirole's equilibrium for all players involved.  In addition, we do not invoke an equilibrium refinement adopted in \cite{tirole2012overcoming}, as the central feature of the equilibrium holds irrespective of the refinement.  See Remark \ref{rem:refinement}.}  Since there is no consequence of accepting a bailout from the government in this one-period model, there is no bailout stigma---at least in the sense we will capture in our two-period model later.\footnote{Note that \cite{tirole2012overcoming} and \cite{philippon2012optimal} do recognize ``stigma'' associated with the types of firms that accept a bailout, but they do not study its effect on the subsequent game nor on the initial decision to accept the bailout, the dual focuses of the current paper.}  Thus, this benchmark will help to identify the role of bailout stigma later in our main analysis.

Perfect Bayesian equilibrium in this game, which we simply refer to hereafter as an equilibrium, is characterized as follows.  Let $\mu_g$ and $\mu_m$ denote the fractions of types $\t\le p_g+S$ that sell to the government and to the market, respectively, where $\mu_g+\mu_m=1$.  We first argue that $\mu_g>0$.  If no firm accepts the government offer,  then the {\it laissez-faire} equilibrium will prevail, with marginal type $\t_0$ and equilibrium price $p_0=\E[\t|\t<\t_0]$.  Since $p_g>p_0$, however, firms will deviate to accept the government offer, a contradiction.

Next, suppose  $\mu_m>0$, so the market is active in equilibrium.\footnote{Under our timeline, the market may not be active in equilibrium.  To see how such an equilibrium can be supported, suppose a buyer  deviates and offers a price $p'>p_g$.  Since firms have not yet accepted the government offer by then (given our timeline), all types $\t<p'$ would accept the deviation offer, and the deviating buyer will suffer a loss since $p'>p_g>p_0$.}  This implies that the market price $p_m$ must equal the government price $p_g$, or else a lower offer will not be accepted.  Given this, firms will sell (either to the government or to the market) if and only if $\t<p_g+S$.  Let $\bar\t_g$ and $\bar\t_m$ denote the  average values of assets  sold to the government and the market, respectively ($\bar\t_m$ can be arbitrary in case $\mu_m=0$). Clearly, we must have 
\begin{equation}\label{eq:feas}
\mu_g \bar\t_g+\mu_m \bar\t_m =\E[\t|\t\le p_g+S].
\end{equation}
Further, since market buyers must break even when $\mu_m>0$, we have $p_m = \bar\t_m$. Given $p_g=p_m$, this in turn implies $\bar\t_m=p_g$. 

The central feature of \cite{tirole2012overcoming} follows from these observations:

\begin{theorem} [dregs-skimming] \label{thm:tirole} \

\begin{description}

		\item[(i)] In any equilibrium of the one-period benchmark with government offer  $p_g>\max\{p_0, I\}$, firms sell assets (either to the government or to the market) at price $p_g$   if and only if $\t<p_g+S$.  Since $p_g>p_0$,  more firms finance their projects than without the government intervention.
		
		\item [(ii)]If $\mu_m>0$ so that the market is active, then we must have  $\bar\t_g<\bar\t_m$; that is, on average lower value assets are sold to the government than to the market.

\end{description}  
	   
\end{theorem}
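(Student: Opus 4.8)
The plan is to establish both parts by combining the Bertrand zero-profit conditions for the government and for the market with the single-crossing property that underlies the cutoff structure of firm behavior. For part (i), I would first argue, exactly as in the text preceding the statement, that $\mu_g>0$: if no firm took the bailout, the laissez-faire outcome with price $p_0<p_g$ would prevail, and any firm could profitably deviate to the government offer. Next I would observe that whenever the market is also active ($\mu_m>0$), Bertrand competition forces the market price up to $p_g$: a market offer strictly below $p_g$ attracts no one given the option of selling to the government, and an offer above $p_g$ loses money since it attracts all types below it and the average of those types is below $p_g=\bar\t_m$ (to be shown). With $p_m=p_g$, a firm of type $\t$ obtains net payoff $p_g+S-\t$ from selling (to either venue) versus $0$ from not selling, so by single-crossing the set of selling types is exactly $\{\t:\t<p_g+S\}$. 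If instead $\mu_m=0$, all trade is with the government at $p_g$, and the same indifference argument (firms compare $p_g+S-\t$ to the continuation value of holding out, which here is just $0$ since no market is active) again yields the cutoff $p_g+S$; one must check no buyer wants to open the market, which follows because any entrant offering $p'\le p_g$ gets nothing and any $p'>p_g$ makes a loss. Finally, since $p_g+S>p_0+S=\t_0$, strictly more firms (a mass $F(p_g+S)>F(\t_0)$) finance projects than under laissez faire.

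For part (ii), assume $\mu_m>0$. The market break-even condition gives $p_m=\bar\t_m$, and since $p_m=p_g$ we get $\bar\t_m=p_g$. Now use the decomposition \eqref{eq:feas}: the overall average of traded assets is
\[
\mu_g\bar\t_g+\mu_m\bar\t_m=\E[\t\mid\t\le p_g+S].
\]
The right-hand side is a conditional mean of $\t$ over the interval $[0,p_g+S]$; since $p_g>p_0$ and $p_0=\E[\t\mid\t\le\t_0]<\t_0<p_g+S$, and more directly since $p_g<\E[\t\mid\t\le p_g+S]$ would be needed for the market to break even only if the market got the high end, I would pin down the inequality as follows. Because $\bar\t_m=p_g$ and $\mu_g+\mu_m=1$, \eqref{eq:feas} rearranges to
\[
\mu_g(\bar\t_g-p_g)=\E[\t\mid\t\le p_g+S]-p_g,
\]
so $\bar\t_g<p_g=\bar\t_m$ holds if and only if $\E[\t\mid\t\le p_g+S]>p_g$, i.e. $p_g<\E[\t\mid\t\le p_g+S]$. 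This last inequality is where the real content lies, and it is exactly the condition that the market can profitably operate: it is equivalent to saying the cutoff type $p_g+S$ lies above the point where the average-benefit curve $\E[\t\mid\t\le\hat\t]$ meets the ``shifted supply'' level $p_g$. I would verify it from $p_g>p_0$ together with the geometry of Figure \ref{fig:akerlof}: at $\hat\t=\t_0$ we have $\E[\t\mid\t\le\t_0]=p_0<p_g$, but the average-benefit curve is increasing and, by log-concavity, the function $\hat\t\mapsto\E[\t\mid\t\le\hat\t]-( \hat\t-S)$ is strictly decreasing and crosses zero only at $\t_0$; hence for $\hat\t=p_g+S>\t_0$ we have $\E[\t\mid\t\le p_g+S]>(p_g+S)-S=p_g$, which is precisely what we need.

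The main obstacle, then, is the clean derivation of $p_g<\E[\t\mid\t\le p_g+S]$ — establishing that the market is genuinely active and break-even at price $p_g$ requires ruling out that the relevant conditional mean falls below $p_g$, and this is where log-concavity of $f$ is doing the work (it guarantees the single-crossing of the average-benefit and supply curves, so moving the cutoff up from $\t_0$ to $p_g+S$ strictly raises the conditional mean above $p_g$). A secondary subtlety is handling the boundary case $\mu_m=0$ in part (i): one must confirm that the cutoff-at-$p_g+S$ characterization still holds and that no buyer has a profitable entry deviation, which uses the tie-breaking and break-even assumptions on buyers stated in the model section. Everything else — the single-crossing step fixing the selling set, and the comparison $F(p_g+S)>F(\t_0)$ — is routine given the earlier results.
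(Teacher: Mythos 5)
Your part (i) is fine and is essentially the paper's own argument (the $\mu_g>0$ step, Bertrand forcing $p_m=p_g$, the cutoff $\t<p_g+S$, and the entry-deviation check for the inactive-market case all appear in the text and footnotes preceding the theorem). The problem is in part (ii): your two key inequalities are both stated backwards, and you reach the right conclusion only because the two errors compensate. From $\mu_g\bar\t_g+\mu_m\bar\t_m=\E[\t|\t\le p_g+S]$, $\bar\t_m=p_g$ and $\mu_m=1-\mu_g$, your rearrangement $\mu_g(\bar\t_g-p_g)=\E[\t|\t\le p_g+S]-p_g$ is correct, but it shows $\bar\t_g<p_g=\bar\t_m$ if and only if $\E[\t|\t\le p_g+S]<p_g$, not $>p_g$ as you write. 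Likewise, the log-concavity geometry you invoke delivers the opposite of what you state: since $\hat\t\mapsto\E[\t|\t\le\hat\t]-(\hat\t-S)$ is strictly decreasing and vanishes at $\t_0$, for $\hat\t=p_g+S>\t_0$ it is \emph{negative}, i.e.\ $\E[\t|\t\le p_g+S]<p_g$. Indeed, this is exactly the direction your own part (i) relies on when you assert that an entrant offering $p'>p_g$ attracts a pool whose mean lies below the offer and so makes a loss; as written, your part (ii) ``establishes'' an inequality that is false whenever $p_g>p_0$ and that would in fact imply $\bar\t_g>\bar\t_m$, the opposite of the theorem.

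The repair is immediate and leaves you with the paper's proof rather than a new one: flip both inequalities. Then the chain reads $\bar\t_m=p_g$; $p_g+S>\t_0$ together with the single-crossing of the average-benefit and supply curves gives $\E[\t|\t\le p_g+S]<p_g$; hence $\mu_g(\bar\t_g-p_g)=\E[\t|\t\le p_g+S]-p_g<0$ and $\bar\t_g<\bar\t_m$. The paper runs the same computation in contrapositive form: supposing $\bar\t_m\le\bar\t_g$, equation \eqref{eq:feas} gives $p_g=\bar\t_m\le\E[\t|\t\le p_g+S]$, so $p_g+S\le\t_0=p_0+S$ by the definition of $\t_0$, contradicting $p_g>p_0$. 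So the approach is the same; what needs fixing is the internal sign consistency of your part (ii).
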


\begin{proof} See Appendix \ref{app:thm1}.  
\end{proof}
 
 Figure \ref{fig:bm_bailout} illustrates the outcomes with and without government bailout.  By offering a higher price $p_g$ than the {\it laissez-faire} price $p_0$, the government  does indeed take out relatively low-value assets, which in turn improves the perception of the assets sold to the market and thus alleviates adverse selection.

\begin{figure} [htp]
	\centering
	\begin{tikzpicture}[
	scale = 0.6,
	1p line/.style={thick, blue},
	2p line/.style={thick, darkred},
	reference/.style = {dashed, very thick},
	axis/.style={very thick},
	move/.style={very thick, ->, dashed, shorten <=2pt, shorten >=2pt},
	comment/.style={font=\footnotesize\sffamily},
	comment2/.style={font=\small}
	]
		
	\fill[fill = blue, opacity = 0.5] (0, 1 -0.15) -- (10*1.15, 1 -0.15) -- (10*1.15, -1 +0.15) -- (0, -1 +0.15);

	\fill[fill = blue, opacity = 0.5] (0, 1 -0.15 -0.15 -2 -3.5) -- (1*1.15, 1 -0.15 -0.15 -2 -3.5) -- (1.5*1.15, 1 -0.15 -0.125 -2 -3.5) -- (2.25*1.15, 1 - 0.15 - 0.375 -2 -3.5) -- (3.15*1.15, 1 -0.15 -0.15 -2 -3.5) -- (4*1.15, 1 -0.15 -0.65 -2 -3.5) -- (4.85*1.15, 1 -0.15 -0.375 -2 -3.5) -- (5.25*1.15, 1 -0.15 -0.85 -2 -3.5) -- (6.1*1.15, 1 -0.15 -0.65 -2 -3.5) -- (6.55*1.15, 1 -0.15 -0.95 -2 -3.5) -- (7.15*1.15, 1 - 0.15 -0.8 -2 -3.5) -- (8.25*1.15, 1 -0.15 -0.65 -2 - 3.5) -- (9.5*1.15, 1 -0.15 -0.9 -2 -3.5) -- (10.5*1.15, 1 -0.15 -1.1 -2 -3.5) -- (11.5*1.15, 1 - 0.15 -0.95 -2 -3.5) -- (13*1.15, -1 +0.15 + 0.2 -2 -3.5) -- (13*1.15, 1 -0.15 -2 -3.5) -- (0, 1 - 0.15 -2 -3.5);
	\fill[fill = red, opacity = 0.5] (0, 1 -0.15 -0.15 -2 -3.5) -- (1*1.15, 1 -0.15 -0.15 -2 -3.5) -- (1.5*1.15, 1 -0.15 -0.125 -2 -3.5) -- (2.25*1.15, 1 - 0.15 - 0.375 -2 -3.5) -- (3.15*1.15, 1 -0.15 -0.15 -2 -3.5) -- (4*1.15, 1 -0.15 -0.65 -2 -3.5) -- (4.85*1.15, 1 -0.15 -0.375 -2 -3.5) -- (5.25*1.15, 1 -0.15 -0.85 -2 -3.5) -- (6.1*1.15, 1 -0.15 -0.65 -2 -3.5) -- (6.55*1.15, 1 -0.15 -0.95 -2 -3.5) -- (7.15*1.15, 1 - 0.15 -0.8 -2 -3.5) -- (8.25*1.15, 1 -0.15 -0.65 -2 - 3.5) -- (9.5*1.15, 1 -0.15 -0.9 -2 -3.5) -- (10.5*1.15, 1 -0.15 -1.1 -2 -3.5) -- (11.5*1.15, 1 - 0.15 -0.95 -2 -3.5) -- (13*1.15, -1 +0.15 + 0.2 -2 -3.5) -- (13*1.15, -1 +0.15 + 0.2 -2 -3.5) -- (13*1.15, -1 +0.15 -2 -3.5) -- (0, -1 + 0.15 -2 -3.5);
	
	\draw[reference] (10*1.15, 1) -- (10*1.15, -1) node[comment, below] at (10*1.15, -1) {$\t_0$};
	\draw[reference] (13*1.15, -1 -3.5) -- (13*1.15, -3 -3.5) node[comment, below, xshift = 22.25pt] at (13*1.15, -3 -3.5) {$\hat\t_g = p_g + S$};
	
	
	\draw[axis] (0, 0) -- (20*1.15, 0); \draw[axis] (0, .2) -- (0, -.2); \draw[axis] (20*1.15, .2) -- (20*1.15, -.2);
	\node[comment, below] at (0, -3 -3.5) {$\t=0$};
	\node[comment, below] at (20*1.15, -3 -3.5) {$\t=1$};
	
	\draw[axis] (0, -2 -3.5) -- (20*1.15, -2 -3.5); 
	
	\draw[axis] (0, -1.8 -3.5) -- (0, -2.2 -3.5); \draw[axis] (20*1.15, -1.8 -3.5) -- (20*1.15, -2.2 -3.5);
	
	
	\node[comment] at (10*1.15, 2) {Equilibrium without bailout};
	\node[comment] at (10*1.15,  -3.5) {Equilibrium with bailout};	
	
	
	\node[comment2, right, text width = 16.25cm, inner sep= 5pt] at (0-3.25, -2.5-3 -3.5) {Note: the types selling to the market are depicted by blue and the types selling to the government are depicted by red.};
	
	\end{tikzpicture}
	\caption{Effects of bailout in the one-period benchmark}	  \label{fig:bm_bailout}

\end{figure}

More importantly for our purpose, assets are sold to the government at the same price as they are sold to the market.  This reflects the absence of stigma associated with accepting a bailout.  Plainly, in the one-period problem, firms that accept the bailout can do so without fear of consequences, since the game ends immediately after the bailout.

\begin{remark} [The role of the equilibrium refinement in \cite{tirole2012overcoming}]\label{rem:refinement}  \rm To obtain the ``dregs-skimming'' role of bailout,  \cite{tirole2012overcoming} invokes an equilibrium refinement---that the market sale collapses with an arbitrarily small probability.  Given the single crossing property implicit in the firms' payoffs, this refinement effectively ``forces'' the equilibrium to have the dregs-skimming feature: namely, there exists $\tilde\t\in (0,p_g+S)$ such that types $\t\le \tilde \t$ all sell to the government and types $\t\in (\tilde\t, p_g+S]$ all sell to the market.\footnote{Suppose a market sale is subject to probability $\epsilon>0$ of cancellation. If a type-$\t$ firm prefers to sell to the government, then $p_g+S\ge (1-\epsilon)(p_m+S)+ \epsilon \t$, where $p_m$ is the equilibrium market price. This means that all types $\t'<\t$ must strictly prefer to sell to the government.} Consequently, this refinement makes unclear whether dregs-skimming is an artifact of the refinement or something more fundamental.  By not invoking the refinement,   Theorem \ref{thm:tirole}   proves that dregs-skimming is fundamental (and not  driven by the refinement).  Without the refinement, however, there are multiple equilibria that differ in terms of $\mu_g$ and the value $\overline\t_g$, but every such equilibrium exhibits the ``dregs-skimming'' feature. \end{remark}

\subsection{Secret Bailout}

In order to identify the effects of bailout stigma, we need to understand what would happen if the policy maker could  eliminate the stigma altogether.  Imagine that the policy maker ``completely and successfully'' conceals the identities of the firms that accept the government offer.  This kind of secrecy would be an important feature of the bailout policy and is worth studying in its own right, precisely because of the issue of stigma.\footnote{{\citet{gorton2020fighting} supports such a policy. During crises, debt contracts lose ``information insensitivity'' as investors scrutinize the downside risk of underlying collaterals, leading to an adverse selection. They argue that withholding information on whether borrowers borrow from discount windows of central banks can make debtors less information sensitive and alleviate adverse selection.  As will be seen, secrecy has a more nuanced effect in our model.}}  
Nevertheless, we view secrecy primarily as a benchmark against which transparent bailouts are compared, given our premise that ``complete'' secrecy has been so far difficult to achieve despite many concerted efforts.\footnote{The identities of banks borrowing from the discount window facilities (DW) are occasionally leaked to either the news media or the market participants through a number of channels. First, despite the apparent secrecy attached to DW, the access to DW by borrowing firms has been detected by news media (\citet{armantier2015discount}, \citet{berry2012bloomberg}). For instance, the Financial Times reported the news that Deutsche Bank had borrowed from DW one day ago (see ``Fed fails to calm money markets,'' \emph{The Financial Times}, August 20, 2007). Second, the market participants can identify DW borrowers from these banks' market activities or the information released by the Fed. On its weekly report, the Fed discloses whether there is an increase in aggregate DW borrowing. In addition, financial institutions can observe whether a bank did not borrow or lend at the federal funds market at that time. Combining all the information, one can easily identify a DW borrower (\citet{haltom2011federal}).}

The equilibrium under complete secrecy is very easy to analyze.  Since, under secrecy, neither sales to the market nor sales to the government are observed, the former by assumption and the latter by secrecy, firms need not worry about the signaling consequences of their $t=1$ actions.  Hence, the equilibrium in $t=1$ coincides with that of Theorem \ref{thm:tirole}.  Given no informational leakage,  the outcome of $t=2$ coincides with the no-intervention benchmark.

\begin{theorem}[Secret bailouts] \label{thm:secrecy}  Suppose the government offers to purchase assets at $p_g>p_0$ with full secrecy.  Then, in equilibrium, firms accept the government offer in $t = 1$ if and only if $\t<p_g+S$. In $t=2$, firms sell assets to the market at price $p_0$ if and only if $\t<\t_0$.
\end{theorem}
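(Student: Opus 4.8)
The plan is to exploit the fact that full secrecy severs the only potential informational link between the two periods, so that the game decouples into two independent stages, each of which has already been characterized. I would argue by backward induction.

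First, consider the $t=2$ subgame after an arbitrary $t=1$ history. Market sales in $t=1$ are unobservable by assumption, and acceptance of the government offer is concealed by secrecy; hence a $t=2$ buyer observes nothing payoff-relevant about a firm's $t=1$ conduct, so Bayesian updating leaves its posterior over that firm's type equal to the prior $F$. Moreover, every firm enters $t=2$ holding exactly one unit of legacy asset regardless of its $t=1$ action, and by assumption the $t=1$ project return cannot be carried over. Consequently the $t=2$ stage is literally the one-period laissez-faire game of Section \ref{sec:benchmark}: by the Akerlof/single-crossing argument given there, together with log-concavity of $f$, it has a unique equilibrium in which a type-$\t$ firm sells to the market at price $p_0 = \E[\t \mid \t \le \t_0]$ if and only if $\t < \t_0$, where $\t_0$ solves \eqref{eq:theta0}; since we are in the partial-freeze regime, $p_0 \ge I$ and this trade is feasible. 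The key consequence is that a type-$\t$ firm's continuation payoff entering $t=2$ equals $\max\{p_0 + S, \t\}$ irrespective of what it did in $t=1$.

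Second, because the $t=1$ action has no effect on this continuation payoff, a firm choosing in $t=1$ simply maximizes its $t=1$ payoff, so the $t=1$ stage is exactly the one-period benchmark with government price $p_g$. Since $p_g > p_0 \ge I$, we have $p_g > \max\{p_0, I\}$, and Theorem \ref{thm:tirole} applies: in any equilibrium of this stage the set of financing firms is $\{\t < p_g + S\}$ and they transact at price $p_g$, with $\mu_g > 0$. I would then select the equilibrium with $\mu_g = 1$ — every financing firm sells to the government — with firms breaking indifference in favor of the government and $t=1$ buyers offering a price no greater than $p_g$, so that none of their offers is accepted. This is an equilibrium: a $t=1$ buyer deviating to $p' \le p_g$ attracts no one and earns zero, while a deviation to $p' > p_g$ attracts all types $\t < p'$ and loses money because $p' > p_g > p_0$, exactly as in the one-period analysis.

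Finally I would assemble the two pieces into a single Perfect Bayesian equilibrium of the two-period game: in $t=1$ a firm accepts the government offer iff $\t < p_g + S$; in $t=2$ a firm sells to the market at price $p_0$ iff $\t < \t_0$; $t=2$ buyers hold the prior belief $F$ after every history; and buyers' offers break even in both periods. Firm optimality in $t=2$ and belief consistency come from the first step, firm optimality in $t=1$ from the second step (using history-independence of the continuation value), and buyer optimality in both periods from the Bertrand/break-even arguments. There is no genuinely hard step here; the only point that needs care is the claim that the $t=2$ posterior equals the prior after \emph{every} history, which rests on both the unobservability of $t=1$ market sales and the secrecy of the government program — were either link present, the two periods would no longer decouple.
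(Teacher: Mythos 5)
Your proof is correct and takes essentially the same route as the paper: full secrecy together with unobservable market sales removes any informational link between periods, so $t=1$ reduces to the one-period bailout benchmark of Theorem \ref{thm:tirole} and $t=2$ to the laissez-faire outcome with cutoff $\t_0$ and price $p_0$. Your explicit backward induction, the $\mu_g=1$ selection, and the check that a deviation to $p'>p_g$ loses money merely formalize what the paper argues informally in the paragraph preceding Theorem \ref{thm:secrecy} (and in its footnote on supporting an inactive $t=1$ market), so there is no substantive difference.
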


\begin{figure} [htp]
	\centering
	\begin{tikzpicture}[
	scale = 0.6,
	1p line/.style={thick, blue},
	2p line/.style={thick, darkred},
	reference/.style = {dashed, very thick},
	axis/.style={very thick},
	move/.style={very thick, ->, dashed, shorten <=2pt, shorten >=2pt},
	comment/.style={font=\footnotesize\sffamily},
	comment2/.style={font=\small}
	]
	
	\fill[fill = blue, opacity = 0.5] (0, 1 -0.15 -0.15 -0 -0) -- (1*1.15, 1 -0.15 -0.15 -0 -0) -- (1.5*1.15, 1 -0.15 -0.125 -0 -0) -- (2.25*1.15, 1 - 0.15 - 0.375 -0 -0) -- (3.15*1.15, 1 -0.15 -0.15 -0 -0) -- (4*1.15, 1 -0.15 -0.65 -0 -0) -- (4.85*1.15, 1 -0.15 -0.375 -0 -0) -- (5.25*1.15, 1 -0.15 -0.85 -0 -0) -- (6.1*1.15, 1 -0.15 -0.65 -0 -0) -- (6.55*1.15, 1 -0.15 -0.95 -0 -0) -- (7.15*1.15, 1 - 0.15 -0.8 -0 -0) -- (8.25*1.15, 1 -0.15 -0.65 -0 -0) -- (9.5*1.15, 1 -0.15 -0.9 -0 -0) -- (10.5*1.15, 1 -0.15 -1.1 -0 -0) -- (11.5*1.15, 1 - 0.15 -0.95 -0 -0) -- (13*1.15, -1 +0.15 + 0.2 -0 -0) -- (13*1.15, 1 -0.15 -0 -0) -- (0, 1 - 0.15 -0 -0);
	\fill[fill = red, opacity = 0.5] (0, 1 -0.15 -0.15 -0 -0) -- (1*1.15, 1 -0.15 -0.15 -0 -0) -- (1.5*1.15, 1 -0.15 -0.125 -0 -0) -- (2.25*1.15, 1 - 0.15 - 0.375 -0 -0) -- (3.15*1.15, 1 -0.15 -0.15 -0 -0) -- (4*1.15, 1 -0.15 -0.65 -0 -0) -- (4.85*1.15, 1 -0.15 -0.375 -0 -0) -- (5.25*1.15, 1 -0.15 -0.85 -0 -0) -- (6.1*1.15, 1 -0.15 -0.65 -0 -0) -- (6.55*1.15, 1 -0.15 -0.95 -0 -0) -- (7.15*1.15, 1 - 0.15 -0.8 -0 -0) -- (8.25*1.15, 1 -0.15 -0.65 -0 -0) -- (9.5*1.15, 1 -0.15 -0.9 -0 -0) -- (10.5*1.15, 1 -0.15 -1.1 -0 -0) -- (11.5*1.15, 1 - 0.15 -0.95 -0 -0) -- (13*1.15, -1 +0.15 + 0.2 -0 -0) -- (13*1.15, -1 +0.15 + 0.2 -0 -0) -- (13*1.15, -1 +0.15 -0 -0) -- (0, -1 + 0.15 -0 -0);
	
	\fill[fill = blue, opacity = 0.5] (0, -1-0.15) -- (10*1.15, -1-0.15) -- (10*1.15, -3 +0.15) -- (0, -3 +0.15);
	
	\draw[reference] (10*1.15, -1) -- (10*1.15, -3);
	\draw[reference] (13*1.15, 1) -- (13*1.15, -3) node[comment, above, xshift = 11.5pt] at (13*1.15, 1) {$p_g + S$};
	
	\node[comment, below, xshift = 22.25pt] at (10*1.15, -3) {$\t_0 = p_0 + S$};	
	
	\draw[axis] (0, 0) -- (20*1.15, 0) node[comment, left] at (0, 0) {$t = 1$}; \draw[axis] (0, .2) -- (0, -.2); \draw[axis] (20*1.15, .2) -- (20*1.15, -.2);
	\node[comment, below] at (0, -3) {$\t=0$};
	\node[comment, below] at (20*1.15, -3) {$\t=1$};
	
	\draw[axis] (0, -2) -- (20*1.15, -2) node[comment, left] at (0, -2) {$t = 2$}; \draw[axis] (0, -1.8) -- (0, -2.2); \draw[axis] (20*1.15, -1.8) -- (20*1.15, -2.2);
	

	
	\node[comment2, right, text width = 16.25cm, inner sep= 5pt] at (0-3.25, -2.5-3) {Note: the types selling to the market are depicted by blue and the types selling to the government are depicted by red.};
	
	\end{tikzpicture}
	\caption{Equilibrium with secret bailouts} \label{fig:secret_bailout}

\end{figure}

\section{Government Bailout and Stigma} \label{sec:gov}

We now turn to the two-period game whose timeline is depicted in Figure \ref{fig:timeline2}. We continue to assume that the government offer is above the {\it laissez-faire} price: $p_g > p_0$. 
Otherwise, there is only a trivial equilibrium in which the {\it laissez-faire} outcome prevails, with no firms accepting the government offer.

We begin by analyzing the structure of a possible equilibrium. We focus on the equilibrium obtained in the limit as the relative weight $\delta<1$ for the $t = 2$ payoff approaches 1.\footnote{The assumption of $\delta < 1$ is meant to capture the fact that even though the reputational  consequence of accepting a bailout  may be important, its effect does not outweigh the direct payoff consequence of the decision, which appears to be first-order. Further, the reputational impact attached to bailouts does not usually persist after the financial crisis resolves. For instance, the total outstanding TARP bank funds, after the launch of TARP in October 2008, reached their peak at \$235.3 billion in February 2009, but sharply decreased to \$80.4 billion in January 2010 (go to \url{https://www.treasury.gov/initiatives/financial-stability/reports/Pages/TARP-Tracker.aspx} for more details). This observation indicates that the banks, after having joined TARP during 2007-2009 Great Recession, had little trouble securing funding through the market after the crisis was over. }

\begin{lemma} \label{lem:char} In any equilibrium with  $p_g>p_0$, there are three cutoffs $0<\hat \t\le \hat\t_g\le \t_2$ such that types $\t\le\hat \t$ sell assets in both periods, some measure of whom sell to the government and the remainder of whom sell to the market in $t=1$; types $\t\in (\hat \t, \hat\t_g]$ sell only to the government in $t=1$ but do not sell in $t=2$; types $\t\in (  \hat\t_g,\t_2]$  sell only in $t=2$; and types $\t>\t_2$ never sell their assets in either period.
\end{lemma}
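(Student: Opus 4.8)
The plan is to solve the game by backward induction, reading every decision off the single-crossing structure already used in the laissez-faire analysis of Section~\ref{sec:benchmark}, so that in each period a firm's action is governed by a cutoff in $\t$. The substantive work is then to make the two periods' cutoffs mutually consistent under Bertrand pricing and the observability assumptions, and to read off the claimed ordering.

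\emph{Period $2$.} The only thing a $t=2$ buyer observes about a firm is whether it accepted the bailout; in particular, firms that sold to the $t=1$ market and firms that held out in $t=1$ are pooled. So the $t=2$ market splits into a sub-market for the ``acceptors'' and one for the ``rejecters,'' and Bertrand competition makes the relevant price in each equal to the average quality of the firms selling there; write $q_a$ and $q_r$ for these prices. A type-$\t$ firm sells its $t=2$ unit iff funding the project (worth $q+S$ in cash) beats keeping the unit (worth $\t$), i.e.\ iff $\t\le q+S$; hence within each sub-market the sellers form a down-set --- acceptors with $\t\le q_a+S$ and rejecters with $\t\le q_r+S$ --- which is just the Akerlof cutoff of (\ref{eq:theta0}) conditioned on the observable class, the maintained distributional assumptions making each cutoff well-defined. (Prices on a sub-market empty on path, or reached only by a single deviator, are assigned by the usual break-even/worst-case belief; this bears only on deterring deviations.)

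\emph{Period $1$.} A type-$\t$ firm has three actions, with total payoffs $U_A(\t)=(p_g+S)+\d\max\{\t,q_a+S\}$ (accept the bailout), $U_M(\t)=(p_m+S)+\d\max\{\t,q_r+S\}$ (sell one unit to the $t=1$ market at price $p_m$), and $U_O(\t)=\t+\d\max\{\t,q_r+S\}$ (hold out). Each pairwise difference is piecewise linear and monotone in $\t$: $U_A-U_O$ is strictly decreasing (every linear piece has negative slope, of magnitude between $1-\d$ and $1+\d$); $U_M-U_O=p_m+S-\t$ is strictly decreasing; and $U_A-U_M=(p_g-p_m)+\d\big(\max\{\t,q_a+S\}-\max\{\t,q_r+S\}\big)$ is monotone (non-decreasing iff $q_a\le q_r$). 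Hence $\max\{U_A,U_M\}-U_O$ is strictly decreasing, so the types that trade \emph{something} in $t=1$ form a down-set $[0,\hat\t_g]$; the monotonicity of $U_A-U_M$ sorts $[0,\hat\t_g]$ between the government and the $t=1$ market along a threshold $\hat\t$; and the period-$2$ cutoffs above split each group into those that do and do not trade again, with $\t_2$ the largest type that ever trades. The ordering $\hat\t\le\hat\t_g\le\t_2$ is then immediate, as trading in both periods implies trading in $t=1$, which implies trading at all; and $\hat\t>0$ because $p_g>p_0\ge I$ makes the lowest types strictly prefer to fund in $t=1$ while $\t_0>0$ keeps the period-$2$ sub-market they reach active. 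What still needs proof is that this partition is \emph{exactly} the stated one --- in particular, that no firm sells only to the $t=1$ market and then withdraws in $t=2$, and that $[0,\hat\t]$ is split between the government and the market by indifference.

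\emph{The crux.} Both points --- and the main obstacle --- reduce to the stigma ordering $q_a\le q_r$ whenever the $t=1$ market is active. With it in hand, the holdout comparison and Bertrand break-even confine the market sellers to types that also trade in $t=2$, so market sellers lie in $[0,\hat\t]$ and $\hat\t$ coincides with the acceptors' period-$2$ cutoff $q_a+S$; and break-even for $p_m$ then forces the indifference $p_g-p_m=\d(q_r-q_a)$ on $[0,\hat\t]$ underlying the ``some to the government, the rest to the market'' description. The delicacy is that $q_a,q_r$ are averages over the period-$1$ sorting, which is itself driven by the continuation values $\max\{\t,q_a+S\},\max\{\t,q_r+S\}$ --- a fixed point. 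I would settle $q_a\le q_r$ by contradiction: $q_a>q_r$ would tilt both payoff comparisons involving acceptance in favour of low types, so the acceptor pool would be a down-set and the rejecter pool the complementary up-set; but then $q_a$ averages types of $f$ below a threshold and $q_r$ averages types above it, so $q_a<q_r$ --- a contradiction, the only escape being a frozen period-$2$ rejecter market, i.e.\ a degenerate equilibrium in which the market component of $[0,\hat\t]$ is empty and no ordering is needed. Ruling the remaining configurations in or out --- e.g.\ $p_m>p_g$, or no type accepting the bailout --- is a short case split using Bertrand break-even and $p_g>p_0$, exactly as in the one-period benchmark preceding Theorem~\ref{thm:tirole}. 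Assembling these pieces gives the three cutoffs and the stated description.
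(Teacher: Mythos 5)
Your backward-induction skeleton (monotone pairwise differences of $U_A,U_M,U_O$, Akerlof cutoffs within each observable class) is sound and is essentially the ``standard argument'' the paper itself appeals to when it omits the formal proof: single-crossing plus $0/1$ quantities plus $\d<1$ deliver the cutoff ordering. The two places where the lemma has content beyond that skeleton---that $\hat\t>0$, and that no firm sells only to the $t=1$ market (so the $(\hat\t,\hat\t_g]$ segment sells only to the government)---are exactly where your plan has genuine gaps; the paper handles both in its Online Appendix with explicit buyer-deviation arguments rather than the shortcuts you invoke.

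First, your justification of $\hat\t>0$---``$p_g>p_0\ge I$ makes the lowest types strictly prefer to fund in $t=1$ while $\t_0>0$ keeps the period-$2$ sub-market they reach active''---is an assertion, not an argument. Which sub-market the low types land in, and its price, are endogenous. For instance, if all $t=1$ sellers accept the bailout and the acceptor pool is $[0,c]$ with $c<\t_0$, the acceptors' $t=2$ price is $\E[\t\,|\,\t\le c]<p_0$, which need not reach $I$, so the very sub-market your low types reach can freeze and no type sells twice. Excluding such configurations requires the case-by-case deviation analysis the paper performs (Step 1 of the Online Appendix lemma): constructing a profitable $t=1$ or $t=2$ buyer deviation depending on whether the relevant cutoff lies above or below $\t_0$; the bare fact $\t_0>0$ does not do this work.

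Second, your crux is circular at its escape hatch. You derive ``market sellers also trade in $t=2$'' from the stigma ordering $q_a\le q_r$, but your contradiction proof of $q_a\le q_r$ leaves open precisely the frozen-rejecter-market case, which you dismiss by asserting that the $t=1$ market is then empty---which is the very claim at issue. In that configuration ($q_a>q_r$, acceptors a down-set $[0,c]$) nothing you have established prevents types in $(c,p_m+S]$ from selling to the $t=1$ market at a break-even $p_m\ge I$ and never reselling, exactly the pattern the lemma rules out. Closing this requires the separate argument the paper sketches: a $t=2$ buyer can offer $p_m$ to the rejecter class and attract a pool of mean $p_m$, so Bertrand competition together with the tie-breaking/zero-profit conventions forces $q_r\ge p_m$ and un-freezes that market; note this deviation is only break-even, so the conventions (not a strictly profitable deviation) carry the step, and it must be stated independently of $q_a\le q_r$. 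A smaller point: saying that monotonicity of $U_A-U_M$ ``sorts $[0,\hat\t_g]$ along a threshold'' sits uneasily with the lemma's claim that positive measures of $[0,\hat\t]$ go to the government and to the market; on $[0,\hat\t]$ the difference $U_A-U_M$ must vanish identically (your own no-arbitrage relation $p_g-p_m=\d(q_r-q_a)$), so the split there is by indifference, not by type.
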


The structure of an equilibrium is depicted in   Figure \ref{fig:eq-structure}.

\begin{figure} [htp]
	\centering
	\begin{tikzpicture}[
	scale = 0.6,
	1p line/.style={thick, blue},
	2p line/.style={thick, darkred},
	reference/.style = {dashed, very thick},
	axis/.style={very thick},
	move/.style={very thick, ->, dashed, shorten <=2pt, shorten >=2pt},
	comment/.style={font=\footnotesize\sffamily},
	comment2/.style={font=\small}
	]
	\fill[fill = aurometalsaurus, opacity = 0.5] (8*1.15, 1 -0.15) -- (11*1.15, 1 -0.15) -- (11*1.15, -1 +0.15) -- (8*1.15, -1+0.15);
	\fill[fill = aurometalsaurus, opacity = 0.5] (11*1.15, -1-0.15) -- (13*1.15, -1-0.15) -- (13*1.15, -3+0.15) -- (11*1.15, -3+0.15);	
	\fill[fill = aurometalsaurus, opacity = 0.5] (0, 1-0.15) -- (8*1.15, 1-0.15) -- (8*1.15, -1 +0.15) -- (0, -1 +0.15);	
	\fill[fill = aurometalsaurus, opacity = 0.5] (0, -1-0.15) -- (8*1.15, -1-0.15) -- (8*1.15, -3 +0.15) -- (0, -3 +0.15);
	
	\draw[reference] (8*1.15, 1) -- (8*1.15, -3);
	
	\draw[axis] (8*1.15, .2) -- (8*1.15, -.2) node[comment, below] at (8*1.15, -3) {$\hat\t$};	
	\draw[axis] (11*1.15, .2) -- (11*1.15, -.2) node[comment, below] at (11*1.15, -3) {$\hat\t_g$};
	\draw[axis] (13*1.15, .2-2) -- (13*1.15, -.2-2) node[comment, below] at (13*1.15, -3) {$\t_2$};	
	
	\draw[axis] (0, 0) -- (20*1.15, 0) node[comment, left] at (0, 0) {$t = 1$}; \draw[axis] (0, .2) -- (0, -.2); \draw[axis] (20*1.15, .2) -- (20*1.15, -.2);
	\node[comment, below] at (0, -3) {$\t=0$};
	\node[comment, below] at (20*1.15, -3) {$\t=1$};
	
	\draw[axis] (0, -2) -- (20*1.15, -2) node[comment, left] at (0, -2) {$t = 2$}; \draw[axis] (0, -1.8) -- (0, -2.2); \draw[axis] (20*1.15, -1.8) -- (20*1.15, -2.2);
	
	\draw[reference] (11*1.15, 1) -- (11*1.15, -3);
	\draw[reference] (13*1.15, 1) -- (13*1.15, -3);	
	
	\end{tikzpicture}
	\caption{General structure of equilibrium} \label{fig:eq-structure}
\end{figure}

Lemma \ref{lem:char} rests on several observations. First, firms' preferences satisfy the single-crossing property, implying that  a lower type has greater incentives to sell than a higher type in either period. This implies that the total number of units sold in equilibrium across the two periods must be non-increasing in $\t$.  Second, the fact that buyers (either the government or the market) never ration sellers means that the quantity traded for each firm must be either zero or one in each period.\footnote{ {The same logic implies that virtually no partial sales would arise even though we allow for them.  By the single crossing property of the payoff functions, partial sales, just like randomization, could occur in equilibrium only for a measure zero set of firm types, namely, the type that is indifferent between selling two or one units and the type that is indifferent between selling one unit and zero unit.}}  Third, an arbitrarily small discounting of the second-period payoff, along with the first two observations, implies that, among those that sell only in one period, early sellers are of lower types than late sellers. These observations give rise to the stated cutoff structure, as depicted in Figure \ref{fig:eq-structure}.  We omit the formal proof since it follows from a standard argument based on these observations.



In what follows, we limit attention to the case of $p_g<1-S$, namely, when the offer is not so high that all firms would accept the bailout.\footnote{Given the condition, we will have $\hat\t_g < 1$.  In case $p_g$ is higher so that all firms accept the bailout, the second-period would coincide with the {\it laissez-faire} outcome. 
 }  Then, Lemma \ref{lem:char} implies that there are only two possible types of equilibria: (a)  {\it short-lived stimulation}  equilibria and (b)  {\it delayed stimulation}  equilibria, depending on whether the stimulation effect of a bailout arises in $t=1$ or delayed to $t=2$.  More formally, these two types of equilibria correspond to the cases where $\hat\t_g =\t_2$ and $\hat\t_g<\t_2$, respectively, in the cutoff structure characterized in Lemma \ref{lem:char}.    In Online Appendix \ref{Asec:equili_types}, we formally show that these are the only possible types of equilibria.

\subsection{Short-lived Stimulation Equilibria (SSE)}	\label{sec:no_delay}

This type of equilibrium corresponds to the case where  $\hat{\t}_g=\t_2$ in Lemma \ref{lem:char}, and is depicted in Figure \ref{fig:no-delay}.  Importantly, the segment $[\hat\t_g, \t_2]$ of firms selling in $t=2$ in Lemma \ref{lem:char} (i.e., delayed selling) is of measure zero in this equilibrium.  Consequently, a bailout triggers an immediate increase in trade volume in $t=1$, but, as we will argue, this effect is short-lived.

\begin{figure} [htp]
	\centering
	\begin{tikzpicture}[
	scale = 0.6,
	1p line/.style={thick, blue},
	2p line/.style={thick, darkred},
	reference/.style = {dashed, very thick},
	axis/.style={very thick},
	move/.style={very thick, ->, dashed, shorten <=2pt, shorten >=2pt},
	comment/.style={font=\footnotesize\sffamily},
	comment2/.style={font=\small}
	]
	\fill[fill = red, opacity = 0.5] (8*1.15, 1 -0.15) -- (12*1.15, 1 -0.15) -- (12*1.15, -1 +0.15) -- (8*1.15, -1 +0.15);
	\fill[fill = blue, opacity = 0.3] (0, 1 -0.15 -0.15) -- (1*1.15, 1 -0.15 -0.15) -- (1.5*1.15, 1 -0.15 -0.125) -- (2.25*1.15, 1 - 0.15 - 0.375) -- (3.15*1.15, 1 -0.15 -0.15) -- (4*1.15, 1 -0.15 -0.65) -- (4.85*1.15, 1 -0.15 -0.375) -- (5.25*1.15, 1 -0.15 -0.85) -- (6.1*1.15, 1 -0.15 -0.65) -- (6.55*1.15, 1 -0.15 -0.95) -- (7.15*1.15, 1 - 0.15 -0.8) -- (8*1.15, -1 +0.15 + 0.2) -- (8*1.15, 1 -0.15) -- (0, 1 - 0.15);
	\fill[fill = red, opacity = 0.5] (0, 1 -0.15 -0.15) -- (1*1.15, 1 -0.15 -0.15) -- (1.5*1.15, 1 -0.15 -0.125) -- (2.25*1.15, 1 - 0.15 - 0.375) -- (3.15*1.15, 1 -0.15 -0.15) -- (4*1.15, 1 -0.15 -0.65) -- (4.85*1.15, 1 -0.15 -0.375) -- (5.25*1.15, 1 -0.15 -0.85) -- (6.1*1.15, 1 -0.15 -0.65) -- (6.55*1.15, 1 -0.15 -0.95) -- (7.15*1.15, 1 - 0.15 -0.8) -- (8*1.15, -1 +0.15 + 0.2) -- (8*1.15, -1 +0.15) -- (0, -1 + 0.15);
	
	\fill[fill = blue, opacity = 0.3] (0, 1 -0.15 -0.15 -2) -- (1*1.15, 1 -0.15 -0.15 -2) -- (1.5*1.15, 1 -0.15 -0.125 -2) -- (2.25*1.15, 1 - 0.15 - 0.375 -2) -- (3.15*1.15, 1 -0.15 -0.15 -2) -- (4*1.15, 1 -0.15 -0.65 -2) -- (4.85*1.15, 1 -0.15 -0.375 -2) -- (5.25*1.15, 1 -0.15 -0.85 -2) -- (6.1*1.15, 1 -0.15 -0.65 -2) -- (6.55*1.15, 1 -0.15 -0.95 -2) -- (7.15*1.15, 1 - 0.15 -0.8 -2) -- (8*1.15, -1 +0.15 + 0.2 -2) -- (8*1.15, 1 -0.15 -2) -- (0, 1 - 0.15 -2);
	\fill[fill = red, opacity = 0.5] (0, 1 -0.15 -0.15 -2) -- (1*1.15, 1 -0.15 -0.15 -2) -- (1.5*1.15, 1 -0.15 -0.125 -2) -- (2.25*1.15, 1 - 0.15 - 0.375 -2) -- (3.15*1.15, 1 -0.15 -0.15 -2) -- (4*1.15, 1 -0.15 -0.65 -2) -- (4.85*1.15, 1 -0.15 -0.375 -2) -- (5.25*1.15, 1 -0.15 -0.85 -2) -- (6.1*1.15, 1 -0.15 -0.65 -2) -- (6.55*1.15, 1 -0.15 -0.95 -2) -- (7.15*1.15, 1 - 0.15 -0.8 -2) -- (8*1.15, -1 +0.15 + 0.2 -2) -- (8*1.15, -1 +0.15 -2) -- (0, -1 + 0.15 -2);
	
	\draw[reference] (8*1.15, 1) -- (8*1.15, -3);
	
	\draw[axis] (8*1.15, .2) -- (8*1.15, -.2) node[comment, below] at (8*1.15, -3) {$\hat\t$};	\draw[axis] (12*1.15, .2) -- (12*1.15, -.2) node[comment, below] at (12*1.15, -3) {$\hat\t_g$};
	
	\draw[axis] (0, 0) -- (20*1.15, 0) node[comment, left] at (0, 0) {$t = 1$}; \draw[axis] (0, .2) -- (0, -.2); \draw[axis] (20*1.15, .2) -- (20*1.15, -.2);
	\node[comment, below] at (0, -3) {$\t=0$};
	\node[comment, below] at (20*1.15, -3) {$\t=1$};
	
	\draw[axis] (0, -2) -- (20*1.15, -2) node[comment, left] at (0, -2) {$t = 2$}; \draw[axis] (0, -1.8) -- (0, -2.2); \draw[axis] (20*1.15, -1.8) -- (20*1.15, -2.2);
	
	\draw[reference] (12*1.15, 1) -- (12*1.15, -3);
	\node[comment2, right, text width = 16.25cm, inner sep= 5pt] at (0-3.25, -2.5-3) {Note: the types selling to the market are depicted by blue and the types selling to the government are depicted by red.};
	
	\end{tikzpicture}
	\caption{Short-lived stimulation equilibrium} \label{fig:no-delay}
	
\end{figure}

 
 The SSE are characterized as follows:
 
 \begin{theorem} [\emph{Short-lived stimulation} equilibria] \label{thm:nodelay} 
 In an SSE under an public offer $p_g>p_0$, there exist $\hat\t<\t_0<  p_g + S=\hat\t_g$ such that the following holds:
 \begin{description}
 \item  [(i)] Types $\t\le \hat\t$ sell in both periods, and types $\t\in (\hat\t, \hat\t_g)$ sell only in $t=1$ to the government. 
 \item  [(ii)] Among the types $\t\le \hat\t$, the fraction $\mu_g>0$ with average quality $\bar\t_g$ accepts the bailout and the fraction $\mu_m = 1 - \mu_g>0$ with average quality $\bar\t_m$ sells to the market at $p_m =\bar\t_m$ in
 $t=1$, where $\bar\t_g < \bar\t_m < p_g$. 
 \item [(iii)]     $p_g< 2 \t_0 - \E[ \t | \t \le \t_0]$. 
 \end{description}
 
 \end{theorem}

\begin{proof}  See Appendix \ref{app:thm3}.
\end{proof}


We highlight three features of SSE.  
 First, bailout recipients suffer stigma.  In the $t=2$ market, bailout recipients are believed to be of  type $\bar\t_g$ (on average), while those that sell to the market in $t=1$ are believed to be of  type $\bar\t_m > \bar\t_g$. Thus, assets held by bailout recipients are sold at discount precisely equal to $\Delta=\bar\t_m-\bar\t_g$, since the market correctly infers the difference in their average asset values.   Of course, this stigma must be compensated in $t=1$, or else no firms would accept the bailout.  In particular, the government must pay more than the market does for the asset in $t=1$.  Since the government offer is fixed at $p_g$, this means that the market in $t=1$ must clear at price $p_g-\Delta$.  In other words, buyers demand a haircut $\Delta$ from firms selling to them for avoiding that stigma. So, the stigma leads to a commensurate mark-down of price for assets traded in $t=1$ market.\footnote{{While direct evidence for such a mark-down is not readily available, the phenomenon is similar to the finding in \citet{armantier2015discount} that borrowing from the Discount Window was cheaper than using the Term Auction Facility due to the stigma attached to the former.}  }   Since the market is competitive, buyers cannot earn positive profit, so what this simply means is that the average type $\bar\t_m$ of firms selling to the market must (endogenously) equal $p_g-\Delta$.

Second, dregs-skimming by government bailout---featured prominently in Tirole's model---does not occur here.  More specifically, there is a positive measure of firms $\t \in (\hat\t, \hat\t_g]$ that accept the government offer. The assets sold by these firms have strictly higher value than $\hat\t$, the highest value of asset sold to the market. This contrasts with the equilibrium in \cite{tirole2012overcoming}, in which the assets sold to the government are  worth strictly less than those sold to the private markets. 
Bailout stigma here creates the incentive for high-type firms to mitigate it or avoid it altogether.  Selling to the market in $t=1$ instead is one option, but it is subject to a mark-down of asset price by $\Delta$; effectively, bailout stigma has ``spread'' to market sellers in $t=1$.  Another way to avoid the collateral damage is to accept the bailout but simply withdraw from the $t=2$ market.  Indeed, types $(\hat \t, p_g+S]$ find it strictly profitable to accept the bailout but refuse to sell assets in $t=2$ to avoid the stigma.  The presence of these firms undercuts the government's effort to take out the most toxic assets and boost the market reputation of the remaining firms.  This has a long term effect, as we now turn to.

Third,  the government bailout crowds out the private markets in both periods, and makes stimulation short-lived. 
This is caused by  high-type firms'  withdrawal from the $t=2$ market to avoid stigma, which worsens the reputation of firms that \emph{do} participate in the $t=2$ market: they are effectively revealed to be of type $\bar\t_g$ on average. {The high-type firms' behavior allows the stigma to spread to all firms, as noted above, and ends up suppressing the market prices in $t=2$, including those that did not accept the bailout in $t=1$. Naturally, this (anticipated) effect on the asset prices feeds back into the asset market in $t=1$, dampening its price below $p_g$. This stands in sharp contrast to Tirole's static model, in which the government offer $p_g$ above the {\it laissez-faire} price $p_0$ raises the private market offer also above $p_0$, in fact exactly to $p_g$. Quite the contrary, here the government offer crowds out the markets in both periods, causing them to perform worse than even the {\it laissez-faire} benchmark.} In fact, this negative effect is so severe that the  $t=2$ market freezes more than if there were no bailout in $t=1$: only types $\t\le\hat\t$ sell in $t=2$, where importantly $\hat\t<\t_0$.  By comparison, all types $\t\le \t_0$ would have traded in $t=2$ in the absence of bailout (recall Figure \ref{fig:no_bailout}).  Clearly, transparency leads to a strict loss of trade.  Compared with a secret bailout, the volume of trade (and investment) induced under the transparent bailout is the same in $t=1$ but strictly smaller in $t=2$. 
   

The properties identified so far are necessary but not sufficient for SSE.   Specifically, for the existence of SSE, buyers targeting bailout recipients should not gain from raising their offers to attract the boycotters (i.e., types $\t\in [\hat\t,p_g+S]$), and the buyers targeting non-recipients should have no incentives to raise their offers to attract holdouts (i.e., types $\t>p_g+S$) together with the market sellers.  These conditions are formally stated and shown to be sufficient in Online Appendix \ref{Asec:SL}.

  These conditions are not easy to check, so it is difficult to establish the existence of the equilibrium (or its sufficient condition) in a simple manner.  Nevertheless, SSE exist for a range of bailout offers under many common distribution functions $F$.  For example, Figure \ref{fig:uni_ex}-(a) in Section \ref{sec:uniform}) shows (a continuum of) SSE when $F$ is uniform.

 On the other hand, SSE do not exist if $p_g$ is sufficiently high.  More precisely, as stated in  (iii) of Theorem \ref{thm:nodelay}, SSE disappear if $p_g\ge 2 \t_0 - \E[ \t | \t \le \t_0]$.  Roughly speaking, if $p_g$ is sufficiently high, accepting the government offer becomes very attractive, so the measure of firms selling to the market in $t=1$  vanishes.  This induces the buyers in $t=2$ market to target $t=1$ holdouts with a high  offer;  anticipating this, in $t=1$ high-type firms would deviate to hold out instead of selling to the government, undermining the SSE.
One implication of this condition is that the policy maker can  avoid triggering  undesirable SSE by making the bailout offer sufficiently generous---a point that will become clear as we now turn to delayed stimulation equilibria.

	
\subsection{Delayed Stimulation Equilibria (DSE)}	\label{sec:delay}

Delayed Stimulation Equilibria (DSE) have the structure that $\hat\t \le \hat{\t}_g < \t_2$ in the characterization in Lemma \ref{lem:char}, as illustrated in Figure \ref{fig:delay}. We call this \emph{delayed stimulation} equilibrium since much of the stimulation effect materializes in $t=2$.  In particular, the highest type that trades does so in $t=2$.  Types $\t<\hat{\t}_g$ act similarly as in the SSE:  nonnegative fractions $\mu_g$ and $\mu_m$ of types $\t\le \hat{\t}$ sell respectively to the government and to the market, and types $\t\in(\hat{\t}, \hat{\t}_g)$ sell only to the government, where $\hat{\t}\le\hat{\t}_g$.

\begin{figure} [htp]
	\centering
	\begin{tikzpicture}[
	scale = 0.6,
	1p line/.style={thick, blue},
	2p line/.style={thick, darkred},
	reference/.style = {dashed, very thick},
	axis/.style={very thick},
	move/.style={very thick, ->, dashed, shorten <=2pt, shorten >=2pt},
	comment/.style={font=\footnotesize\sffamily},
	comment2/.style={font=\small}
	]
	\fill[fill = red, opacity = 0.5] (9*1.15, 1 -0.15) -- (11.5*1.15, 1 -0.15) -- (11.5*1.15, -1 +0.15) -- (9*1.15, -1+0.15);
	\fill[fill = blue, opacity = 0.5] (11.5*1.15, -1 -0.15) -- (13*1.15, -1 -0.15) -- (13*1.15, -3+0.15) -- (11.5*1.15, -3+0.15);	
	\fill[fill = blue, opacity = 0.3] (0, 1-0.15) -- (9*1.15, 1-0.15) -- (9*1.15, -1 +0.15 + 1.25) -- (0, -1 +0.15 + 1.25);
	\fill[fill = red, opacity = 0.5] (0, -1 +0.15) -- (9*1.15, -1 +0.15) -- (9*1.15, -1 +0.15 + 1.25) -- (0, -1 +0.15 + 1.25);	
	\fill[fill = blue, opacity = 0.5] (0, 1-0.15 -2) -- (9*1.15, 1-0.15 -2) -- (9*1.15, -1 +0.15 + 1.25 -2) -- (0, -1 +0.15 + 1.25 -2);
	\fill[fill = red, opacity = 0.5] (0, -1 +0.15 -2) -- (9*1.15, -1 +0.15 -2) -- (9*1.15, -1 +0.15 + 1.25 -2) -- (0, -1 +0.15 + 1.25 -2);
	
	\draw[reference] (9*1.15, 1) -- (9*1.15, -3);
	
	\draw[axis] (9*1.15, .2) -- (9*1.15, -.2) node[comment, below, xshift = 10pt] at (9, -3) {$\hat\t = \t_0$};	
	\draw[axis] (11.5*1.15, .2) -- (11.5*1.15, -.2) node[comment, below] at (11.5*1.15, -3) {$\hat\t_g$};
	\draw[axis] (13*1.15, .2-2) -- (13*1.15, -.2-2) node[comment, below, xshift = 22.5pt, yshift = -2.7pt] at (13*1.15, -3) {$\t_2 = p_g + S$};	
	
	\node[comment, above] at (4.5*1.15, 1) {$\bar\t_g = \bar\t_m = \E[ \t | \t \le \t_0]$};
	
	\draw[axis] (0, 0) -- (20*1.15, 0) node[comment, left] at (0, 0) {$t = 1$}; \draw[axis] (0, .2) -- (0, -.2); \draw[axis] (20*1.15, .2) -- (20*1.15, -.2);
	\node[comment, below] at (0, -3) {$\t=0$};
	\node[comment, below] at (20*1.15, -3) {$\t=1$};
	
	\draw[axis] (0, -2) -- (20*1.15, -2) node[comment, left] at (0, -2) {$t = 2$}; \draw[axis] (0, -1.8) -- (0, -2.2); \draw[axis] (20*1.15, -1.8) -- (20*1.15, -2.2);
	
	\draw[reference] (11.5*1.15, 1) -- (11.5*1.15, -3);
	\draw[reference] (13*1.15, 1) -- (13*1.15, -3);

	\node[comment2, right, text width = 16.25cm, inner sep= 5pt] at (0-3.25, -2.5-3) {Note: the types selling to the market are depicted by blue and the types selling to the government are depicted by red.};
	
	\end{tikzpicture}
	\caption{Delayed stimulation equilibrium} \label{fig:delay}
\end{figure}

What makes this equilibrium possible is the incentive that  $t=2$ buyers have to offer a sufficiently high price to high-type firms holding out in $t=1$. Such an incentive was absent in  SSE due to a sizable fraction $\mu_m$ of low-type market sellers. These firms cannot be distinguished from high-type hold-out firms and, therefore, would inflict a loss to buyers if they were to raise offers to attract high-type hold-out firms. In DSE, the fraction $\mu_m$ of low-type market sellers is sufficiently small, especially when $p_g$ is large,  so that   $t=2$ buyers do have an incentive to attract high-type holdouts, unlike in SSE.

We now provide the characterization of DSE. 

 \begin{theorem} [\emph{Delayed stimulation} equilibria]  \label{thm:delay}  There is a DSE if $p_g \ge \E[\t| \t\in [\t_0, \gamma(\t_0)]]$, where  $\gamma(\t'):=\sup\{\t''\in [0,1]:\E[\t|\t\in [\t',\t'']]+S\ge \t''\}$ for each $ \t'$.\footnote{In words, $\gamma(\t')$ is the highest type $\t''$ such that types $[\t',\t'']$ can be induced to sell to the market at a break-even price equal to their conditional expectation.}  In the DSE,  $\hat\t=\t_0$, $\hat\t_g \in [\hat\t, p_g+S)$ and $\t_2=p_g+S$ such that the following holds.
 	\begin{description}
 	
	\item[(i)] Types $\t\le \t_0$ sell in both periods, a positive (possibly the entire) measure of which accept the bailout. 
 		
 	\item [(ii)] Among types $\t\le \t_0$, those that sell to the market in $t=1$ (if they exist) receive price $p_0$ in $t=1$ and $p_g$ in the $t=2$ market, and those that accept the bailout sell  assets at price $p_0$ in $t=2$.  Furthermore, these two groups of firms have the same average value of $\E[\t|\t\le\t_0]=p_0$.
 		
   	\item[(iii)]  Types $\t\in (\t_0,\hat{\t}_g]$ sell only in $t=1$ and to the government at price $p_g$.
 	
 	\item [(iv)]  Types $\t\in (\hat{\t}_g, p_g+S]$ sell only in $t=2$ at price $p_g$.  Higher-type firms never sell in either period.
	
	
 	\end{description}      
 \end{theorem}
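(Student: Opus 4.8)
The plan is to take an arbitrary delayed stimulation equilibrium and solve for every cutoff and price from the Bertrand break-even conditions together with the indifference conditions read off Lemma~\ref{lem:char}. By that lemma and the definition of a delayed stimulation equilibrium ($\hat\t_g<\t_2$), the equilibrium has cutoffs $0<\hat\t\le\hat\t_g<\t_2$ with the stated structure. Write $p_m$ for the $t=1$ market price, $p_2^g$ and $p_2^m$ for the $t=2$ prices offered to bailout recipients and to non-recipients, and let $\mu_g,\mu_m$ (with $\mu_g+\mu_m=1$) and $\bar\t_g,\bar\t_m$ be the fractions and average asset values of the types $\t\le\hat\t$ who in $t=1$ sell to the government and to the market, respectively.

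First I would record the break-even and within-group conditions. Market sales in $t=1$ are private, so $p_m=\bar\t_m$. In $t=2$ the pool of observed bailout recipients is the $\mu_g$-fraction of $\{\t\le\hat\t\}$ together with all of $(\hat\t,\hat\t_g]$, and by Lemma~\ref{lem:char} exactly the former group sells; hence the selling cutoff inside that pool is $\hat\t$, which forces $p_2^g+S=\hat\t$, and break-even gives $p_2^g=\bar\t_g$, so $\bar\t_g+S=\hat\t$. Indifference of types $\t\le\hat\t$ between selling in both periods through the government and through the market gives $p_g+p_2^g=p_m+p_2^m$ (when both $\mu_g,\mu_m>0$), and the feasibility identity is $\mu_g\bar\t_g+\mu_m\bar\t_m=\E[\t|\t\le\hat\t]$. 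The remaining ingredient — and the heart of the argument — is that $p_2^m=p_g$: when $\hat\t<\hat\t_g$ this is immediate, since just below $\hat\t_g$ firms accept the bailout in $t=1$ and withdraw at $t=2$ (payoff $p_g+S+\t$ in the $\de\to1$ limit) while just above $\hat\t_g$ they hold out in $t=1$ and sell at $t=2$ (payoff $\t+p_2^m+S$), and two affine schedules with the same slope that change places at $\t=\hat\t_g$ must coincide, so $p_g=p_2^m$. Likewise, indifference of the top selling type $\t_2$ between holding out then selling ($\t_2+p_2^m+S$) and never selling ($2\t_2$) gives $\t_2=p_2^m+S$.

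Then I would just run the chain. From $p_m=\bar\t_m$, $p_2^m=p_g$ and $p_g+p_2^g=p_m+p_2^m$ we get $p_2^g=\bar\t_m$, hence $\bar\t_g=\bar\t_m$; feeding this into the feasibility identity gives $\bar\t_g=\bar\t_m=\E[\t|\t\le\hat\t]$; and substituting into $\bar\t_g+S=\hat\t$ yields $\E[\t|\t\le\hat\t]+S=\hat\t$. Since $\t_0$ is, by \eqref{eq:theta0}, the unique solution of $x-S=\E[\t|\t\le x]$ (unique because $f$ is log-concave), this forces $\hat\t=\t_0$, so $\bar\t_g=\bar\t_m=\t_0-S=\E[\t|\t\le\t_0]=p_0$, $p_m=p_2^g=p_0$, $p_2^m=p_g$, and $\t_2=p_2^m+S=p_g+S$. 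Everything claimed then falls out: (i)--(iii) are the structure of Lemma~\ref{lem:char} evaluated at $\hat\t=\t_0$ with the government price $p_g$, once $\mu_g>0$ is known (which I would obtain from a buyer no-deviation argument in the $t=2$ recipient market, plus the exclusion — using $p_g>p_0$ — of an equilibrium in which no firm accepts the bailout); the location $\hat\t_g\in[\hat\t,p_g+S)$ is $\hat\t_g\ge\hat\t$ from Lemma~\ref{lem:char} and $\hat\t_g<\t_2=p_g+S$ from the definition of delayed stimulation; and (iv) combines Lemma~\ref{lem:char} with $\t_2=p_g+S$ and the $t=2$ non-recipient price $p_2^m=p_g$.

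The step I expect to be the main obstacle is the treatment of the degenerate configurations in which the clean arguments above break down: $\hat\t=\hat\t_g$ (the ``accept the bailout and withdraw'' segment is empty, so the affine-crossing argument for $p_2^m=p_g$ has no cutoff to exploit) and $\mu_m=0$ (no low type sells to the market, so the within-group indifference that delivers $\bar\t_g=\bar\t_m$ disappears), as well as their combination. In these cases one must replace those conditions by parallel deviation arguments; this is where the auxiliary monotonicity assumption on $2\E[\t|a<\t<b]-\E[\t|\t\le a]$ and a careful choice of off-path beliefs are needed, and where the bookkeeping is heaviest. A minor additional point, already subsumed by the cutoff structure of Lemma~\ref{lem:char}, is ruling out non-cutoff (multiple-price) behavior inside each $t=2$ sub-market.
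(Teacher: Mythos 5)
Your chain for the generic configuration ($\hat\t<\hat\t_g$ with $\mu_g,\mu_m>0$) is correct and is in substance the paper's own argument: indifference of type $\hat\t_g$ between (bailout, withdraw) and (hold out, sell in $t=2$) gives $p_2^m=p_g$; the marginal late seller gives $\t_2=p_2^m+S=p_g+S$; and no arbitrage plus zero profit plus the feasibility identity collapse to $\E[\t|\t\le\hat\t]+S=\hat\t$, hence $\hat\t=\t_0$ and $p_m=p_2^g=p_0$. Your use of the $t=2$ cutoff inside the recipient pool ($p_2^g+S=\hat\t$) in place of the paper's $t=1$ indifference ($p_m+S=\hat\t$) is an immaterial variation once $p_2^g=p_m$ is in hand.

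The gap is precisely the part you defer, and it is not a corner case: the configurations $\hat\t=\hat\t_g$ and $\mu_m=0$ occupy most of the paper's written proof (its Claims 4--6), and the toolkit you anticipate for them is off target. Neither a choice of off-path beliefs nor the auxiliary monotonicity assumption on $2\E[\t| a<\t<b]-\E[\t|\t\le a]$ is used here (that condition enters only in Theorem \ref{thm:nodelay}(iii)); what is needed is a concrete price-deviation argument, which works because $t=1$ market sales are private, so a deviating $t=1$ offer does not move $t=2$ beliefs. Concretely, when $\hat\t=\hat\t_g$: (a) $\mu_g=0$ is impossible, since then nothing is observed in $t=1$ and $p_2^m=p_0$, contradicting $p_2^m\ge p_g>p_0$, where $p_2^m\ge p_g$ comes from the late sellers in $(\hat\t_g,\t_2)$ weakly preferring their strategy to taking the bailout and withdrawing; (b) to get $p_2^m=p_g$, suppose $p_2^m>p_g$; no arbitrage then gives $p_2^g>p_m$, hence $\bar\t_g>\bar\t_m$ and so $p_m=\bar\t_m<\E[\t|\t\le\hat\t]$, while type-$\hat\t$ indifference between selling in both periods and selling only in $t=2$ gives $\hat\t=p_m+S$; together these force $\hat\t<\t_0$ and $p_m<p_0$, whereupon a $t=1$ buyer offering any $p'\in(p_m,p_0)$ attracts all types $\t\le p'+S$ and earns a strict profit, a contradiction; (c) when $\mu_m=0$, the no-arbitrage equation is replaced by the indifference of type $\hat\t$ between (bailout, then sell at $p_2^g$) and (sell only in $t=2$ at $p_2^m=p_g$), which with $p_2^g=\E[\t|\t\le\hat\t]$ again yields $\hat\t=\t_0$. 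Until these arguments are supplied, part (i) (positivity of the measure of types below $\t_0$ accepting the bailout) and the $\hat\t_g=\hat\t$ branch of the cutoff characterization remain unproven in your write-up.
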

 
 \begin{proof}  See Appendix \ref{app:thm4}.   \end{proof}

DSE are similar to SSE in two ways.  First, just as in SSE, firms suffer from accepting a bailout in the $t=2$ market. That is, the market in $t=2$ offers a strictly lower price to the bailout recipients than those that did not accept the bailouts, which include the firms that sold to the market in $t=1$ and those that held out.  Also as with SSE, this differential treatment of bailout recipients relative to initial market sellers in the $t = 2$ market can occur in equilibrium only if it is counterbalanced by the opposite treatment of these two groups in $t=1$.  Specifically, a form of the no-arbitrage condition must hold for the firms that sell in both periods: 
      $$p_g + p^g_2 = p_m + p^m_2,$$
where $p_m$ is the equilibrium market price for asset in $t=1$, and $p^g_2$ and $p^m_2$ are  the equilibrium $t=2$ market offers to the recipients and the non-recipients of the bailout, respectively.   {In short, those selling in both periods must receive the same total payment.}

Despite these similarities, there is a crucial difference, pertaining to the level of  $p^m_2$.  Recall that in SSE, high-type firms accept the bailout in $t=1$ but withdraw from the market; this means that those   participating in the $t=2$ market tend to have low value, which ``pushes down'' the $t=2$ market prices (both $p^g_2$ and $p^m_2$) to a level below $p_g$.  Hence, as captured by the above no-arbitrage equation and Theorem \ref{thm:nodelay}, in SSE, bailout stigma spreads to all firms selling in both periods, with the consequence that market prices for all assets fall below the {\it laissez-faire} level $p_0$ in both periods. 

By contrast, in DSE, high-type firms hold out in $t=1$ and sell to the $t=2$ market, and their participation in the $t=2$ market ``pushes up'' its price $p^m_2$ for the non-recipients of the bailout.  In fact, equilibrium dictates that $p^m_2$ must be at least $p_g$, or else the $t=1$ holdouts would rather sell to the government at $p_g$ in $t=1$ and withdraw from the $t=2$ market.  {In fact, $p^m_2$ must equal $p_g$, as stated in part (ii). To explain why, consider a simple case of DSE with $\mu_m = 0$; that is, every firm with $\t \le \hat\t$ sells to the government in $t = 1$.  (The proof treats all cases.)  First of all, we must have $p^m_2 \ge p_g$, or no firm would ever sell only in $t=2$ since it would be strictly better to sell to the government in $t=1$ and withdraw from the market in $t=2$.  Suppose next $p^m_2> p_g$.  Then, by the indifference of type $\hat\t$, we have
$$ \hat\t = p^g_2 + S - (p^m_2 - p_g) < p^g_2 + S.$$
Since $p^g_2 = \E[\t | \t \le \hat\t]$ (by  $t = 2$ buyers' break-even condition), we then have $\hat\t < \E[\t | \t \le \hat\t] + S$, or equivalently, $\hat\t < \t_0$. If a $t = 1$ buyer offers a price $p' \in (p^g_2, p_0)$, all types $\t \le \hat\t$ will sell to the deviating buyer because $p' + p^m_2 > p^g_2 + p_g$. Furthermore, the highest type $\t'$ that sells to the market over two periods will be determined by $\t' + p^m_2 + S = p' + p^m_2 + 2S$, or equivalently, $\t' = p' + S$. By definition of $\t_0$, however, we have $\E[ \t | \t \le \t'] - p' > 0$, so the deviating $t = 1$ buyer can make a profit, a contradiction.}

The high price for the non-recipients in $t = 2$ in turn incentivizes high-type firms to hold out, validating the premise of the DSE.  More interestingly,  the high price $p^m_2=p_g$ is not just enjoyed by the $t=1$ holdouts but also by the $t=1$ market sellers.  This feature keeps  the bailout stigma from  spreading to the $t=1$ market sellers.  Quite the contrary, their good reputation ``propped up'' by the $t=1$ holdouts  mitigates the stigma suffered by the bailout recipients.  This can be seen in the above no-arbitrage equation.  Substituting $p^m_2=p_g$ in that equation yields $p^g_2 = p_m$: the bailout recipients enjoy the same price in the $t=2$ market as the price that prevails in the $t=1$ market.  This is why, unlike SSE,  the {\it laissez-faire} price $p_0$ can be supported for both $t=1$ market as well as for the $t=2$ markets.\footnote{Several steps are needed to arrive at this conclusion.  First, since the buyers must break even, the observation $p^g_2 = p_m$ earlier implies that the average types of bailout recipients and $t=1$ market sellers must be identical and thus equal to $\E[\t |\t \le \hat\t]$.  This, together with the fact that the marginal type $\hat \t$ must be indifferent to selling in both periods versus selling only in one period gives rise to $\E[\t |\t \le \hat\t]+S=\hat \t$, the condition that defines the {\it laissez-faire} cutoff $\t_0$ in  (\ref{eq:theta0}). It now follows immediately that $p^g_2 = p_m=\E[\t |\t \le \t_0]= p_0$.}  In a nutshell, the high-type firms' holdout decision  keeps bailout stigma from freezing the markets. Although the government offer does not boost the private market offer as in \cite{tirole2012overcoming}'s static setup, it does not suppress the market offer, unlike in SSE.

	The above discussions lead to the following key observation.

\begin{corollary}  \label{cor1}  Each DSE is equivalent to an equilibrium under secret bailout (described in Theorem \ref{thm:secrecy}) in total volume of asset sales---and thus in total investments undertaken by firms.  The total volume of asset sales in any DSE exceeds that in SSE.
\end{corollary}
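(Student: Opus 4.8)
The plan is to compute, in each of the three regimes, the total quantity of assets traded across the two periods directly from the cutoff characterizations already established, and then compare the three numbers. The accounting convention is that each firm owns one unit of the asset per period, buyers never ration, and each sale in a period funds exactly one project of size $I$; so total volume is a sum of $F$-measures of type intervals and total investment is $I$ times total volume. For the secret-bailout benchmark, Theorem~\ref{thm:secrecy} says that all types $\t<p_g+S$ sell in $t=1$, all types $\t<\t_0$ sell in $t=2$, and there is no other trade, so the total volume is $F(p_g+S)+F(\t_0)$.

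Next I would handle a delayed stimulation equilibrium using Theorem~\ref{thm:delay}, which partitions $[0,1]$ by the cutoffs $\hat\t=\t_0\le\hat\t_g<\t_2=p_g+S$: by (i)--(ii) types in $[0,\t_0]$ sell in \emph{both} periods; by (iii) types in $(\t_0,\hat\t_g]$ sell only in $t=1$; by (iv) types in $(\hat\t_g,p_g+S]$ sell only in $t=2$; and higher types never sell. Summing the units,
\[
2F(\t_0)+\bigl(F(\hat\t_g)-F(\t_0)\bigr)+\bigl(F(p_g+S)-F(\hat\t_g)\bigr)=F(\t_0)+F(p_g+S),
\]
so $\hat\t_g$ telescopes out and the total coincides with the secret-bailout total (the degenerate case $\t_0=\hat\t_g$, in which the middle interval is empty, is covered by the same arithmetic). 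Equality of total investment follows immediately, giving the first sentence of the corollary.

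For a short-lived stimulation equilibrium I would combine Lemma~\ref{lem:char} (with $\hat\t_g=\t_2$) and Theorem~\ref{thm:nodelay}: types in $[0,\hat\t]$ sell in both periods, types in $(\hat\t,\hat\t_g]$ sell only in $t=1$ with $\hat\t_g=p_g+S$, and no type sells only in $t=2$. Hence the total volume is
\[
2F(\hat\t)+\bigl(F(p_g+S)-F(\hat\t)\bigr)=F(\hat\t)+F(p_g+S).
\]
Since Theorem~\ref{thm:nodelay}(ii) gives $\hat\t<\t_0$, we have $F(\hat\t)<F(\t_0)$, and comparing with $F(\t_0)+F(p_g+S)$ at the same $p_g$ (so the common term $F(p_g+S)$ cancels) yields the strict inequality.

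There is no substantive obstacle: once the three characterization results are invoked, the argument is pure bookkeeping. The only points to watch are the accounting convention --- the delayed-stimulation count double-counts $[0,\t_0]$ whereas the secret-bailout count tallies $[0,p_g+S]$ and $[0,\t_0]$ once each, and these happen to agree --- and the fact that the strict-inequality comparison should be read with $p_g$ held fixed, so that only the $F(\hat\t)$-versus-$F(\t_0)$ gap drives it.
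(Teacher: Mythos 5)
Your proposal is correct and follows essentially the same route as the paper: compute the total volume $F(\t_0)+F(p_g+S)$ for delayed stimulation from Theorem \ref{thm:delay}, note it equals the secret-bailout total from Theorem \ref{thm:secrecy}, and compare with the short-lived total $F(\hat\t)+F(p_g+S)$ using $\hat\t<\t_0$ from Theorem \ref{thm:nodelay}. Your explicit telescoping and treatment of the degenerate case $\hat\t_g=\t_0$ merely spell out the same bookkeeping the paper states directly.
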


\begin{proof} See Appendix \ref{app:cor1}.
\end{proof}

One interesting, and perhaps surprising, implication of this result is that delays in the effect of a bailout should not be viewed as a policy failure, at least when compared to the secret bailout benchmark. Take a possible equilibrium where $\hat{\t}_g$ is very low; in fact, one can show that   $\hat{\t}_g=\t_0$ can be supported if $p_g$ is not too high  (again such an equilibrium arises due to a large mass of high type firms holding out in $t=1$). In such an equilibrium, it may appear at $t=1$ that  bailouts have no impact, since trading and investment activity have not changed after the government offer.  The impression would be that government purchases ``crowded out'' private purchases, since a positive measure of firms with $\t\le\t_0$ sell to the government at a higher price $p_g$ than $p_0$, the price they would have sold at in the absence of bailouts.  Indeed, in the wake of the Great Recession, such a sentiment prevailed following the apparent lack of response by banks to the first wave of stimulation policies.\footnote{In fact, as pointed out by \citet{bolton2009market}, LIBOR-OIS spreads did not decrease following the implementation of the Fed's emergency lending programs (such as the PDCF and the TSLF) during the 2007 -- 2009 Great Recession. \citet{bolton2011outside} also argued that the public liquidity programs, if implemented at a bad timing, will only crowd out the private liquidity supplied from the financial market.  As can be seen, our theory provides a different perspective on the same phenomenon.}


However, our result suggests that holdout by high-type firms 
is a blessing in disguise. The presence of these holdout firms is precisely what leads the market to make attractive offers to those that did not accept the bailout.  Indistinguishable from these holdout firms, those that actually sold to the market in $t=1$ also receive attractive offers, thus overcoming the adverse selection endemic to the problem. The alleviation of adverse selection for these firms in turn creates ``collateral benefits'' to those that accept the bailout, since no arbitrage means that they can also overcome stigma. Consequently, the increase in trade volume in $t=2$ more than makes up for the initial under-response, when compared with SSE.    

To an outsider observer, bailouts could very well be seen as ``working'' mysteriously here. Paradoxically, the policy invigorates the market by allowing high-type firms to send a strong signal on their financial strength by ``rejecting'' the bailout offer. This indirect signaling creates the desired stimulation effect later.  As Corollary \ref{cor1} suggests, the opportunity to ``reject" a bailout turns out to be more profitable than ``accepting" one. 


To establish the existence of DSE, it is necessary to check that buyers in each period do not have an incentive to deviate from the prescribed equilibrium strategies on every equilibrium path. Just as with SSE, it is difficult to check these ``no-deviation-by-buyers'' conditions.\footnote{Online Appendix \ref{Asec:DS} presents these necessary conditions and show them to be sufficient for equilibrium.}  Nevertheless, the equilibrium exists for a sufficiently large  $p_g$, as stated in Theorem \ref{thm:delay}.  For example, the equilibrium always  exists under the uniform distribution, as we show in  Section \ref{sec:uniform}. 
	
	
	Corollary \ref{cor1} also shows that, if a $p_g$ leads to a DSE under the transparent bailout, it yields the same total volume of trade as under a secret bailout. From this, one may conclude that  bailout stigma is not of economic concern in a DSE.  However, there are two caveats to this conclusion.  First, there is a possibility of multiple equilibria, as we show in the next section.  That is, the same $p_g$ may also support an SSE, which is clearly undesirable to a policy maker, as discussed previously.  In order to avoid the selection of such an equilibrium, a policy maker may have to raise $p_g$ beyond the lower bound identified in Theorem \ref{thm:nodelay}-(iii).  This, together with Theorems \ref{thm:delay}, suggests that a delayed equilibrium can be uniquely implemented by an offer $p_g\ge \max\{2\t_0 -\E[\t|\t\le \t_0], \E[\t| \t\in [\t_0, \gamma(\t_0)]]\}$.  Second, a delay in and of itself may be undesirable for reasons not modeled in our theory. For instance, prompt revitalization of economic activities often has external benefits for the rest of the economy.  In particular, if we consider financial institutions investing in the rest of the economy, a prompt restoration of their activities will have a positive spillover effect, and from this perspective delayed stimulation can be harmful. For these reasons, one may view the delay itself as a cost of stigma.	
	
 While it is difficult to find direct evidence on our equilibria, 
what transpired in the aftermath of the 2007--2009 Great Recession are somewhat consistent with our DSE. For instance, 
firms outright rejected rescue offers made by the government ostensibly to signal their financial strength, as mentioned in the introduction. In particular, our theory suggests that firms with high-quality assets are more likely to hold out when bailout terms are more generous (recall the difference between SSE and DSE.)  Indeed, there is some evidence that firms with high-quality assets participated less in programs that  were more generous. \citet{krishnamurthy2014sizing} find that, during the Great Recession, dealer banks with a large share of agency collaterals  (i.e., collaterals guaranteed by the US government) rarely participated in the Primary Dealer Credit Facility (PDCF) despite their favorable funding rates, although they did participate in the Term Securities Lending Facility (TSLF) along with banks with a large share of non-agency collaterals (i.e., collaterals not guaranteed by the US government).
	
	
\begin{remark}[The ``full'' freeze case: $\t_0 = 0$] \label{rem:full-freeze} \rm We have so far implicitly assumed  $\t_0>0$, for ease of exposition.  But our equilibrium characterizations from Theorems \ref{thm:nodelay} and \ref{thm:delay} extend even to the case of $\t_0 = 0$; that is, when the market would freeze completely absent any bailout.   In this case, our characterizations imply that $\hat \t=0$, and this leads to existence of a unique equilibrium.  Given $p_g \ge I$, the equilibrium admits a threshold  $\hat\t_g\in (0, p_g+S )$such that  types $\t \in [0, \hat\t_g]$ sell to the government in $t = 1$ but do not sell  in $t=2$, and types $\t \in (\hat\t_g, p_g+S ]$ hold out in $t=1$ but sell in $t=2$ at price 
$p_g$, where $\hat\t_g$ satisfies $\E[\t|\hat \t_g\le \t\le p_g+S]=p_g$.  Types $\t> p_g+S$ sell in neither period.
One can think of this  as a form of DSE: a bailout does not revive market in $t=1$ but induces  delayed trading in $t=2$.  In keeping with Corollary \ref{cor1}, the equilibrium induces the same trade volume as a secret bailout would. 
\end{remark}

\subsection{Uniform Distribution Example}	\label{sec:uniform}

To gain a better understanding about the two types of equilibria and their possible coexistence, it is useful to exhibit them in full detail for some concrete parameter values.  Assume uniform distribution $F(\t) = \t$, along with $S = 1/4$ and $I = 1/10$. We choose these parameter values for the sake of exposition, as shown in Figure \ref{fig:uni_ex} below. However, the qualitative results shown in Figure \ref{fig:uni_ex} hold more generally for $S < 1/2$ (so that $\t_0 = 2S < 1$) and $I < p_0 = S$.

\begin{figure}[htb]
	\centering
	\begin{subfigure}[b]{0.5\textwidth}
	\includegraphics[scale=0.6]{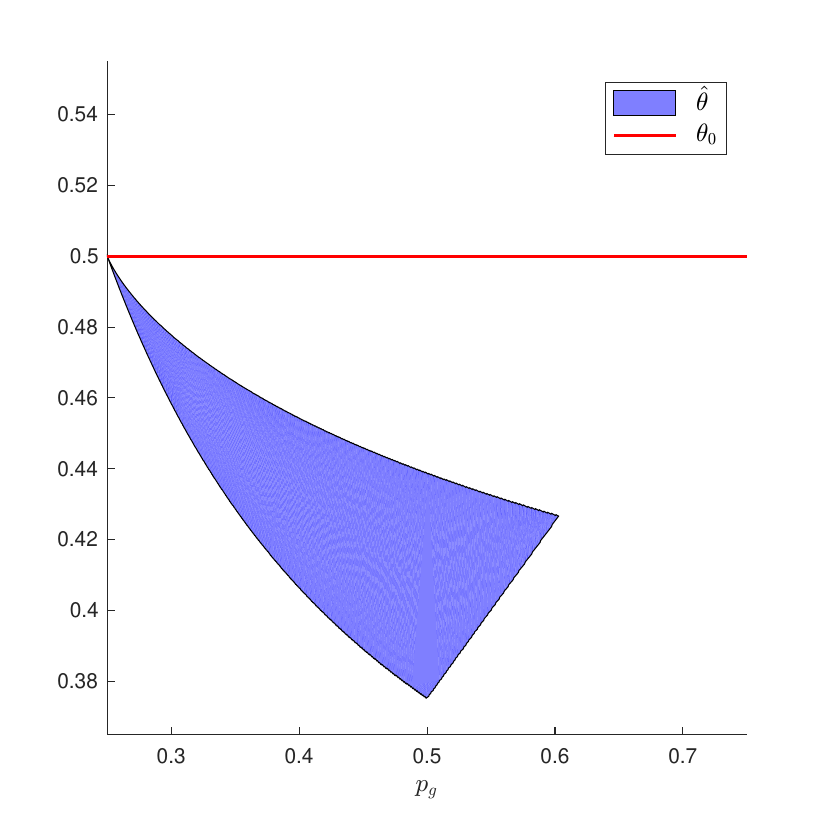}
	\caption{$\hat\t$'s in short-lived stimulation equilibria} \label{fig:uni_nodelay}
	\end{subfigure}
	\centering
	\begin{subfigure}[b]{0.48\textwidth}
	\includegraphics[scale=0.6]{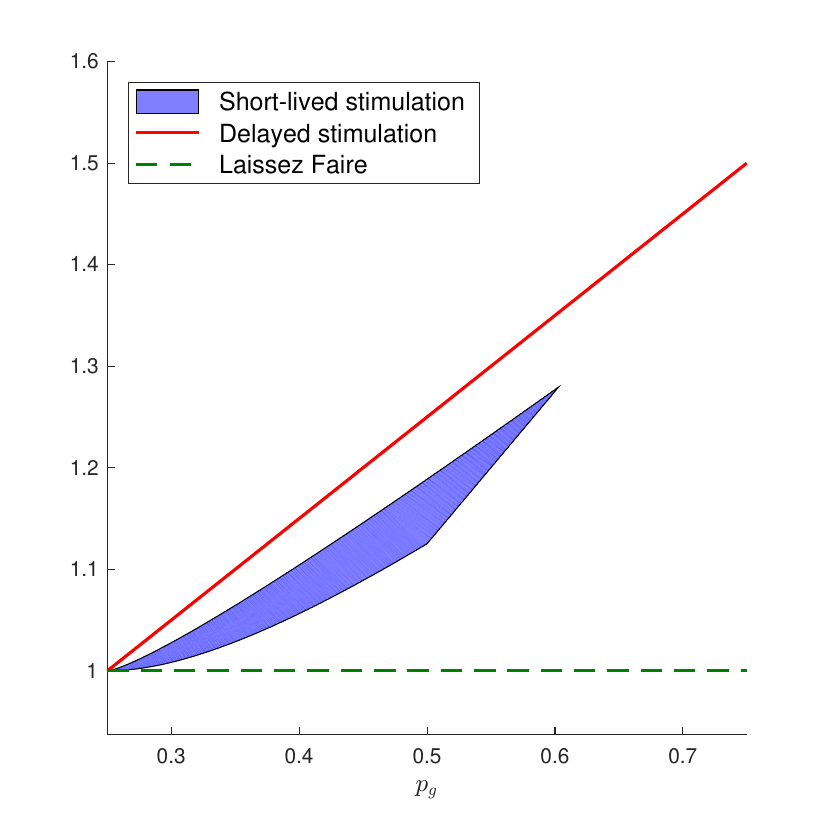}
	\caption{Overall trade in both types of equilibria} \label{fig:uni_trade}
	\end{subfigure}
	\caption{Equilibrium outcome in the uniform example ($S = 1/4$, $I = 1/10$)} \label{fig:uni_ex}
\end{figure}

The left panel (a) of Figure \ref{fig:uni_ex}  depicts the set of threshold types $\hat \t$  supported in SSE for various bailout terms $p_g$.  The corresponding threshold type for DSE is $\t_0$, as indicated by the straight red line. 
 The right panel (b) of Figure  \ref{fig:uni_ex} depicts the total volume of trade  induced by SSE (blue area) and DSE (red line) for differing levels of $p_g$.

This explicit example leads us to several interesting observations.  First, as can be seen clearly by the blue area,  {for a range of $p_g$'s, there is a continuum of SSE that induce different threshold values of $\hat \t$ for each $p_g$ in that range. } Note also that SSE exist for $p_g> p_0=S=1/4$, but do not exist when $p_g$ is sufficiently high.  Specifically, SSE exist only for $p_g<0.603$, which is well below the upper bound $2\t_0-\E[\t|\t\le \t_0]=0.75$ identified in Theorem \ref{thm:nodelay}-(iii).  Although not shown in the figure,  typically multiple DSE exist, but all of them induce the same threshold $\hat \t=\t_0$ and the same total trade volume, as formally stated in Theorem \ref{thm:delay}.

Second, both types of equilibria coexist for a range of bailout offers, as can be seen in the figure.  This multiplicity reflects the endogenous nature of equilibrium belief formation. Given a $p_g$, suppose a large measure of high-type firms accept a bailout but boycott the $t=2$ market to avoid stigma. This causes buyers to adjust down their offers in the $t=2$ market, in turn validating the firms' decision not to hold out in $t=1$. An SSE then arises.  By contrast, if a large measure of high-type firms hold out in $t=1$ for the same $p_g$, then  buyers in $t=2$ make high offers to attract them, which in turn validates their decision to hold out, leading to a DSE.

Third, as can be seen from  Figure \ref{fig:uni_ex}-(b), the effect of bailout can be discontinuous with respect to the terms of bailout.  Suppose the policy maker raises the bailout term $p_g$ starting from a low value close to $p_0=1/4$. At first, an SSE may arise.  As $p_g$ rises past $0.603$, however, SSE disappear. Instead, the market shifts to a DSE, characterized by an increase in trade volume at a delay. As mentioned before, one policy implication is that the policy maker may need to choose an attractive bailout offer ($p_g>0.603$ in this example) in order to avoid the selection of less desirable SSE.

Lastly, despite the stigma, a bailout with $p_g>p_0$ boosts overall trade regardless of the types of equilibria selected.  As shown in Figure \ref{fig:uni_ex}-(b), the total trade volume  in SSE is strictly higher than $2F(\t_0) = 1$, the total trade volume in the benchmark without bailout. This suggests that the positive effect of bailout on asset trading in $t = 1$---that is, every type $\t \le p_g + S$ sells in $t = 1$---outweighs the negative effect of stigma on asset trading in $t = 2$.   Further, as can be seen in Figure \ref{fig:uni_ex}-(b), the total trade volume in SSE tends to increase with $p_g$, so a more generous bailout tends to have a higher stimulation effect. 
	
	


\section{Cost of Bailouts and the Optimal Bailout Policy}	\label{sec:wel}

	So far, we have abstracted from the cost of bailouts, which is nevertheless a crucial element of policy debates.  In this section, we evaluate the effect of bailout programs while explicitly accounting for the cost of bailouts. To model the cost, we follow the literature (\cite{tirole2012overcoming}, \cite{chiu2016trading}) and assume that raising a dollar of public funds in use for the asset purchase program costs the society $(1+\l)$ dollars, where $\l\ge 0$ represents the deadweight loss of raising public funds, such as distortionary taxation. While  a budget deficit entails  social cost, we assume that a budget surplus does not contribute to any direct social welfare.  This is in keeping with the primary purpose of  government intervention here, which is to bail out distressed firms and to support socially desirable projects, and not to profit from the intervention itself.\footnote{\label{fn:no-profit}  This could be further justified by the assumption that profits from asset purchase are not easily converted to fund other government projects. If a budget surplus generates the commensurate welfare benefit---$\l$ per dollar---, the results below are still valid as long as $\l\ge\hat \t$.  }

	
	
	To evaluate the welfare of an equilibrium resulting from government bailouts, we adopt a mechanism design perspective.  In so doing, we allow for the policy maker to offer a menu of (arbitrarily numerous) bailout programs, not just a single purchase offer, and to adopt any information policy with regard to the identities of the firms that accept the programs.  Offering multiple bailout programs is realistic and policy relevant. During the Great Recession, numerous programs were introduced to help out financially distressed firms, which varied in bailout terms and generosity.  
	Similarly, information is an important tool for the policy maker; the policy maker may craft the disclosure policy to fine-tune the degree to which the identity of a bailout recipient is revealed to the market.  
	
	While being fully general in these two dimensions, in keeping with our baseline model, we maintain the following assumptions:  (a) the government programs are offered only in $t=1$; (b) the programs consist of a menu $\mathcal P_g$ of asset purchase offers  such that $p_g
	  \in [p_0, \overline P]$ for all $p_g\in \mathcal P_g$, for some arbitrarily high level $\overline P$, so that offered prices are at least as attractive as the {\it laissez-faire} price;\footnote{Recall that the assumption that government offers take the form of asset purchase programs is without loss, given that the investment returns are not pledgeable.  
	  } and (c) there is no rationing, meaning any (eligible) firms may choose any program made available by the government.  

	 Formally, we represent the equilibrium outcome of a  bailout policy by a pair of mappings, $(Q, T): [0, 1] \to \set{0, 1, 2}\times\mathbb{R}$,  where $Q(\t)\in \set{0,1,2}$ is a type-$\t$ firm's total asset sale across the two periods and $T(\t)$ is the total transfer it receives across the two periods in equilibrium.   {Naturally,  there exist $t=1$ and $t=2$ transfer functions $t_1(\t)$ and $t_2(\t)$ of $\t$, measurable with respect to $F(\cdot)$, such that $T(\t) = t_1(\t) + t_2(\t)$. } This transfer includes payment from both the government (if the firm accepts a bailout) and private buyers (if it sells to the market in either period).  
	  The fact that $Q$ is nonfractional reflects the no-rationing assumption mentioned earlier.   We can view these mappings as a \emph{direct mechanism} that implements a social choice as a function of report on the firm's type $\t$.  
	  
	  For the mechanism  $(Q, T)$ to represent an equilibrium outcome, it must satisfy incentive compatibility and participation constraints.  It is worth emphasizing that these constraints are required only for the total two-period allocation and transfers $(Q, T)$. In general, depending on the information  the policy maker makes available about the recipients of bailout programs in $t=1$,  further incentive constraints may be necessary for firms' $t=2$ payoffs. 
	  By ignoring these additional constraints, we focus only on the necessary incentive constraints that must hold regardless of the information policy. In this sense, we allow for a fully general information policy.\footnote{\label{fn:inscrutability} The preceding observation already implies that total secrecy,  which would obviate the need for additional incentive constraints for $t=2$, is at least weakly optimal. This follows from the standard insight in mechanism design:  secrecy, or ``no information,'' is tantamount to the pooling of incentive constraints across states, which relaxes the incentive constraints  by requiring them to hold only on average rather than state by state.} 
	  Given our incentive and participation constraints, one can invoke the celebrated \emph{revenue equivalence} or envelope theorem to characterize the welfare of a particular equilibrium via the trade volume and a payoff for a reference type (e.g., the highest type) induced by the equilibrium.  However, the additional constraints (a) and (b) make the standard method inapplicable.  The following characterization, which reflects (a) and (b), is crucial for our analysis. 
	
\begin{lemma} \label{lem:wel_fn}   Fix any equilibrium outcome  $(Q, T)$ in which a positive measure of firms accept some government bailout program.  Then,

\begin{description}
	
		\item [(i)] the total trade satisfies	\begin{align} \label{eq:general-Q}
	Q(\t) =  \begin{cases}
	2   & \mbox{ if } \t\le \hat\t,  \\  
	1
	&   \mbox{ if }  \t\in (  \hat\t , \t_2],\\
	0  & \mbox{ if }  \t> \t_2, \end{cases}
	\end{align} 
	for some  $\hat\t\le \t_0$ and $\t_2\ge \t_0$; and
	\item[(ii)] the social welfare is:
	\begin{align} \label{eq:wel}
	\int^1_0 \left\{2  \t+ \left(S(1+\l)  - \l \frac{F(\t)}{f(\t)}\right)Q(\t) -\lambda (\overline u-2)\right\}  f(\t) d\t,
	\end{align}
	for some payoff $\overline{u} \ge 2$ accruing to the highest-type ($\t = 1$). 
\end{description}	
\end{lemma}
\begin{proof} See Appendix \ref{app:lem2}.
 	\end{proof}
Part (i) reflects the standard monotonicity condition necessary for incentive compatibility, in the presence of single crossing, as was the case with Lemma \ref{lem:char}.  The fact that the equilibrium trade volume takes a particularly simple form follows from (c), that is, the lack of rationing.  The  crucial part is that $\hat\t\le \t_0$---or $Q(\t)\le 1$ for all $\t\ge \t_0$. This property, which reveals a fundamental limit to the effect of trade stimulation, follows from (a). Recall from Theorems \ref{thm:nodelay} and \ref{thm:delay} that both  types of equilibria in {Section} \ref{sec:gov} also exhibit the same limit; the claim here is that the limit applies even when the government offers  a menu of bailout programs.  Lastly, the fact that $\t_2\ge \t_0$, which follows from (b),  simply means that the government will never gain from suppressing the trade below the {\it laissez-faire} level.  

Part (ii) shows that  the welfare under a general bailout equilibrium consists of three terms.  The first term, $2\t$, is simply the value of a type-$\t$ asset; recall that each firm owns 2 units of such an asset, and its value does not depend on the eventual owner of that asset. The second term $\left(S(1+\l)  - \l \frac{F(\t)}{f(\t)}\right)Q(\t)$ describes the social welfare associated with inducing a trade $Q(\t)$ for a type-$\t$ firm.  Sale of an asset by a type $\t$-firm yields social benefit of $S$, the return to the project enabled by the sale.  Meanwhile, it creates a budget shortfall of $\frac{F(\t)}{f(\t)}-S$, which inflicts a cost of  $\l(\frac{F(\t)}{f(\t)}-S)$ to society.\footnote{\label{fn:shortfall} To check the marginal budget shortfall $ \frac{F(\t)}{f(\t)}-S$ caused by a sale by the type-$\t$ firm, note that $$ \int_{0}^1(T(\t)-\t Q(\t))f(\t)d\t=\overline{u}-2+ \int_{0}^1 \left(\frac{F(\t)}{f(\t)}-S \right)Q(\t)f(\t)d\t,$$
	where we used the envelope theorem to substitute for $T(\t)$.   Intuitively, the inverse hazard term,  $\frac{F(\t)}{f(\t)} Q(\t)$,  is the incentive cost that is required to induce type-$\t$ firm to sell an additional unit of its asset:  since any sale by type $\t$ can be mimicked by all lower types,  these types must be paid rents to prevent their mimicking.  The project return  $S$ acts as an incentive for sale, and thus mitigates the budget shortfall and saves the subsidy needed to induce a sale.}
Finally, a rent of $\overline{u}-2$ accrues to the highest-type ($\t=1$) firm (namely, above its asset value 
$2$) and adds to the budget shortfall, inflicting a social cost of $\l (\overline{u}-2)$.\footnote{{By the standard envelope theorem reasoning, any rent enjoyed by the highest type is enjoyed by  ``all'' firm types. Effectively, such a rent amounts to a pure cash transfer.  Even though we do not formally include pure subsidy as a possible policy tool, our model implicitly allows for such a possibility.  Evidently, pure subsidy is never optimal here as it is dominated by asset purchase which entails a lower deficit.}}  

If no firm accepts a government program in equilibrium, then the outcome is the same as the no-bailout benchmark described in Section \ref{sec:benchmark}. The resulting trade rule $Q$ is still  described by (\ref{eq:general-Q}) with $\hat{\t}=\t_2=\t_0$, and the associated welfare is accurately characterized by (\ref{eq:wel}).\footnote{In that case, the budget shortfall reduces to zero, so the terms multiplied by $\l$ vanish.  See footnote \ref{fn:shortfall}.}



Inspecting the welfare objective in (\ref{eq:wel}), it is immediate that the rents accruing to the highest type serves no purpose; they only entail social cost without stimulating any trade. Hence, the optimal policy has $\overline u=2$; this is accomplished by limiting the generosity of offers.  In light of Lemma \ref{lem:wel_fn}, the mechanism design problem is therefore reduced to finding two cutoffs $\hat \t$ and $\t_2$ that solve: 

$$\max \,  2\int_0^{\hat{\t}} \left(S(1+\l)  - \l \frac{F(\t)}{f(\t)}\right)f(\t)d\t + \int_{\hat{\t}}^{\t_2} \left(S(1+\l)  - \l \frac{F(\t)}{f(\t)}\right)f(\t)d\t$$
subject to:
$$\hat\t\le \max\{ \t_0, \t_2\}.$$
The optimal solution depends on the cost $\l$ of public fund, in particular, whether it is less than a threshold   
$\hat \l:=\inf \left\{\tilde\l\ge 0: S(1+\tilde\l)\ge \tilde\l\frac{F(\t_0)}{f(\t_0)}\right \}$.
	
\begin{theorem}\label{thm:wel_com}  The following characterizes the optimal bailout policy.
 
\begin{description}
	\item[(i)] The uniquely optimal outcome is characterized by a trade rule $Q$ in (\ref{eq:general-Q}) with cutoffs $\hat{\t}=\t_0$ and $\t_2=\max\{\t_0, \t_2^*\}$, where $ \t^*_2 :=\sup\{\t\in [0,1]:(1+\l)S\ge\l \frac{F(\t)}{f(\t)}\}$, with no rents accruing to the highest type.\footnote{Here, the ``unique'' optimality means that any equilibrium outcome $(Q,T)$ where $Q$ differs from $Q^*$ for a positive measure of $\t$'s is strictly welfare dominated by $(Q^*, T^*)$, where $T^*$ is the transfer rule implementing $Q^*$ with zero rents accruing to the highest type.}
	\item [(ii)]  If $\l\ge\hat{\l}$, then the optimal outcome is implemented by no intervention; and if $\l<\hat{\l}$, then it is implemented by a secret bailout or, equivalently, a delayed-stimulation equilibrium under a single purchase offer of $p_g:= \t_2^*-S.$
	
	\item [(iii)] The short-lived stimulation equilibrium is strictly suboptimal.
	
	\item [(iv)] A menu of multiple transparent bailout offers is strictly suboptimal as long as positive measures of firms accept distinct bailout offers in equilibrium.  
\end{description}

\end{theorem}

\begin{proof} See Appendix \ref{app:thm5}. \end{proof}

 Theorem \ref{thm:wel_com} summarizes several broad lessons learned from our general mechanism design analysis.  Parts (i) and (ii) of the theorem suggest that the best possible outcome that the government can achieve with its bailout policy can be achieved by a \emph{single} bailout offer made in total {\it secrecy}.
 A transparent bailout that leads to the delayed-stimulation equilibrium can also achieve the same outcome, but the possible multiplicity of equilibria complicates the problem, as we discuss below.

 First, the optimality of total secrecy means that even if the government can design its information policy with regard to firms' acceptance of the offers without any constraints, it can never do strictly better than keeping the market in the dark about the identities of the bailout recipients. This follows from the fact that the information revelation and separation of types exacerbate adverse selection and market unraveling (recall footnote \ref{fn:inscrutability}).  In light of  Corollary \ref{cor1}, this means that the optimum can be attained by the DSE under transparency, which is important in practice given the difficulty of maintaining secrecy.  In particular, this implies that the SSE is strictly worse, as stated in part (iii).  Note that this conclusion is not {\it a priori} obvious from Corollary \ref{cor1}, since for any DSE associated with $p_g$, an SSE with some suitably chosen $p_g'$ may implement the same total volume of trade.  What part (iii) suggests is that even when such an offer exists, it would require a strictly higher deficit on the part of the government and therefore a strictly  higher social cost. Intuitively, a more severe stigma arising from that equilibrium requires a larger deficit to support the same level of stimulation.  Admittedly, the bailout offer $p_g$ that implements the optimal outcome under a DSE may also admit a less desirable SSE.  To avoid this situation,  the policy maker may need to raise the offer even further to a level for which the latter type of equilibrium is no longer possible, although this may not implement the optimal policy. {This result is also consistent with the Fed's decision to lower the interest rate charged to the banks borrowing from the Discount Window in the early phase of the Great Recession.}  

 The second implication of part (ii) is that it does not pay the policymakers to offer multiple bailout programs. Indeed, part (iv) shows that doing so under full transparency is strictly suboptimal in {\it any} equilibrium in which firms self-select into multiple programs.  This finding does not necessarily contradict  the approach taken by the US government in the wake of {\it Great Recession}.  The plethora of bailout programs rolled out by the government targeted different segments of the financial market and the broader economy, whereas part (iv) concerns the multiplicity of programs targeting the {\it same} segment of the economy. {One example of the latter case is the coexistence of the Discount Window (DW) and the Term Auction Facility (TAF), both of which targeted  depository institutions. Our result suggests that the Fed could have improved the efficacy of the public liquidity provision if it had chosen either DW (at a very low borrowing rate) or TAF alone, but not both. Running both programs could have worsened the overall perception of not only the banks using the DW facility, but also the TAF users and the banks that did not participate in either program. Put differently,} it is never desirable to offer a menu of multiple programs solely for the purpose of eliciting firms' types. Doing so backfires by compounding the incentive constraints. The multiplicity of programs leads to more separation of firm types, which exacerbates adverse selection and market freeze.

\section{Related Literature}	\label{sec:lit}

While the broad theme of this paper is related to an extensive literature on the benefits and costs of government intervention in distressed banks,\footnote{The primary rationale for intervention is to prevent the contagion of bank runs whether it stems from depositor panic (\citet{diamond1983bank}), contractual linkages in bank lending (\citet{allen2000financial}), or aggregate liquidity shortages (\citet{diamond2005liquidity}, \citet{diamond-rajan2012}). The costs of anticipated bailouts due to the time-inconsistency of the policy  are discussed by, among others, \citet{stern2004too}; the optimal response to the time inconsistency problem is studied in \citet{green}, \citet{chari2016}, and \citet{keister}.}  our work is most closely related to \citet{philippon2012optimal} and \citet{tirole2012overcoming}, who focus on  adverse selection in asset markets as a primary reason for government intervention.\footnote{Regarding the optimal form of bailouts, \citet{philippon2012optimal} show that optimal interventions involve the use of debt instruments when adverse selection is the main issue. With additional moral hazard but limits on pledgeable income, \citet{tirole2012overcoming} justifies asset purchases. When there is debt overhang due to lack of capital, \citet{philippon2013efficient} find that optimal interventions take the form of capital injection in exchange for preferred stock and warrants. During the US subprime crisis, the Emergency Economic Stabilization Act (EESA) initially granted the Secretary of the Treasury authority to purchase or insure troubled assets owned by financial institutions.  However, the Capital Purchase Program under the TARP switched to capital injection against preferred stock and warrants.}  As mentioned previously, these studies do not explicitly study the dynamic consequence of receiving a bailout---the focus of the current study. Even though these papers recognize that relatively low types accept bailouts, this does not translate into an adverse effect  on subsequent financing in their models. Our dynamic model captures not only how bailout stigma affects firms' financing behavior but also how the stigma fundamentally alters the role of a bailout.  In particular, our most striking takeaway is the ability of firms to favorably signal their type by ``refusing'' an attractive bailout offer, which has no analogue in these or other antecedent studies.


Banks' reputational concerns   are explicitly considered in \citet{ennis2013over}, \citet{jennifer2014predatory}, \citet{chari2014reputation}, \citet{ennis2019interventions},  and \citet{hu-zhang}. In \citet{ennis2013over}, to meet  their short-term liquidity needs, banks with high-quality assets use interbank lending while those with low-quality assets use the discount window. The resulting discount window stigma is reflected in the subsequent pricing of assets. In \citet{jennifer2014predatory}, financially strong banks use the Federal Reserve's Term Auction Facility  since winning the auction at a premium signals financial strength, which protects them from predatory trading.  {\citet{hu-zhang} study the banks' choice between the Discount Window and Term Auction Facility (TAF) where funds are immediately available from the Discount Window, albeit entailing a stigma, whereas TAF releases funds with delay. They show how banks self-select themselves into different programs.} The main focus in \citet{chari2014reputation} is on how reputational concerns in  secondary loan markets can result in persistent adverse selection. Since all three studies consider discrete types of banks and there is no government bailout, their results are not directly comparable to ours. \citet{ennis2019interventions} extends \citet{philippon2012optimal} by allowing banks to borrow from the discount window before borrowing from the market, thereby formalizing the discount window stigma alluded to in \citet{philippon2012optimal}. There are two main differences between \citet{ennis2019interventions} and our work.  First, there is only one investment opportunity in \citet{ennis2019interventions}, hence it does not capture the dynamics of a  stimulation effect (e.g., the possibility of stimulation being either short-lived or delayed).  Second,  \citet{ennis2019interventions} characterizes equilibria for exogenously given discount window rates, whereas we study the optimal bailout policy. 


Our paper is also related to  studies on dynamic adverse selection in general (\citet{inderst2002competitive}, \citet{janssen2002dynamic}, \citet{moreno2010decentralized}, \citet{camargo2014trading}, \citet{fuchs2015government}) and those with a specific focus on the role of information in particular (\citet{horner2009public}, \citet{daley2012waiting}, \citet{fuchs2016transparency}, \citet{kim2017information}).\footnote{Others include dynamic extensions of Spence's signaling model with public offers (\citet{noldeke1990signalling}), private offers (\citet{swinkels1999education}), and private offers with additional public information such as grades (\citet{kremer2007dynamic}).}  The key insight from the first set of studies is that dynamic trading generates sorting opportunities, which are not available in the static market setting. However, each seller has only one opportunity to trade in these studies, so signaling is not an issue. The second set of studies relates to different disclosure rules and how they affect dynamic trading. For example, \citet{horner2009public} and \citet{fuchs2016transparency} show that secrecy (private offers) tends to alleviate adverse selection but transparency (public offers) does not. Once again,  each seller has only one trading opportunity in these studies.  Hence, although past rejections can boost reputation, acceptance ends the game. In contrast, in our model, acceptance as well as rejection of the bailout offer work as signaling opportunities.  Although our model also shows that secret bailouts weakly dominate transparent bailouts, none of these papers studies government intervention in response to market failure.

There are several empirical studies that provide evidence on stigma in the financial market.  \citet{peristiani1998growing} provides early evidence on the discount window stigma. \citet{furfine2001reluctance, furfine2003standing} finds similar evidence from the Federal Reserve's Special Lending Facility during the period 1999-2000 and the new discount window facility introduced in 2003.  
\citet{armantier2015discount} utilize the Federal Reserve's Term Auction Facility bid data from the 2007-2008 financial crisis to estimate the cost of stigma and its effect.  \citet{cassola20132007} find evidence of stigma from the bidding data from the European Central Bank's auctions of one-week loans. \citet{krishnamurthy2014sizing} find  that in repo financing, dealer banks with higher shares of agency collateral  repayments, that is, collateral (implicitly) guaranteed by the government, borrowed less from the Primary Dealer Credit Facility (PDCF) despite its attractive  funding terms, which indeed supports the existence of a stigma attached to the users of the PDCF. 

Finally, there is a vast literature discussing the 2008 Great Recession. \citet{congressional2012} estimates the overall cost of the TARP at approximately \$32 billion, the largest part of which stems from assistance to AIG and the automotive industry while capital injections to financial institutions are estimated to have yielded a net gain. For detailed assessments of the various programs in the TARP, see the {\it{\citet{JEP2015}}}. See also \citet{fleming2012federal}, who discusses  how the various emergency liquidity facilities provided by the Federal Reserve during the 2007-2009 crisis were designed to overcome the limitations of traditional policy instruments at the time of crisis. \citet{tong2020did} provide international evidence on the effect of unconventional interventions during 2008-2010.

\section{Conclusion}		\label{sec:conclusion}

The current paper has studied a dynamic model of a government bailout in which firms have a continuing need to fund their projects by selling their assets. Asymmetric information about the quality of assets gives rise to adverse selection and a concommitant market freeze, which provides a rationale for a government bailout, just as in \citet{tirole2012overcoming}. However, in contrast to  \citet{tirole2012overcoming}, markets stigmatize bailout recipients, which jeopardizes their ability to fund subsequent projects. The presence of this bailout stigma and other dynamic incentives yields a much more complex and nuanced portrayal of how bailouts impact the economy than have been recognized in the extant literature.

Our main findings can be summarized as follows. Bailout stigma necessitates the government to pay a premium over the market terms to compensate for the stigma. 
Even so, market rejuvenation can be short-lived and adverse selection can worsen in subsequent market trading, resulting in a market freeze even more severe than in the absence of a bailout.  This requires the government to further increase a premium.  A more attractive bailout premium can be effective in stimulating trade and investment, but its effects are delayed.  Delayed benefits materialize as bailouts provide firms with opportunities to boost reputation by ``rejecting'' bailout offers.  This improves their ability to trade in the market in  subsequent periods. Indeed, there is no welfare loss in this case relative to a secret bailout, {which attains   optimal welfare.} As such, the delayed effects of bailouts can be a blessing in disguise, subject to two important caveats: the government may need to run a large budget deficit to support delayed market stimulation, and delay in and of itself may be undesirable for reasons not modeled in the current paper.

The central lesson from the current work is that, compared with the static setting, the dynamic effects of bailouts are very different due to interplay between the bailout stigma, the market's belief within and across periods, and rich signaling opportunities firms have in the dynamic context. To the best of our knowledge, the insights we develop and the forces we identify are novel and have not been recognized in the previous literature, and should be part of the  framework for conducting future policy debates and empirical studies.


\bibliographystyle{aer}
\bibliography{stigma}

\appendix
\renewcommand{\thesection}{A.\arabic{section}}
\renewcommand{\theequation}{A\arabic{equation}}

\section*{Appendix: Proofs}


\section{Proof of Theorem \ref{thm:tirole}} \label{app:thm1}

We have already established (i) in the paragraphs forth to Theorem \ref{thm:tirole}. To prove (ii), suppose to the contrary  $\bar\t_m\le \bar\t_g$.   Then, by (\ref{eq:feas}), $\bar\t_m\le  \E[\t|\t\le p_g+S]$.  Since $\bar\t_m=p_g$, we have $p_g\le  \E[\t|\t\le p_g+S]$, or $p_g+S\le  \E[\t|\t\le p_g+S]+S$. By the definition of $\t_0$, 	$p_g+S\le \t_0 = p_0+S$, which contradicts $p_g>p_0$.  The remaining characterizations follow from the observations preceding the theorem.  	  $\square$

\section{ Proof of Theorem \ref{thm:nodelay}} \label{app:thm3}
Fix an SSE. Before proceeding with the proof, we first make several  preliminary observations. 

To begin, consider types $\t \le \hat\t$.  Let  $\mu_g$ and $\mu_m$ be the fractions of these firms selling to the government and to the market in $t=1$,  respectively, where $\mu_g+\mu_m=1$.
Let $\bar\t_g$ and $\bar\t_m$ denote respectively the mean values of the assets sold by the two groups. By definition, 
\begin{align}
\label{eq:feas2}
\mu_g \bar\t_g+\mu_m\bar\t_m & =\E[\t|\t\le \hat\t].
\end{align}
 Let $p_m$ denote the price firms receive from selling to the market in $t=1$.  The $t=2$ market price depends on whether or not a firm received the bailout in $t=1$, as these events are observed by buyers in $t=2$. Let $p_2^g$ and  $p_2^m$ denote respectively the prices offered in $t=2$ to those firms that accepted the bailout and to those firms that did not in $t=1$.  The latter group consists of only those that sold to the market in $t=1$ because, in SSE,  there do not exist firms selling only in $t=2$.  Since buyers break even in expectation, we must have $p^g_2=\bar\t_g$.
Further,  $p_2^g=\bar\t_g\ge I$, or else these firms would not sell in $t = 2$, a contradiction to SSE. Similarly, $p_m=p^m_2= \bar\t_m$, since those that sold to the market in $t=1$ are also believed correctly to be of type $\bar\t_m$ on average in both periods. 

 {We first prove that $\mu_g>0$ and $\mu_m>0$. To see this, suppose first $\mu_m = 0$.  Then, all types $\t\le \hat\t_g$ sell to the government in $t=1$.  But then, the holdouts in $t=1$ would be revealed in $t=2$ to have types $\t>\hat\t_g$.  Given $\hat\t_g<1$, a positive measure of them will attract buyers offering price higher than $\hat\t_g$. This leads to $\t_2>\hat\t_g$, a contradiction to the type of equilibrium we are considering.  
	Suppose next $\mu_g=0$ so that all types $\t\le \hat{\t}$ sell to the market.  We cannot have $\hat{\t}=\hat{\t}_g$, since then no firm accepts bailout, so the {\it laissez-faire} outcome would prevail with $\hat{\t}=\hat{\t}_g=\t_0=p_0+S$.  But then some types $\t\in [\t_0,p_g+S]$ would deviate and sell to the government at $p_g>p_0$ (and say withdraw from the $t=2$ market).  Hence, $\hat{\t}<\hat{\t}_g$.  But then, bailout recipients are revealed to have types $\t\ge \hat{\t}$, so they attract offers higher than $\hat\t$ in $t=2$.  This contradicts the definition of $\hat{\t}$ in Lemma \ref{lem:char}.}

Further, those firms selling in both periods must be indifferent between accepting the bailout and selling to the market in $t = 1$: 
\begin{align}
\label{eq:arb}
p_g + p^g_2 + 2S &= p_m+p_2^m+2S\, \Longleftrightarrow  \,
p_g+ \bar\t_g=2 \bar \t_m.
\end{align}

Next, suppose further that $\hat\t<\hat\t_g$.  Then,  the cutoff type $\hat{\t}$ must be indifferent between selling to the market in both periods and accepting the bailout in $t=1$ and not selling in $t = 2$:
\begin{align} 
\label{eq:indiff}
2 \bar\t_m +2S = \hat\t+ p_g + S.
\end{align}

Lastly,  the type  $\hat\t_g$ must be indifferent between accepting the bailout in $t = 1$ and not selling in $t = 2$ \emph{and} not selling in either period. 
Hence,  
\begin{align} \label{eq:holdout}
\hat\t_g = p_g + S,
\end{align}
which by assumption is less than one. 

These necessary conditions  (\ref{eq:feas2})--(\ref{eq:holdout}) restrict the equilibrium variables $(\mu_g, \mu_m, \hat \t, \hat\t_g, \bar\t_g, \bar\t_m)$ given $p_g$.  We are now ready to prove the statements of Theorem \ref{thm:nodelay}. 


\paragraph{\bf Proof of Part (i):  $\bar\t_g < \bar\t_m < p_g$.}   We first establish the following claim. 

\smallskip 
\begin{claim} \label{claim1}
$\bar\t_g<\bar\t_m$.  
\end{claim}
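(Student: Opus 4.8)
The plan is to prove Claim~\ref{claim1} by contradiction: assume $\bar\theta_g\ge\bar\theta_m$ and extract a contradiction purely from the necessary equilibrium conditions already recorded for a short-lived stimulation equilibrium---the feasibility identity (\ref{eq:feas2}), no-arbitrage (\ref{eq:arb}), the marginal-type indifference (\ref{eq:indiff}), and the holdout condition (\ref{eq:holdout})---together with the standing hypotheses $p_0<p_g<1-S$, the cutoff ordering of Lemma~\ref{lem:char}, and the Akerlof crossing property that defines $\theta_0$ (the average-benefit curve $\E[\theta\mid\theta\le x]$ crosses the supply curve $x-S$ from above, exactly once, at $\theta_0=p_0+S$). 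The point is that one should \emph{not} try to solve for the equilibrium, since there is a continuum of short-lived stimulation equilibria; the argument must run for every one of them.

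The first step is to feed $\bar\theta_g\ge\bar\theta_m$ into (\ref{eq:arb}): from $2\bar\theta_m=p_g+\bar\theta_g\ge p_g+\bar\theta_m$ one gets $\bar\theta_m\ge p_g$. The second step uses (\ref{eq:feas2}): since $\E[\theta\mid\theta\le\hat\theta]$ is the $(\mu_g,\mu_m)$-weighted average of $\bar\theta_g$ and $\bar\theta_m$ with $\bar\theta_g\ge\bar\theta_m$, one has $\E[\theta\mid\theta\le\hat\theta]\ge\bar\theta_m\ge p_g$. Now $p_g<1-S$ makes (\ref{eq:holdout}) read $\hat\theta_g=p_g+S$, and Lemma~\ref{lem:char} gives $\hat\theta\le\hat\theta_g=p_g+S$; hence $\E[\theta\mid\theta\le\hat\theta]+S\ge p_g+S\ge\hat\theta$, i.e. the average-benefit curve lies weakly above the supply curve at $\hat\theta$. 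Because that curve crosses from above only at $\theta_0$, this forces $\hat\theta\le\theta_0=p_0+S$, and since $p_g>p_0$ this yields $\hat\theta\le\theta_0<p_g+S=\hat\theta_g$---so we are in the regime $\hat\theta<\hat\theta_g$, where (\ref{eq:indiff}) is available. Solving (\ref{eq:indiff}), $2\bar\theta_m+2S=\hat\theta+p_g+S$, for $\hat\theta$ and using $\bar\theta_m\ge p_g$ gives $\hat\theta=2\bar\theta_m+S-p_g\ge p_g+S=\hat\theta_g$, contradicting $\hat\theta<\hat\theta_g$. This closes the argument.

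The step I expect to be the crux---rather than a routine manipulation---is the passage from ``$\bar\theta_g\ge\bar\theta_m$'' to ``$\hat\theta<\hat\theta_g$''. Conditions (\ref{eq:feas2})--(\ref{eq:holdout}) do not by themselves determine $\hat\theta$, so one cannot simply compute it; the idea is that $\bar\theta_g\ge\bar\theta_m$ pushes the \emph{average quality} of the $t=1$ sellers above $p_g>p_0$, and then the Akerlof crossing property, used in reverse, pins down $\hat\theta\le\theta_0$. A convenient by-product of routing the argument this way is that it avoids a case split: a priori a short-lived stimulation equilibrium could have $\hat\theta=\hat\theta_g$ (so that (\ref{eq:indiff}) would not hold), but the crossing-property step shows this cannot occur under the maintained hypothesis, so no separate case is needed.
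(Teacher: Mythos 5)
Your argument in the regime $\hat\t<\hat\t_g$ is sound and is essentially the paper's own contradiction, reorganized: $\bar\t_g\ge\bar\t_m$ plus (\ref{eq:arb}) gives $\bar\t_m\ge p_g$, feasibility (\ref{eq:feas2}) then gives $\E[\t\mid\t\le\hat\t]\ge p_g$, and the single crossing that defines $\t_0$ delivers the contradiction with $p_g>p_0$. The gap is in the step you advertise as a by-product: dispensing with the case $\hat\t=\hat\t_g$. You rule it out by invoking (\ref{eq:holdout}) to get $\hat\t\le\hat\t_g=p_g+S$, but (\ref{eq:holdout}) is exactly as conditional as (\ref{eq:indiff}): it comes from the indifference of type $\hat\t_g$ between ``accept the bailout and boycott the $t=2$ market'' and ``never sell,'' which presupposes that the boycott segment $(\hat\t,\hat\t_g]$ is nonempty, i.e., $\hat\t<\hat\t_g$. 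When $\hat\t=\hat\t_g$, type $\hat\t_g$'s equilibrium action is to sell in both periods, and the only boundary conditions available are $\hat\t_g\ge p_g+S$ (never-sellers prefer $2\t$ to $\t+p_g+S$) and $\hat\t=(\bar\t_m+S)\wedge 1$ (indifference between selling in both periods and never selling). Under your maintained hypothesis $\bar\t_m\ge p_g$ these are perfectly consistent with $\hat\t=\hat\t_g=\bar\t_m+S>p_g+S$, so the key inequality $\hat\t\le p_g+S$ is unjustified precisely in the case you claim to have eliminated; the elimination is circular. This is why the paper's proof splits into the two cases and does not use (\ref{eq:holdout}) when $\hat\t=\hat\t_g$.

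The missing case is easy to close, and the paper shows how: when $\hat\t=\hat\t_g$, use $\hat\t=(\bar\t_m+S)\wedge 1$ together with (\ref{eq:feas2}) and $\bar\t_g\ge\bar\t_m$ to get $\bar\t_m\le\E[\t\mid\t\le\hat\t]\le\E[\t\mid\t\le\bar\t_m+S]$, which by the definition of $\t_0$ in (\ref{eq:theta0}) forces $\bar\t_m\le p_0<p_g$, contradicting $\bar\t_m\ge p_g$. Equivalently, your own crossing-property step goes through in this case once you replace the unjustified bound $\hat\t\le p_g+S$ by the correct boundary condition $\hat\t=\bar\t_m+S$. With that case added, your proof coincides in substance with the paper's two-case argument.
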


{\it Proof:}  Suppose to the contrary that $\bar\t_g\ge\bar\t_m$, which in turn implies $p_g \le \bar\t_m$ from \eqref{eq:arb}. Given Lemma \ref{lem:char}, there are two possible cases to consider: $\hat{\t} < \hat{\t}_g$ or $\hat{\t} = \hat{\t}_g$. 

Suppose first $\hat{\t} < \hat{\t}_g$.  Then, we have $\hat\t_g \le p_g + S$ from \eqref{eq:holdout}. Moreover, we have from (\ref{eq:feas2}) that $\bar\t_m\le \E[\t|\t\le\hat \t]\le \E[\t|\t\le p_g+S]$. Since $p_g>p_0$, we must have $\E[\t|\t\le p_g+S]<p_g$, or else $p_g\le p_0$ (recall definition of $p_0$ from \eqref{eq:theta0} as well as Figure \ref{fig:akerlof}). Hence, $\bar\t_m<p_g$, which, however, contradicts the earlier hypothesis $p_g \le \bar\t_m$. 

Suppose next  $\hat\t = \hat\t_g$. Then, a type-$\hat\t$ firm must be indifferent between selling in both periods and selling in neither period if $\hat\t < 1$ and (weakly) prefer selling in both periods if $\hat\t = 1$. Hence, we must have $2 \hat\t \le 2 \bar\t_m + 2S$, where the inequality  holds strictly only if $\hat\t = 1$.  We thus conclude that $\hat\t = (\bar\t_m + S)\wedge 1$. Moreover, we have from \eqref{eq:feas2} that $\bar\t_m \le \E[ \t | \t \le \hat\t] \le \E[ \t | \t \le \bar\t_m + S]$, which implies $\bar\t_m \le p_0$ (again by the definition of $\t_0$ in  \eqref{eq:theta0}).   Since  $p_g>p_0$, this again contradicts an earlier hypothesis that $p_g\le \bar \t_m$.    We thus conclude that $\bar\t_g<\bar\t_m$.  $\square$.

\vskip 0.3cm

 By (\ref{eq:arb}), Claim 1 in turn implies that  $\bar\t_m < p_g$.  We have thus proven (i).   $\square$
 
 \vskip 0.3cm  
 
 \noindent  {\bf Proof of  Part (ii):   $\hat{\t} < \t_0 < \hat\t_g = p_g + S$.}   We first establish the following two claims: 
 \smallskip 
 

\begin{claim} \label{claim2}
 $\hat\t \neq 1$.  
\end{claim}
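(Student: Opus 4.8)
The plan is a short proof by contradiction that uses the cutoff structure of Lemma~\ref{lem:char}, the no-arbitrage condition \eqref{eq:arb}, part~(i) of the theorem (which we have just established), and the maintained restriction $p_g<1-S$. Suppose, for contradiction, that $\hat\t=1$. Since Lemma~\ref{lem:char} gives $\hat\t\le\hat\t_g\le\t_2$, this forces $\hat\t_g=\t_2=1$ as well, so \emph{every} type sells its asset in both periods; in particular types $\t$ arbitrarily close to $1$ do.

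Next I would exploit the pricing already pinned down in the text preceding the theorem statement, namely $\mu_g,\mu_m>0$, $p_m=p_2^m=\bar\t_m$ and $p_2^g=\bar\t_g$: by \eqref{eq:arb} a firm that sells in both periods earns $2\bar\t_m+2S$ whether it sells to the government or to the market in $t=1$. For such a firm to prefer selling over retaining both units of the asset (worth $2\t$ in total), we need $2\bar\t_m+2S\ge 2\t$; letting $\t\uparrow 1$ gives $\bar\t_m\ge 1-S$. But part~(i) gives $\bar\t_m<p_g$, and $p_g<1-S$, so $\bar\t_m<1-S$, a contradiction. Hence $\hat\t<1$ (and $\hat\t>0$ by hypothesis). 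In the sub-case $\hat\t<\hat\t_g$ the same conclusion follows even more directly from \eqref{eq:holdout}, which gives $\hat\t_g=(p_g+S)\wedge 1=p_g+S<1$ and hence $\hat\t<\hat\t_g<1$.

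I do not anticipate any real difficulty here: once part~(i) is in hand the argument is essentially bookkeeping. The one point worth flagging is conceptual rather than technical---bailout stigma keeps the $t=1$ market clearing strictly below $p_g$ (so $\bar\t_m<p_g<1-S$), whereas trade by types near the top would require a $t=1$ market price of at least $1-S$; thus a short-lived stimulation equilibrium can never be ``full,'' which is precisely what leaves room for the strict inequality $\hat\t<\t_0$ to be established next.
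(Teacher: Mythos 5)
Your proof is correct, but it takes a genuinely different route from the paper's. You rule out $\hat\t=1$ by noting that then \emph{all} types sell in both periods, so individual rationality for types near the top requires $2\bar\t_m+2S\ge 2\t$ for $\t\uparrow 1$, i.e.\ $\bar\t_m\ge 1-S$, which collides with part (i) ($\bar\t_m<p_g$) and the maintained scope restriction $p_g<1-S$. The paper instead deviates the type-$\hat\t=1$ firm to ``accept the bailout and boycott the $t=2$ market'': preventing that deviation forces $1\le\bar\t_g+S$ via \eqref{eq:arb}, and then Claim \ref{claim1} together with \eqref{eq:feas2} gives $1<\E[\t\,|\,\t\le 1]+S$, contradicting the standing non-triviality assumption $\t_0<1$. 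The trade-off is scope versus simplicity: the paper's argument never invokes $p_g<1-S$, so it kills $\hat\t=1$ even for arbitrarily generous bailout offers, resting only on $\t_0<1$; yours is more elementary bookkeeping (a pure participation constraint plus part (i)), but it genuinely needs $p_g<1-S$ and would evaporate in the excluded boundary case $p_g\ge 1-S$, which the paper's proof still covers. Your closing remark invoking \eqref{eq:holdout} is fine as stated, since you confine it to the sub-case $\hat\t<\hat\t_g$, which is exactly the case in which that indifference condition is derived; note, though, that your main argument does not need it, nor does it need the split into sub-cases at all.
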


 {\it Proof:} 
 Suppose to the contrary that $\hat\t = 1$. Then,  $\hat\t = 1 \le \bar\t_g + S$.  Otherwise, we will have $p_g + S + \hat\t > p_g + \bar\t_g + 2S = 2\bar\t_m + 2S$, where the equality is from  (\ref{eq:arb}), so   a type-$\hat\t$ firm would deviate by accepting the bailout but boycotting the $t = 2$ market.  Since $\bar\t_g < \bar\t_m$ (by Claim \ref{claim1}) and $\mu_g \bar\t_g + \mu_m \bar\t_m = \E[ \t | \t \le \hat\t] = \E[\t | \t \le 1]$, we have
 $$1\le \bar\t_g + S<  \mu_g  \bar\t_g + \mu_m \bar\t_m +S \le \E[\t | \t \le 1] + S.$$ But this contradicts $\t_0 < 1$, which we assume throughout.   $\square$
 

\begin{claim} \label{claim3}
  $\hat\t < \hat\t_g$.  
\end{claim}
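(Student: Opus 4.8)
The plan is to prove the claim directly, by producing a simple upper bound on $\hat\t$ and a matching lower bound on $\hat\t_g$ that separates them. I will lean on three facts already in hand: the equilibrium structure of Lemma~\ref{lem:char} (in a short‑lived stimulation equilibrium \emph{all} types $\t\le\hat\t$ sell in both periods), the $t=2$ break‑even prices derived for short‑lived stimulation equilibria ($p^g_2=\bar\t_g$ and $p^m_2=\bar\t_m$), and the ordering $\bar\t_g<\bar\t_m<p_g$ from part~(i) (equivalently, Claim~\ref{claim1} together with \eqref{eq:arb}).

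The upper bound on $\hat\t$ comes from the observation that any firm that sells in \emph{both} periods disposes of its second unit in $t=2$ at a price no larger than $\bar\t_m$: a $t=1$ market seller faces $p^m_2=\bar\t_m$, and a bailout recipient faces $p^g_2=\bar\t_g<\bar\t_m$ (Claim~\ref{claim1}). Hence a type $\t\le\hat\t$, which sells in both periods by Lemma~\ref{lem:char}, can do so only if $\t-S\le\bar\t_m$, i.e.\ $\t\le\bar\t_m+S$; letting $\t\uparrow\hat\t$ gives $\hat\t\le\bar\t_m+S$. For the lower bound on $\hat\t_g$ I use \eqref{eq:holdout}, which gives $\hat\t_g=(p_g+S)\wedge 1$; since we restrict attention to $p_g<1-S$, this is $\hat\t_g=p_g+S$. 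Combining with $\bar\t_m<p_g$ from part~(i) yields
\[
\hat\t\ \le\ \bar\t_m+S\ <\ p_g+S\ =\ \hat\t_g ,
\]
so $\hat\t<\hat\t_g$, as claimed.

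The step to watch is the use of the price relations $p^g_2=\bar\t_g$ and $p^m_2=\bar\t_m$ in the degenerate configuration $\hat\t=\hat\t_g$, where the segment $(\hat\t,\hat\t_g]$ of firms that accept the bailout but boycott the $t=2$ market is empty. These relations were established for general short‑lived stimulation equilibria before the analysis specialized to $\hat\t<\hat\t_g$, and they remain valid here: with no delayed‑only sellers, the $t=2$ pool of non‑recipients who trade consists exactly of the $t=1$ market sellers, so Bertrand break‑even again forces $p^m_2=\bar\t_m$ (and likewise $p^g_2=\bar\t_g$ for recipients). An equivalent, more self‑contained route is by contradiction: assuming $\hat\t=\hat\t_g=p_g+S$, the marginal type $\hat\t$ would have to sell in $t=2$ at a price (either $\bar\t_g$ or $\bar\t_m$) strictly below $\hat\t-S=p_g$ and so would strictly prefer to retain its asset, contradicting Lemma~\ref{lem:char}. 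Either framing deliberately avoids invoking the indifference condition \eqref{eq:indiff}, which is available only when $\hat\t<\hat\t_g$ and hence cannot be used here.
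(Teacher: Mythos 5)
Your two ingredients $\hat\t \le \bar\t_m + S$ and $\bar\t_m < p_g$ are sound and, as you argue, valid in any short-lived stimulation equilibrium including the degenerate configuration $\hat\t=\hat\t_g$ (your discussion of why $p^g_2=\bar\t_g$ and $p^m_2=\bar\t_m$ survive there is correct). The gap is at the other end of your chain: you invoke \eqref{eq:holdout}, $\hat\t_g=(p_g+S)\wedge 1$, as an unconditional fact, but that condition is obtained from the indifference of type $\hat\t_g$ between ``accept the bailout in $t=1$ and withdraw in $t=2$'' and ``never sell,'' which presupposes that the segment $(\hat\t,\hat\t_g]$ of firms taking that action is nonempty---i.e., it presupposes exactly the claim being proved. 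In the configuration $\hat\t=\hat\t_g$ that must be ruled out, type $\hat\t_g$'s equilibrium action is to sell in both periods, the boundary indifference instead gives $\hat\t=(\bar\t_m+S)\wedge 1$ (this is where Claim \ref{claim2} enters in the paper), and the equality $\hat\t_g=p_g+S$ cannot be asserted. Your ``more self-contained'' contradiction variant inherits the same problem, since it starts from $\hat\t=\hat\t_g=p_g+S$. So, as written, the argument is circular at this step, even though you correctly flagged the analogous conditionality of \eqref{eq:indiff}.

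The repair is short and essentially reproduces the paper's proof. All you need is the inequality $\hat\t_g\ge p_g+S$, which holds in either configuration: any type $\t>\hat\t_g$ sells in neither period and earns $2\t$, and must not gain by deviating to accept the government offer (which yields at least $p_g+S+\t$), so $\t\ge p_g+S$ for all such $\t$, hence $\hat\t_g\ge p_g+S$. With this, your chain $\hat\t\le\bar\t_m+S<p_g+S\le\hat\t_g$ delivers the claim. The paper packages the same content as a contradiction: if $\hat\t=\hat\t_g$, then $\hat\t=\bar\t_m+S$ and $\bar\t_m<p_g$ (Claim \ref{claim1} with \eqref{eq:arb}), so the nonempty set of types $\t\in(\hat\t_g,p_g+S)$ would strictly prefer accepting the bailout and withdrawing in $t=2$ to never selling---which is precisely the deviation argument you need to supply in place of citing \eqref{eq:holdout}.
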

 
 {\it Proof:} 
 Suppose to the contrary that $\hat\t = \hat\t_g$.  By Claim \ref{claim2}, we  have $\hat\t = (\bar\t_m + S) \wedge 1$ and $p_g > \bar\t_m$. In equilibrium, each type $\t > \hat\t$ never sells in either period and obtains payoff $2 \t$. Since $p_g > \bar\t_m$, however, types $\t \in (\hat\t_g, p_g + S)$ will have a strictly higher payoff than $2\t$ by selling to the government in $t = 1$, a contradiction.  $\square$
 
 \medskip
 We are now ready to prove (ii). We first show that $\hat\t_g > \t_0$. Since $p_g > p_0$, it is straightforward from \eqref{eq:holdout} that $\hat\t_g = (p_g + S) \wedge 1 > \t_0$. We next prove $\hat{\t}<\t_0$. Since $\hat\t < \hat\t_g$ by Claim 3, we have
	$$\hat{\t}=\bar\t_g+S< \E[\t|\t\le \hat{\t}]+S,$$
	where the equality follows from (\ref{eq:arb}) and (\ref{eq:indiff}), and the strict inequality follows from $\bar\t_g<\bar\t_m$ and (\ref{eq:feas2}).  The definition of $\t_0$ then implies $\hat{\t}<\t_0$.    $\square$
	
\medskip 
 \noindent 	
 {\bf Proof of  Part (iii):} {\it an SSE exists only if $p_g < 2\t_0 - \E[ \t | \t \le \t_0]$.}  To prove this, observe from \eqref{eq:arb} that
	$$p_g = 2\bar\t_m - \bar\t_g.$$
	Fixing $\hat\t$, the RHS is maximized  when, for some threshold $\tilde \t\in [0, \hat{\t}]$, all types $\t>\tilde \t$ sell to the market and all types $\t<\tilde \t$ sell to the government so that  $\bar\t_m = \E[ \t | \t \in (\tilde\t, \hat\t]]$ and $\bar\t_g = \E[ \t | \t \le \tilde\t]$.  Hence, 
	\begin{align*}
	p_g  &= 2\bar\t_m - \bar\t_g \\
		&\le \max_{\tilde\t \in [0, \hat\t], \hat\t \in [0, \t_0)} 2\E[ \t | \t \in (\tilde\t, \hat\t]] - \E[ \t | \t \le \tilde\t] 	\\
		&< \max_{\tilde\t \in [0, \t_0]}  2\E[ \t | \t \in (\tilde\t, \t_0]] - \E[ \t | \t \le \tilde\t]  \\
		&= 2\t_0 - \E[\t | \t \le \t_0],
	\end{align*}
	where the strict inequality follows from $\hat\t < \t_0$ and $\E[ \t | \t \in (a, b)]$ is increasing in $b$ for all $0 \le a \le b \le 1$, and the last equality follows from the regularity condition that $2\E[ \t | \t \in (a, b)] - \E[ \t | \t \le a]$ is increasing in $a$ for all $0 \le a \le b \le 1$. $\square$

\section{Proof of Theorem \ref{thm:delay}}  \label{app:thm4}

The existence result and the required condition for $p_g$' are established in  Proposition \ref{Aprop:DS_high} in Online Appendix.  Here, we focus on the characterization of equilibria. As before, let $\mu_g$ and $\mu_m$ respectively denote the fractions of types $\t\le\hat\t$ that sell to the government and to the market in $t=1$, and let $\bar\t_g$ and $\bar\t_m$ denote their average values. Obviously, (\ref{eq:feas2}) must continue to hold. Let $p_m$ be the market price for the asset in $t=1$, and let $p_2^g$ and $p_2^m$ respectively denote the $t = 2$ prices for those that sold to the government and those that did not. Note that $p_2^m$ applies to those that sold to the market in $t=1$ and to those that held out, since $t=2$ cannot distinguish them.

We wish to prove that $\hat\t=\t_0$, $\t_2= p_g+S$, and $\mu_g > 0$. There are two possible cases: $\hat{\t}_g>\hat{\t}$ and $\hat{\t}_g = \hat{\t}$, and we treat them separately. (Recall by definition  $\hat{\t}_g\ge\hat{\t}$.)

\subsection{The case of $\hat{\t}_g>\hat{\t}$.}    

	In this case, firms with $\t \in (\hat\t, \hat\t_g]$ sell to the government in $t = 1$. Obviously, $\mu_g \ge \int^{\hat\t_g}_{\hat\t} f(\t) d\t > 0$. We only need to prove $\hat\t = \t_0$ and $\t_2 = p_g + S$. 
 	
	We first show $\t_2 = p_g + S$. Observe that a type-$\hat{\t}_g$ firm must be indifferent between selling only in $t=1$ to the government at $p_g$ and selling only in $t=2$ at price $p_2^m$.  Hence, we must have $p_2^m=p_g$. Since a type-$\t_2$ firm must be indifferent between selling only in $t = 2$ at price $p^m_2$ and not selling in any period, we must have $\t_2 = p^m_2 + S = p_g + S$.

	We next show $\hat\t = \t_0$. We restrict our focus on the case $\mu_g>0$ and $\mu_m>0$ (The argument for the other case $\mu_m = 0$ is similar, so we omit the proof). Then, these firms sell in both periods, and thus must be indifferent between selling to the government and to the market in $t=1$.  This implies (\ref{eq:arb}), or
$$p_g+ p_2^g=p_m+p_2^m=p_m+p_g \Rightarrow p_m = p_2^g.$$ 
This, together with the zero-profit condition, implies that
\begin{equation}\label{eq:delay-arb}
\bar\t_g=p_2^g=p_m=\bar\t_m.
\end{equation} 
It then  follows from (\ref{eq:feas2}) that 
\begin{equation}\label{eq:delay-feas}
\bar\t_g =\bar\t_m=\E[\t|\t\le\hat\t].
\end{equation}
	Next, a type-$\hat{\t}$ firm  must be indifferent between selling in both periods and selling only in $t=1$ to the government at price $p_g$:
\begin{equation}\label{eq:delay-indiff}
p_m+p_g+2S= p_g+S+\hat\t \, \Longleftrightarrow \, p_m+S=\hat \t,
\end{equation} 
 which, together with (\ref{eq:delay-arb}) and (\ref{eq:delay-feas}), implies that
 $$ \E[\t|\t\le\hat\t]+S=\hat \t.$$
By definition of $\t_0$, or \eqref{eq:theta0}, we then have $\hat{\t}=\t_0$.  This in turn implies $p_m=p_2^g= \E[\t|\t\le \t_0]=p_0$. 
 
\subsection{\bf The case of $\hat{\t}_g=\hat{\t}$.}

The proof proceeds in several claims.  

  \begin{claim} \label{claim4}
  	$\mu_g > 0$.   
  \end{claim}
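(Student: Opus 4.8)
The plan is to argue by contradiction. Suppose $\mu_g=0$, so that no firm accepts the bailout in the posited equilibrium. The goal is to show that this is incompatible with the structure $\hat\t_g=\hat\t$ we are in \emph{together with} the defining feature $\hat\t_g<\t_2$ of a delayed stimulation equilibrium; concretely, I would derive $\hat\t_g=\hat\t=\t_0=\t_2$, a contradiction. Nothing beyond the equilibrium conditions already in force (Lemma \ref{lem:char}, Bertrand break-even, the marginal-type indifference, and the definition of $\t_0$) is needed; no statement from Theorem \ref{thm:delay} itself may be used here, since Claim \ref{claim4} is a step toward it.

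The first step pins down the second period. Since the only publicly observed $t=1$ event is acceptance of the bailout and, under $\mu_g=0$, no firm accepts it, buyers in $t=2$ have no information on which to discriminate firms (the $t=1$ market sale is private by assumption). Hence in $t=2$ every firm faces the market holding the full prior $F$, so the $t=2$ subgame is the laissez-faire market, which by log-concavity has the unique equilibrium with cutoff $\t_0$ and price $p_0$. In particular the set of firms trading in $t=2$ is $\{\t\le\t_0\}$, which forces $\t_2=\t_0$. The second step pins down $\hat\t$. From $\mu_g=0$ and $\mu_g+\mu_m=1$ we get $\mu_m=1$, so every type $\t\le\hat\t$ sells to the $t=1$ market and Bertrand competition gives the $t=1$ price $p_m=\E[\t\mid\t\le\hat\t]$. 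Because $\t_0<1$ rules out $\hat\t=1$ (if $\hat\t=1$, then all types would also trade in $t=2$, forcing $\t_0=1$) and $\hat\t>0$ by Lemma \ref{lem:char}, the marginal type $\hat\t$ is interior; in the subcase $\hat\t_g=\hat\t$ the types just above $\hat\t$ sell only in $t=2$ while $\hat\t$ itself sells in both periods, so $\hat\t$ is indifferent between those two behaviors. Cancelling the common $t=2$ continuation payoff, this indifference reads $p_m+S=\hat\t$, i.e.\ $\E[\t\mid\t\le\hat\t]+S=\hat\t$, and by the definition of $\t_0$ in \eqref{eq:theta0} and its uniqueness, $\hat\t=\t_0$. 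Combining the two steps, $\hat\t_g=\hat\t=\t_0=\t_2$, contradicting $\hat\t_g<\t_2$. Therefore $\mu_g>0$.

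I expect the only delicate point to be the first step: one must make airtight that a population none of whom accepted the bailout is observationally indistinguishable in $t=2$, so that the continuation is \emph{exactly} the laissez-faire market (hence has the unique cutoff $\t_0$), rather than some alternative $t=2$ continuation sustained by off-path beliefs. Everything else—the Bertrand pricing of the $t=1$ market under $\mu_m=1$, the marginal-type indifference, and the identification $\E[\t\mid\t\le\hat\t]+S=\hat\t \Rightarrow \hat\t=\t_0$—is a routine chain of equilibrium conditions already used elsewhere in the paper.
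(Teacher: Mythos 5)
Your proof is correct, and it shares the paper's key opening observation but closes the contradiction differently. The paper's own argument is a two-liner: with $\mu_g=0$ no action is observed at $t=1$ (market sales being private), so the $t=2$ price to non-recipients must be the laissez-faire price $p_0$; but in a delayed stimulation equilibrium the types in $(\hat\t_g,\t_2]$ sell only in $t=2$, and for them not to deviate to the government's offer one needs $p^m_2\ge p_g>p_0$ --- an immediate contradiction that leans on $p_g>p_0$. You instead push the no-information observation further: you pin down the whole $t=2$ continuation ($\t_2=\t_0$, price $p_0$) and then use the $t=1$ zero-profit condition with $\mu_m=1$ plus the marginal-type indifference $\hat\t=\E[\t\mid\t\le\hat\t]+S$ to conclude $\hat\t=\t_0$, so that $\hat\t_g=\hat\t=\t_0=\t_2$ contradicts the defining inequality $\hat\t_g<\t_2$. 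Both are valid; yours never invokes $p_g>p_0$ but pays for it with the extra $t=1$ market analysis, while the paper's route is shorter because the generosity of the bailout does the work directly. On the point you flag as delicate: it is not really an issue of off-path beliefs. The $t=2$ price relevant on path is pinned by Bayes' rule applied to the equilibrium behavior of non-recipients (which, under $\mu_g=0$, is the whole population) together with Bertrand zero profit, and within the posited cutoff structure the set of $t=2$ sellers is $[0,\t_2]$ with $\t_2<1$, so zero profit and the marginal-type condition force $\t_2-S=\E[\t\mid\t\le\t_2]$, whose unique solution is $\t_0$ by \eqref{eq:theta0} and log-concavity; off-path beliefs about a hypothetical bailout recipient affect only deviation payoffs and are irrelevant to this step.
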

  
 {\it Proof:}    Suppose to the contrary that  $\mu_g = 0$. Then, we must have $p^m_2 = p_0$ since buyers in $t = 2$ do not observe any action taken by firms in $t = 1$. This implies $p_0 = p^m_2 = p_g > p_0$, a contradiction.  $\square$

\begin{claim} \label{claim5}
	  $\t_2 = p_g + S$ { and $p_2^m=p_g$.}
\end{claim}

 {\it Proof:}   Since types $\t\in(\hat{\t}, \t_2)$ must weakly prefer selling only in $t=2$ at price $p_2^m$ to selling only in $t=1$ to the government at price $p_g$. This implies $p_2^m\ge p_g$.  We now prove that $p_2^m=p_g$.  Suppose to the contrary that $p_2^m>p_g$. 

We know from Claim \ref{claim4} that $\mu_g > 0$. Suppose $\mu_m > 0$.   No arbitrage between selling to the government and selling to the market in $t = 1$ means that $p_g + p_2^g = p_m + p_2^m$, so we have
\begin{equation}\label{eq:arg1}
p_2^g = (p_2^m - p_g) + p_m > p_m,
\end{equation}   
where the strict inequality follows from our hypothesis above that  $p_2^m>p_g$.
	Furthermore, since type $\hat{\t}$ must be indifferent between selling in both periods and selling only in $t=2$ at $p_2^m$, we must have 
\begin{equation}\label{eq:arg2}
\hat\t + p_2^m + S = p_m + p_2^m + 2S \implies \hat\t = p_m + S.
\end{equation}
By the zero profit condition, $ p_2^g=\bar\t_g$ and $p_m=\bar\t_m$. Hence, by (\ref{eq:arg1}), we have $\bar\t_g>\bar\t_m$. By (\ref{eq:feas2}), we must have 
\begin{equation}\label{eq:arg3}
\E[\t | \t \le \hat\t] > \bar\t_m= p_m.
\end{equation} 
Then (\ref{eq:arg3}) and (\ref{eq:arg2}) imply 
\begin{equation}\label{eq:arg4}
\E[\t | \t \le \hat\t] +S >  \hat\t.
\end{equation} 
This means that   $\hat \t<\t_0$, by the definition of  $\t_0$.   By (\ref{eq:arg3}), this means that  
$$p_m= \bar\t_m < \E[\t | \t \le \hat\t] <\E[\t | \t \le\t_0]  =p_0,$$
where the last equality follows from  the definition of $p_0$. Suppose a buyer deviates and offers $p' \in (p_m, p_0)$. Since $p' + p_2^m > p_m + p_2^m = p_g + p_2^g$ for any $p' \in (p_m, p_0)$, all types $\t \le \hat\t$ will sell to this deviating buyer. Furthermore, since $p_g + S + \t \le p' + S + p_2^m + S \implies \t \le p' + S$ for any  $p' \in (p_m, p_0)$, types $\t \in (\hat\t, p' + S]$ will sell at the deviation price, too. Since $p' < p_0$, we have $\E[ \t | \t \le p' + S] - p' > 0$, so the deviating buyer will enjoy a strict profit.  We have thus obtained a contradiction to the hypothesis that $p_2^m>p_g$. We also obtain a similar conclusion when $\mu_m=0$ in the main text. We therefore conclude that $p_2^m=p_g$.  Since $\t_2=p_2^m+S$, this in turn implies that  $\t_2 = p_g + S$.   $\square$


 \begin{claim} \label{claim6}
    $\hat\t = \t_0$. 
 \end{claim}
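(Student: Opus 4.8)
The plan is to pin down $\hat\t$ from the indifference condition of the marginal type, treating the subcases $\mu_m>0$ and $\mu_m=0$ separately, and then to recognize the resulting equation as the defining equation \eqref{eq:theta0} of $\t_0$. Throughout I use two facts from the standing assumptions: since $p_g<1-S$ we have $\hat\t=\hat\t_g<1$, and a delayed stimulation equilibrium has $\hat\t_g<\t_2$, so the segment $(\hat\t,\t_2]$ of firms that sell only in $t=2$ is non-degenerate. Hence the marginal type $\hat\t$ is \emph{exactly} indifferent between selling in both periods and selling only in $t=2$ (not merely weakly, as would happen at a boundary cutoff). I also use Claim \ref{claim5}, which gives $p_2^m=p_g$ (equivalently $\t_2=p_g+S$), and the zero-profit conditions $p_2^g=\bar\t_g$, $p_m=\bar\t_m$.

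Suppose first $\mu_m>0$, so both the government and the $t=1$ market are used by types $\t\le\hat\t$. No arbitrage between the two in $t=1$ gives $p_g+p_2^g=p_m+p_2^m=p_m+p_g$, hence $p_2^g=p_m$; combined with the zero-profit conditions this yields $\bar\t_g=\bar\t_m$, and since all bailout recipients here are the $\mu_g$-fraction of types $\t\le\hat\t$ (the interval $(\hat\t,\hat\t_g]$ being empty in this subcase), \eqref{eq:feas2} forces $\bar\t_g=\bar\t_m=\E[\t\,|\,\t\le\hat\t]$. A type-$\hat\t$ firm that holds out in $t=1$ is a non-recipient, so in $t=2$ it is pooled with the $t=1$ market sellers and sells at $p_2^m$; its indifference between selling in both periods (via the market) and selling only in $t=2$ reads $(p_m+S)+(p_2^m+S)=\hat\t+(p_2^m+S)$, i.e.\ $\hat\t=p_m+S=\E[\t\,|\,\t\le\hat\t]+S$. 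By \eqref{eq:theta0}, $\hat\t=\t_0$.

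Suppose instead $\mu_m=0$. Then every type $\t\le\hat\t$ accepts the bailout and sells again in $t=2$, so $p_2^g=\E[\t\,|\,\t\le\hat\t]$ by zero profit. The marginal type $\hat\t$ is indifferent between selling to the government in $t=1$ at $p_g$ and then in $t=2$ at $p_2^g$, versus holding out in $t=1$ and selling only in $t=2$ at $p_2^m=p_g$: $(p_g+S)+(p_2^g+S)=\hat\t+(p_g+S)$, i.e.\ $\hat\t=p_2^g+S=\E[\t\,|\,\t\le\hat\t]+S$, so again $\hat\t=\t_0$ by \eqref{eq:theta0}. In both subcases the one genuinely delicate point — and the place where bailout stigma bites — is identifying the continuation value of the marginal type's deviation to ``hold out in $t=1$'': such a firm is a non-recipient and is therefore priced at $p_2^m$ rather than $p_2^g$ in $t=2$. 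Once this is used correctly (together with Claim \ref{claim5}), the marginal-type indifference collapses to the laissez-faire fixed-point equation, giving $\hat\t=\t_0$.
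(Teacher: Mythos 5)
Your proof is correct and follows essentially the same route as the paper's: the same split into $\mu_m>0$ and $\mu_m=0$, the same use of no-arbitrage in $t=1$, the zero-profit conditions, \eqref{eq:feas2}, and Claim \ref{claim5} ($p_2^m=p_g$), with the marginal type's indifference collapsing to $\hat\t=\E[\t\,|\,\t\le\hat\t]+S$ and hence $\hat\t=\t_0$ by \eqref{eq:theta0}. The only cosmetic slip is attributing $\hat\t=\hat\t_g$ to the assumption $p_g<1-S$; that equality is the case hypothesis under which this claim is stated (the condition $p_g<1-S$ only ensures the cutoffs lie below $1$), but this does not affect the argument.
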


 {\it Proof:}   We know from Claim \ref{claim4} that $\mu_g > 0$.  There are two cases depending on whether   $\mu_m>0$ or $\mu_m=0$.  Consider the former first. 	
 No arbitrage for type $\t \in [0, \hat\t]$ between selling to the government and selling to the market in $t = 1$ implies $p_g + p^g_2 = p_m + p^m_2$, which in turn implies $p^g_2 = p_m$ since $p_g = p^m_2$. By the zero-profit condition, $p^g_2 = \bar\t_g$ and $p_m = \bar\t_m$,  so  by the zero-profit condition $\bar\t_g= \bar\t_m$.  Hence, by \eqref{eq:feas2}, we get  $p_m = \E[ \t | \t \le \hat\t]$. Combining this equality with \eqref{eq:arg2}, we have 
	$$\hat\t =  \E[ \t | \t \le \hat\t] + S.$$
	By definition of $\t_0$, the above equality implies $\hat\t = \t_0$. 
	
Consider next  $\mu_m=0$.  Type $\hat{\t}$ must be indifferent now between selling to the government in $t=1$ followed by selling to the $t=2$ market (with stigma) \emph{and} selling only in $t=2$ market.  Hence, 
 
 \begin{equation}  \label{*}
 p_2^g+p_g+2S=\hat \t +p^2_m+S.
 \end{equation} 
In the proof of Claim \ref{claim5}, we already proved that $p_m^2=p_g$. Hence, (\ref{*}) reduces to
 \begin{equation} \label{**}
p_2^g+ S=\hat \t .
\end{equation} 
 Further,  by the zero profit condition,
  \begin{equation} \label{***}
 p_2^g= \bar\t_m= \E[\t|\t<\hat{\t}],
 \end{equation} 
	where the second equality holds since $\mu_m=0$.  Combining (\ref{**}) and (\ref{***}) gives 
	$$\hat \t= \E[\t|\t<\hat{\t}]+S,$$
which proves that $\hat{\t}=\t_0$. 	  $\square$
 
\section{Proof of Corollary \ref{cor1}} \label{app:cor1}

By Theorem \ref{thm:delay}, the total volume of asset sales is $F(\t_0)+F(p_g+S)$ in any DSE, which equals that under a secret bailout.  It also exceeds the total volume of asset sales in  SSE, $ F(\hat\t)+F(p_g+S)$, since $\hat{\t}<\t_0$.  	  $\square$

\section{Proof of Lemma \ref{lem:wel_fn}} \label{app:lem2}  Suppose the government offers a menu of purchase prices,   $\mathcal P_g\subset [p_0, \overline P]$, for some  arbitrary large real number $\overline P$. Fix any resulting equilibrium whose outcome is  $(Q,T)$.  Let $t_1(\t)$ and $t_2(\t)$, measurable functions of $\t$ with respect to $F(\cdot)$, denote the payment a firm with type $\t$ receives  in $t=1$ and in $t=2$, respectively, in that equilibrium.  Obviously,  $T(\t)=t_1(\t) +t_2(\t)$.   
As presumed by the lemma, assume that the measure of types 
that accept some government bailout offer is positive.

We first prove Part (i). Similar to Lemma \ref{lem:char}, given the single-crossing property, incentive compatility requires that $Q(\t)$ is nondecreasing in $\t$. The ``non-rationing'' assumption then implies that $Q$ must take the form described in (\ref{eq:general-Q}), with two cutoffs $\hat{\t}\le \t_2$. 

 It now remains to prove that $\hat{\t}\le \t_0$, as claimed in Part (ii). 
By definition, the types $\t< \hat \t$ sell in both periods.   Without loss, we consider $\hat \t>0$ (or else we are done).  To begin, define $\overline p:=\sup_{\t<\hat \t} t_1(\t)$, and let $\overline p_2$ be a limit point  of $\{t_2(\t^n)\}$ along a sequence $\{\t^n\}$ such that $t_1(\t^n)\to \overline p$ as $n\to \infty$.\footnote{As will be seen shortly, ``no arbitrage'' will mean that there is a unique limit $\overline p_2$, which will be an infimum of all $t_2(\t)$ among $\t<\t_2$.}  In words, $\overline p_2$ is the price offered in $t=2$ to the firms that chose $\overline p$.
Next, let $p^{**} :=\sup\{t(\t): \exists \t \mbox{ such that } Q(\t)=1\}$ be the highest single sale price; namely, the price any firm receives (in either period) among those who sell only once. This price is finite since all offers made in equilibrium are bounded.  We first conclude:
\begin{align}  \label{eq:p**}
	p^{**}\ge  \overline p.		
\end{align}
This follows since any firm selling only once has an option to sell in $t=1$ via the program that offers the highest price.

By the  ``no arbitrage'' condition, we must have,  for all $\t<\hat\t$,
\begin{align}\label{eq:noarb}
	\overline p+\overline p_2= t_1(\t)+t_2(\t),
\end{align}
which implies 
\begin{align}\label{eq:noarb1}
	\overline p_2\le t_2(\t),
\end{align}
for all $\t<\hat\t$, since $\overline p\ge t_1(\t)$.  Since type $\hat \t$ must weakly prefer selling in both periods to selling only in one period, it follows from (\ref{eq:noarb}) that
\begin{align}\label{eq:thetahat}
	\overline p+	\overline p_2+2S\ge \hat \t +p^{**}+S.
\end{align}

Combining our observations so far, we have
\begin{align}\label{eq:ineq}
	p^{**}+\overline p_2+2S\ge \overline p+\overline p_2+2S\ge \hat \t +p^{**}+S
\end{align}
where the first inequality follows from (\ref{eq:p**}) and the second inequality follows from (\ref{eq:thetahat}).  The inequality (\ref{eq:ineq}) yields:
\begin{align}\label{eq:ineq1}
	\overline p_2 +S\ge \hat \t.
\end{align}

We now have 
\begin{align} \label{eq:ineq2}
	\E[\t|\t\le \hat\t]  = \frac{\int_0^{\hat\t}\t f(\t) d\t}{F(\hat \t)} =\frac{\int_0^{\hat\t}t_2(\t) f(\t) d\t}{F(\hat \t)} \ge\frac{\int_0^{\hat\t}\overline p_2 f(\t) d\t}{F(\hat \t)}= \overline p_2,
\end{align}
where the second equality follows from the zero profit condition that must be satisfied by market offers in $t=2$ and the inequality follows from  (\ref{eq:noarb1}).   It follows from 
(\ref{eq:ineq1}) and  (\ref{eq:ineq2}) that
$$\E[\t|\t\le \hat \t]+S\ge \hat \t,$$
which in turn implies that 
$$\hat \t\le \t_0.$$ 

We now prove Part (ii).  Interpreting the outcome $(Q, T)$ as a direct mechanism, let
$$u(\tilde\t | \t):= T(\tilde\t) + \t (2 - Q(\tilde\t)) + SQ(\tilde\t)$$ denote type-$\t$ firm's payoff when it reports its type as $\tilde\t$ (or acts as if its type is $\tilde \t$ in the equilibrium). Since the mechanism must be incentive compatible for all types, we have $$u(\t | \t) \ge u(\tilde\t | \t),$$ for all $\tilde\t, \t \in [0, 1]$. Let $u(\t) := u(\t | \t)$. Since the participation constraint must be satisfied for all types (in any equilibrium), we also have $u(\t) \ge 2\t$ for all $\t \in [0, 1]$. Let $\overline{u} = u(1)$ be the highest-type ($\t = 1$) firm's payoff in equilibrium. Then, as is standard in mechanism design analysis, one can apply the envelope theorem to obtain 
$$u(\t) = \overline{u} - \int^1_{\t} (2 - Q(s)) ds.$$

The welfare is the sum of the payoffs for firms and buyers minus the deadweight loss from a deficit run by the government. Consequently, the welfare is  written as:
$$\int^1_0 \left[ u(\t) + (\t Q(\t) - T(\t)) \right] f(\t) d\t - \l  \int_{\t \in \T_g} (t_1(\t) - \t) f(\t) d\t.$$ Since buyers in the market must break even in each period, the budget deficit incurred by the government equals the total budget deficit.  That is,
 $$\int_{\t \in \T_g} (t_1(\t) - \t)f(\t) d\t = \int^1_0 (T(\t) - \t Q(\t)) f(\t) d\t.$$
Substituting this expression, the welfare is $$\int^1_0 \left[ u(\t) + (\t Q(\t) - T(\t)) - \l (T(\t) - \t Q(\t)) \right] f(\t) d\t.$$ Substituting 
 $u(\t) = \overline{u} - \int^1_{\t} (2 - Q(s)) ds$ and $u(\t)=T(\t) + \t (2 - Q(\t)) + SQ(\t)$ into the welfare and integrating by parts, we obtain (\ref{eq:wel}).

\section{Proof of Theorem \ref{thm:wel_com}}  \label{app:thm5} Before proceeding, it is useful to establish two preliminary lemmas. 

\begin{lemma} \label{lem:pre1} Fix any equilibrium outcome under a bailout policy,  $(Q,T)$, with cutoffs $(\hat{\t}, \t_2)$ that leaves zero rents for the highest type firm. There exists a (possibly infeasible) outcome $(\tilde Q, \tilde T)$, with $\tilde Q(\t)=2 $ for $\t\le \tilde \t$ and $\tilde Q(\t)=0$ for $\t>\tilde\t$, that entails the same net budget deficit (or surplus) as $(Q,T)$ but entails weakly more trade---that is, $2F(\tilde{\t})\ge F(\hat \t)+F(\t_2)$.
\end{lemma}

\begin{proof} 
Given the continuity of $F$, it suffices to prove that an outcome $(\tilde Q', \tilde{T}')$, with $\tilde Q'(\t)=2 $ for $\t\le \tilde \t'$ and $\tilde Q'(\t)=0$ for $\t>\tilde\t$, where $2F(\tilde{\t})=F(\hat \t)+F(\t_2)$, entails weakly lower budget deficit than $(Q,T)$. This follows from computation of the budget deficit under $(Q,T)$:
	\begin{align*}
     &\int^1_0 \left\{\t Q(\t) - T(\t)\right\} f(\t) d\t	\\
   = &\int^1_0  \left( \frac{F(\t)}{f(\t)}-S\right)Q(\t)f(\t) d\t	\\		
=  &  \int^1_0  \left( \frac{F(\t)}{f(\t)}-S\right)\tilde Q'(\t)f(\t) d\t+  \int^1_0  \left( \frac{F(\t)}{f(\t)}-S\right)(Q(\t)-\tilde Q'(\t))f(\t) d\t	\\
=  &  \int^1_0  \left( \frac{F(\t)}{f(\t)}-S\right)\tilde Q'(\t)f(\t) d\t-
  S   \int^1_0      (Q(\t)-\tilde Q' (\t)) f(\t) d\t  + \int^1_0    \frac{F(\t)}{f(\t)} (Q(\t)-\tilde Q' (\t))  f(\t) d\t	\\
  =   &  \int^1_0  \left( \frac{F(\t)}{f(\t)}-S\right)\tilde Q'(\t)f(\t) d\t +   \int^1_0    \frac{F(\t)}{f(\t)} (Q(\t)-\tilde Q' (\t))  f(\t) d\t	\\
 \ge  &  \int^1_0  \left( \frac{F(\t)}{f(\t)}-S\right)\tilde Q'(\t)f(\t) d\t +   \int^1_0 \frac{F(\tilde\t)}{f(\tilde\t)} (Q(\t)-\tilde Q' (\t))  f(\t) d\t	\\
= &  \int^1_0  \left( \frac{F(\t)}{f(\t)}-S\right)\tilde Q'(\t)f(\t) d\t +   \int^1_0 \frac{F(\tilde\t)}{f(\tilde\t)} (Q(\t)-\tilde Q' (\t))  f(\t) d\t	\\
= &  \int^1_0  \left( \frac{F(\t)}{f(\t)}-S\right)\tilde Q'(\t)f(\t) d\t  	\\
=   &\int^1_0 \left\{\t \tilde Q'(\t) - \tilde T'(\t)\right\} f(\t) d\t.
	\end{align*}
The first and  last equalities follow from the standard envelope argument (e.g., employed in the last part of the proof of Lemma \ref{lem:wel_fn}-(ii)), plus the fact that $\overline u=2$ (i.e., the highest type enjoys no rent under both outcomes).  The second equality follows from adding and subtracting the same integral term.  The third and the second last equalities follow from  $\int^1_0 (\tilde Q'(\t) - Q(\t)) f(\t) d\t = 0$, an implication of the assumption $F(\hat{\t})+F(\t_2)=2F(\tilde{\t})$.  The inequality follows 
	from the fact that $\frac{F(\t)}{f(\t)} $ is strictly increasing in $\t$ (an implication of the log concavity assumed throughout) and that $Q(\t)-\tilde Q' (\t)\ge   0$ for   $\t\ge\tilde{\t}'$ and $  Q(\t)-\tilde Q' (\t) \le  0$ for  $\t\le \tilde{\t}'$.  We have shown that the budget deficit is reduced (at least weakly) under $(\tilde Q',\tilde T')$ while the overall trade volume remains the same. \end{proof}

\begin{lemma} \label{lem:pre} Suppose an equilibrium outcome under a bailout policy,  $(Q,T)$, entails a strictly budget surplus.  Then, it is stictly welfare-dominated by the {\it laissez-faire} outcome under no bailout policy.
\end{lemma}
\begin{proof} By Lemma \ref{lem:pre1}, we know that there is alternative outcome $(\tilde Q,\tilde T)$ which entails the same amount of budget surplus as $(Q,T)$ and entails (weakly) higher welfare. It suffices to show that the latter outcome $(\tilde Q,\tilde T)$ is welfare inferior to the {\it laissez-faire} outcome with common cutoff $\t_0$ across both periods.  To this end, since budget surplus does not contribute to welfare directly, it  suffices to show that the trade volume under  $(\tilde Q,\tilde T)$ is strictly lower than that under  {\it laissez-faire}, or more specifically $\tilde \t<\t_0$.  This follows from the fact that the budget surplus under $(\tilde Q,\tilde T)$ is strictly positive:
\begin{align*}
0< &\int^1_0 \left\{\tilde T(\t)-\t \tilde Q(\t) \right\} f(\t) d\t	\\
= &\int^1_0  \left(S- \frac{F(\t)}{f(\t)}\right)\tilde Q(\t)f(\t) d\t	\\		
= &\int^1_0 S\tilde  Q(\t)f(\t)d\t -  \int^1_0  F(\t)\tilde  Q(\t)d\t 	\\
=  &2F(\tilde{\t})S -2\tilde{\t} F(\tilde{\t}) +2\int^{\tilde{\t}}_0 \t f(\t) d\t 	\\
= &2F(\tilde{\t}) \left\{ S+\E\left[\t|\t\le \tilde{\t}\right] - \tilde{\t}  \right\},   
\end{align*}	
where the first equality follows from  the substitution for $\tilde T$ and the standard envelope argument that has been invoked repeatedly earlier, and the third equality follows from integration by parts.  Given the definition of $\t_0$, the above inequality implies $\tilde{\t}<\t_0$,   Since $2F(\tilde{\t})<2F(\t_0)$, the {\it laissez-faire} under no bailout welfare-dominates $(\tilde Q,\tilde T)$, and hence $(Q,T)$. \end{proof}

We are now in a position to prove the theorem.

\paragraph{\bf Proof of Part (i):}  There are two cases.  Consider first $\lambda>\hat{\l}$. In this case, the claimed optimal outcome $(Q^*,T^*)$ coincides with the {\it laissez-faire} under no bailouts.   In light of Lemma \ref{lem:pre}, it suffices to show that  $(Q^*,T^*)$ welfare dominates any equilibrium outcome $(Q,T)$ that entails nonnegative budget deficit.  By Lemma \ref{lem:pre1}, there exists $(\tilde Q,\tilde T)$ that entails the same budget deficit as  $(Q,T)$, but yields a weakly higher welfare.  Since $(\tilde Q,\tilde T)$ entails a nonnegative budget deficit, by the argument in the proof of Lemma \ref{lem:pre}, we must have $\tilde \t\ge\t_0$.  Use (\ref{eq:wel}) to write the welfare difference between the {\it laissez-faire} and $(\tilde Q,\tilde T)$:
 	\begin{align*}
 &W(Q^*)-W(\tilde Q)\\
 = &\int_0^1  \left((1+\l)S-\l\frac{F(\t)}{f(\t)}\right) [Q^*(\t)-\tilde Q(\t)] f(\t) d\t\\
  	=& - 2 \int_{\t_0}^{\tilde{\t}} \left((1+\l)S-\l\frac{F(\t)}{f(\t)}\right) f(\t) d\t\\
  	\ge & 0,
 \end{align*}
where the inequality follows from the fact that $ (1+\l)S-\l\frac{F(\t)}{f(\t)}<0$ for $\t> \t_0$.  The inequality is strict if $(Q,T)$ does not coincide with the {\it laissez-faire} outcome.  

Consider next  $\lambda\le\hat{\l}$.  We first prove that $(Q^*,T^*)$ welfare-dominates any equilibrium outcome $(Q,T)$, say with cutoffs $\hat{\t}\le \t_2$, that entails nonnegative budget deficit.  This is proven by computing the difference in welfare a la (\ref{eq:wel}):
 	\begin{align*}
&W(Q^*)-W(\tilde Q)\\
= &\int_0^1  \left((1+\l)S-\l\frac{F(\t)}{f(\t)}\right) [Q^*(\t)-\tilde Q(\t)] f(\t) d\t\\
=& \int_0^{\t_0} \left((1+\l)S-\l\frac{F(\t)}{f(\t)}\right)(2-\tilde Q(\t)) f(\t) d\t+     
\int_{\t_0}^{\t_2^*} \left((1+\l)S-\l\frac{F(\t)}{f(\t)}\right)(1-\tilde Q(\t)) f(\t) d\t
\\&\quad
- \int_{\t_2^*}^1 \left((1+\l)S-\l\frac{F(\t)}{f(\t)}\right)\tilde Q(\t) f(\t) d\t  \\
\ge& 0.
\end{align*}
The inequality follows since $(1+\l)S-\l\frac{F(\t)}{f(\t)}\ge 0$ if and only if $\t\le \t_2^*$, since $Q(\t)\le 2$ for all $\t$,  and since  $Q(\t)\le 1$ for all $\t\ge \t_0$, per Lemma \ref{lem:wel_fn}-(i).  
The proof is complete by noting that  the {\it laissez-faire} outcome entails nonnegative deficit and has been shown to be inferior to $(Q^*,T^*)$ and that, by Lemma \ref{lem:pre}, the {\it laissez-faire} outcome in turn welfare-dominates any equilibrium outcome  $(Q,T)$  that entails strict budget surplus.  Note that if $Q$ differs from $Q^*$ for a positive measure of $\t$'s, then the dominance is strict.

\paragraph{\bf Proof of Parts (ii) and (iii):}  Part (ii) follows directly from Theorem \ref{thm:secrecy} and Theorem \ref{thm:delay} (or equivalently, Corollary \ref{cor1}).  Part (iii) follows from Theorem \ref{thm:nodelay}, which shows that the  trade rule $Q$ of the SSE 
has  $\hat\t<\t_0$, so it differs from $Q^*$ for a positive measure of types, and from the fact that $Q^*$ is uniquely optimal.  

\paragraph{\bf Proof of Part  (iv):} 

	Suppose there is a menu of bailout offers, $\mathcal{P}_g\subset [p_0, \overline P]$, for some  arbitrarily large number $\overline P$. Fix any resulting equilibrium with an outcome $(Q, T)$ characterized by cutoffs $0 < \hat\t \le \t_2 \le 1$ such that $Q(\t) = 2$ for all $\t \in [0, \hat\t]$, $Q(\t) = 1$ for all $\t \in (\hat\t, \t_2]$, and $Q(\t) = 0$ for all $\t > \t_2$. Recall that $T(\t)$ can be rewritten as $T(\t) = t_1(\t) + t_2(\t)$, where $t_1(\t)$ and $t_2(\t)$, measurable functions with respect to $F(\t)$, are payments a firm with type $\t$ receives from the asset sale in $t = 1$ and $t = 2$, respectively. Note that
	\begin{align}	\label{eq:no_arb_pf_thm5d}
	t_1(\t) + t_2(\t) = t_1(\t') + t_2(\t') \;\, \mbox{for all}\;\, \t, \t' \in [0, \hat\t]:
	\end{align}
	if not, one type that gets a strictly lower payment must mimic the other type $\t' \le \hat\t$ for a strictly higher payoff.
	
	Next, recall the notations $\overline{p}$ and $\overline{p}_2$ from the proof of Lemma \ref{lem:wel_fn}-(a), where $\overline{p} := \sup_{\t \le \hat \t} t_1(\t)$, and $\overline p_2$ is a limit point of $\{t_2(\t^n)\}_{n \in \mathbb{N}}$ along a sequence $\{\t^n\}_{n \in \mathbb{N}} \subset [0, \hat\t]$ such that $t_1(\t^n)\to \overline p$ as $n \to \infty$. Since $\t^n \in [0, \hat\t]$ for every $n \in \mathbb{N}$, it follows from \eqref{eq:no_arb_pf_thm5d} that $t_1(\t^n) + t_2(\t^n) = t_1(\t) + t_2(\t)$ for every $\t \in [0, \hat\t]$ and $n \in \mathbb{N}$, which implies
	\begin{align}	\label{eq:no_arb_lim_pf_thm5d}
	\overline{p} + \overline{p}_2  = \lim_{n \rightarrow \infty} (t_1(\t^n) + t_2(\t^n)) = t_1(\t) + t_2(\t)  \;\, \mbox{for all}\;\, \t \in [0, \hat\t].
	\end{align}
	Moreover, define $p^{**}$ as the price offer the firms with types $\t \in (\hat\t, \t_2]$ accept in either $t = 1$ or $t = 2$. Note that $T(\t) = T(\t')$ for all $\t, \t' \in (\hat\t, \t_2]$; or else, one type strictly prefers mimicking the other type that receives a strictly higher payment. 
	
	In addition,  let $\T_m$ denote the set of firms that sell the asset to the market in $t = 1$. If $\T_m \neq \emptyset$, we have $t_1(\t) = p_m := \E[ \t | \t \in \T_m]$ for any $\t \in \T_m$ since private buyers get zero expected profit in equilibrium. Note that firms that do not accept any bailout offer in $t = 1$ are pooled together by buyers in $t = 2$, and thus receive the same price offer, denoted by $p^m_2$. 
	
	Lastly, let $\T_g$ denote the set of types that accept the bailout offers in $t = 1$ and  {let  $\mathcal{P}^A_g$ denote the support of $\set{t_1(\t) : \t \in \T_g}$.\footnote{{Formally, $\mathcal{P}^A_g:=\{p_g\in \mathcal{P}_g:\Pr\{\t\in\T_g:  t_1(\t)\in (p_g-\epsilon, p_g+\epsilon)\}>0, \forall \epsilon>0\}$. More intuitively, for each  $p_g\in \mathcal{P}^A_g$, there must be a positive measure of types in $\T_g$ that choose $p_g$ or bailout offers in the arbitrarily close neighborhood of $p_g$.}} By the hypothesis of the statement, {\it (i)} the measure of $\T_g$ is positive; and {\it (ii)} $\mathcal{P}^A_g$ is not a singleton.} The proof proceeds in several steps.		\\
	
	
\noindent \emph{Step 1.} $\T_m \subset [0, \hat\t]$.	

{Suppose to the contrary that there is a positive measure of firm types $\t \in (\hat\t, \t_2] \cap \T_m$ that sell at $p_m$ in $t = 1$ but refuses to sell in $t = 2$. The fact that $\t\in \T_m$ means that 
	\begin{align} \label{eq:step1}
	p_m + S \ge \t,
	\end{align}
	since type $\t$ must at least weakly prefer selling to the market in $t=1$ (from which it will   earn $p_m + S + \max\set{\t, p^m_2 + S}$) to holding out its asset in $t=1$ (which will yield $\t + \max\set{\t, p^m_2 + S}$).\footnote{Recall that all firms that do not accept any bailout offer in $t = 1$ receive the same price offer $p^m_2$ in $t = 1$.}  Meanwhile, the fact that $\t \in (\hat\t, \t_2]$  implies that
	\begin{align} \label{eq:step1'}
 \t > p^m_2 + S,
	\end{align}
	since the type $\t \in (\hat\t, \t_2]$ must strictly prefer not to sell at price $p^m_2$ in $t = 2$. Combining  (\ref{eq:step1}) and  (\ref{eq:step1'}), we conclude that
	\begin{align} \label{eq:step1''}
    p_m > p^m_2.
	\end{align}
	Suppose now a buyer in $t = 2$  deviates and offers $p_m$ to the firms that do not sell to the government in $t = 1$. Recall from   (\ref{eq:step1}) that
    $\t \le p_m + S$ for all $\t \in \T_m$. Since $p_m > p^m_2$ by (\ref{eq:step1''}), all types $\t \in \T_m$ would rather sell at the deviation price $p_m$ than $p^m_2$ in $t = 2$. Moreover, $p_m$ breaks even for the deviating buyer since $\E[ \t | \t \in \T_m] = p_m$---the break-even condition for $t=1$ buyers in the initial hypothesized equilibrium. Since $p_m > p^m_2$, the deviating offer $p_m$ will be strictly preferable to a positive measure of firms (relative to their putative equilibrium payoffs) and at least breaks even for the deviating firm. This violates our equilibrium condiction, so we have a contradiction. } 
    $\square$\\

    \noindent \emph{Step 2.}  There exist no bailout offers that are accepted \emph{only} by a positive measure of types in $\T_g \cap (\hat\t,1]$.  Formally, $\tilde{\mathcal{P}}_g=\emptyset$, where $\tilde{\mathcal{P}}_g$ is the support of $\{t_1(\t)\in \mathcal{P}_g:  \T_g \cap (\hat\t,1]\}$.

	Suppose to the contrary that  $\tilde{\mathcal{P}}_g$ is nonempty.    Then, all buyers in $t = 2$ believe that firms that accept  bailout offers $\tilde{\mathcal{P}}_g$ must have types $\t > \hat\t$. Let $\t' \in (\hat\t, \t_2]$ denote the infimum of all types that accept $\tilde{\mathcal{P}}_g$. Suppose a $t = 2$ buyer deviates and offers $p' = \t' - \e$ for a sufficiently small $\e > 0$ to these firms. Then a positive measure of types $\t \in (\t', p' + S]$ that accept $p^j_g \in \tilde{\mathcal{P}}_g$ sell at the deviation price $p'$ in $t = 2$. This implies the average value of these types that sell at $p'$ is bounded below by $\t'$. Since $\t' - p' = \e > 0$, the deviating buyer makes a profit, a contradiction. $\square$	\\

\noindent \emph{Step 3.} $\overline{p}_g = \overline{p}$, where $\overline{p}_g := \sup_{ \t \in \T_g} t_1(\t)$ (and recall $\bar{p} = \sup_{\t \le \hat\t} t_1(\t)$).

	Since every bailout offer accepted in $t = 1$ is accepted by a type-$\t$ firm for some $\t \le \hat\t$ from Step 2, we have $\overline{p}_g \le \overline{p}$. We have $\overline{p}_g = \overline{p}$ trivially, if $\T_m = \emptyset$. Hence, assume $\T_m \neq \emptyset$. 
	To prove $\overline{p}_g = \overline{p}$, suppose to the contrary that $\overline{p}_g < \overline{p}$. Since $p_g \le \overline{p}_g < \overline{p}$ for every $p_g\in \mathcal{P}^A_g$, we must have $p_m = \overline{p}$, which implies $p_m > p_g \ge p_0$. Since $\T_m \subset [0, \hat\t]$ from Step 1, we have $t_1(\t) + t_2(\t) = p_m + p^m_2$ for every $\t \in \T_m$. Since firms with types $\t \le \hat\t$ receive the same payment over two periods by \eqref{eq:no_arb_pf_thm5d}, we must have
	\begin{align}	\label{eq:indiff_m_pf_thm5d}
	p_m + p^m_2 = t_1(\t) + t_2(\t) \;\, \mbox{for all} \;\, \t \le \hat\t.
	\end{align}
	In $t = 2$, all types $\t \in \T_m$ and possibly some types $\t' \in (\hat\t, \t_2]$ sell to the market in $t = 2$ at $p^m_2$. Since $\T_m \subset [0, \hat\t]$ and private buyers get zero expected profit, we must have $p^m_2 \ge \E[ \t | \t \in \T_m] = p_m$.
	
	Moreover, we must have $t_1(\t) \le p^{**}$ for every $\t \le \hat\t$, where $p^{**} = T(\t)$ for all $\t \in (\hat\t, \t_2]$: otherwise, every firm with $\t \in (\hat\t, \t_2]$ would rather sell at $t_1(\t') > p^{**}$ for some $\t' \le \hat\t$ and not sell in $t = 2$. This implies $p_g < p_m = \overline{p} \le p^{**}$ for all $p_g\in \mathcal{P}^A_g$. It then follows that no firm with $\t > \hat\t$ would accept any bailout offer in $t = 1$, which implies $\T_g \subset [0, \hat\t]$.  Since $t_1(\t) \le \overline{p}_g < p_m$ for every $\t \in \T_g$ and $\T_g \subset [0, \hat\t]$, it follows from \eqref{eq:indiff_m_pf_thm5d} that 
	$$t_2(\t) > p^m_2 \ge p_m \;\,\mbox{for all}\;\, \t \in \T_g.$$ 	
	Furthermore, since $\T_m, \T_g \subset [0, \hat\t]$ and every firm with type $\t \le \hat\t$ must sell either to the government or to the market in $t = 1$, we must have 
	$$[0, \hat\t] = \T_g \cup \T_m.$$ 
		
	Since any price that private buyers' offer to the firms must break even, we have $\E[ \t  | \t \in \T_g] = \E[t_2(\t) | \t \in \T_g] > p_m,$ where the strict inequality follows from the observation that $t_2(\t) > p_m$ for every $\t \in \T_g$. Lastly, since $p_m > p_g \ge p_0$ for any bailout offer $p_g\in \mathcal{P}^A_g$, we have $\E[ \t | \t \in \T_m] = p_m > p_0$. Since $[0, \hat\t] = \T_g \cup \T_m$, we have
	$$\E[ \t | \t \le \hat\t] = \E[ \t | \t \in \T_g \cup \T_m] \ge p_m > p_0,$$
	which implies $\hat\t > \t_0$. This is a contradiction since $\hat\t \le \t_0$ by Lemma \ref{lem:wel_fn}. $\square$ \\ 
	
	

\noindent \emph{Step 4.} $\hat\t < \t_0$.

	We prove this for the case $\T_m \neq \emptyset$ first. From Step 1, we have $\T_m \subset [0, \hat\t]$, which implies $t_1(\t) + t_2(\t) = p_m + p^m_2$ for every $\t \le \hat\t$ from \eqref{eq:no_arb_pf_thm5d}. Furthermore, it follows from the definition of $p^{**}$ that $p^m_2 \le p^{**}$: otherwise, any type-$\t$ firm with $\t \in (\hat\t, \t_2]$ strictly prefers selling only in $t = 2$ at price $p^m_2$. Since type-$\hat\t$ firm weakly prefers selling in both periods to selling only in one period at price $p^{**}$, we must have
	$$t_1(\hat\t) + t_2(\hat\t) + 2S = p_m + p^m_2 + 2S \ge \hat\t + p^{**} + S \ge \hat\t + p^m_2 + S,$$
	which implies 
	\begin{align}	\label{eq:pf_thm5d_1}
	p_m + S \ge \hat\t.
	\end{align}
	
	Moreover, we must have $\Pr\{\t \in \T_g \cap [0, \hat\t]\} > 0$. If $\Pr\{\t \in \T_g \cap [0, \hat\t]\} = 0$, firms with types $\t \in \T_g$ almost surely accept the bailout but do not sell in $t = 2$. This implies that a single bailout offer $p_g = p^{**}$ must be accepted with probability 1, which contradicts the hypothesis (ii) that $\mathcal{P}^A_g$ is not a singleton.
	
	Furthermore, since $t_1(\t) \le \overline{p}$ by definition of $\overline{p}$, by \eqref{eq:no_arb_lim_pf_thm5d}, we have
	\begin{align}	\label{eq:t_2_bd_pf_thm5d}
	t_2(\t) \ge \overline{p}_2 \;\, \mbox{for every} \;\, \t \le \hat\t.
	\end{align} 
	Next, recall $\overline{p}_g = \overline{p}$ from Step 3, which implies $\overline{p}_g + \overline{p}_2 = t_1(\t) + t_2(\t)$ for all $\t \in \T_g \cap [0, \hat\t]$. From Step 2 and the assumption that $\mathcal{P}^A_g$ is not a singleton, we know that, for any $p_g \in \mathcal{P}^A_g$, there exists a positive measure of $\t \in \T_g \cap [0, \hat\t]$ such that $t_1(\t) \neq p_g < \overline{p}_g$. Combining this observation with $\Pr\{\t \in \T_g \cap [0, \hat\t]\} > 0$, we have $t_1(\t) < \overline{p}_g$, and therefore, $t_2(\t) > \overline{p}_2$ for a positive measure of $\t \in \T_g \cap [0, \hat\t]$. 
	 
	 In addition, we know $\overline{p}_g \le p^{**}$; otherwise, every firm with $\t \in (\hat\t, \t_2]$ will deviate and accept some bailout offer $p_g > p^{**}$ in $t = 1$. Since type-$\hat\t$ firm weakly prefers selling two units of the assets to selling one unit, we have
	$$t_1(\hat\t) + t_2(\hat\t) + 2S = \overline{p}_g + \overline{p}_2 + 2S \ge \hat\t + p^{**} + S \ge \hat\t + \overline{p}_g + S,$$
	which implies
	\begin{align}	\label{eq:p2_high_pf_thm5d}
	\overline{p}_2 + S \ge \hat\t.	
	\end{align}
	Combining the observations \eqref{eq:pf_thm5d_1} and \eqref{eq:p2_high_pf_thm5d} together, we have
	\begin{align}
	(p_m \wedge \overline{p}_2) + S \ge \hat\t.	\label{eq:pf_thm5d_2}
	\end{align}
		
	Lastly, all types $\t \in [0, \hat\t] \setminus \T_m$ must accept some bailout offers in $t = 1$ and sell to the market in $t = 2$, which implies $[0, \hat\t] \setminus \T_m = \T_g \cap [0, \hat\t]$. Since every private buyer that purchases the asset must get zero expected payoff, we have
	\begin{equation}	\label{eq:pf_thm5d_3}
	\begin{aligned}	
	\E[ \t | \t \le \hat\t] &= \frac{ \int_{\t \in [0, \hat\t] \setminus \T_m} \t dF + \int_{\t \in \T_m} \t dF }{F(\hat\t)} =  \frac{ \int_{\t \in \T_g \cap [0, \hat\t]} t_2(\t) dF + \int_{\t \in \T_m} t_1(\t) dF }{F(\hat\t)} \\
				   &> \frac{ \int_{\t \in \T_g \cap [0, \hat\t]} \overline{p}_2 dF + \int_{\t \in \T_m} p_m dF }{F(\hat\t)} \ge  p_m \wedge \overline{p}_2.
	\end{aligned}
	\end{equation}
	The strict inequality follows from the observations {\it (a)} $t_2(\t) \ge \overline{p}_2$ for every $\t \in \T_g \cap [0, \hat\t]$ as in \eqref{eq:t_2_bd_pf_thm5d}; {\it (b)} $t_2(\t) > \overline{p}_2$ for a positive measure of $\t \in \T_g \cap [0, \hat\t]$; {\it (c)} $t_1(\t) = p_m$ for every $\t \in \T_m$; and {\it (d)} $Pr(\t \in \T_g \cap [0, \hat\t]) > 0$. From \eqref{eq:pf_thm5d_2} and \eqref{eq:pf_thm5d_3}, we have
	$$\E[ \t | \t \le \hat\t] + S > (p_m \wedge \overline{p}_2) + S \ge \hat\t,$$
	which implies $\hat\t < \t_0$. 
	
	Next, consider the case $\T_m = \emptyset$. Since there is no asset trading at the private market in $t = 1$, every type-$\t$ firm with $\t \in [0, \hat\t]$ must sell to the government in $t = 1$. By applying the same logic previously used in the proof for the case $\T_m \neq \emptyset$, one can find that {\it (a)} $t_2(\t) \ge \overline{p}_2$ for every $\t \le \hat\t$, where the inequality strictly holds for a positive measure of $\t \le \hat\t$; and {\it (b)} $\overline{p}_2 + S \ge \hat\t$. Furthermore, it follows from \eqref{eq:no_arb_lim_pf_thm5d} and Step 3 that $\overline{p} + \overline{p}_2 = \overline{p}_g + \overline{p}_2 = t_1(\t) + t_2(\t)$ for every $\t \le \hat\t$. Combining all observations so far, we have 
	$$\E[ \t | \t \le \hat\t] = \E[ t_2(\t) | \t \le \hat\t] > \overline{p}_2,$$
	where the equality follows from the fact that every private buyer gets zero expected payoff. Hence, we have $\E[ \t | \t \le \hat\t] + S > \overline{p}_2 + S \ge \hat\t$, which again implies $\hat\t < \t_0$. $\square$	\\
	
\noindent \emph{Step 5.}  The hypothesized equilibrium is strictly suboptimal.

  Since $\hat\t < \t_0$ from Step 4, this outcome must yield a strictly lower welfare than the optimal trading rule $(\hat\t, \t_2) = (\t_0, \max\set{\t_0, \t^*_2})$ by Theorem \ref{thm:wel_com}-(i). $\square$

\end{document}